\documentclass[a4paper,USenglish,cleveref,autoref,thm-restate,
  numberwithinsect]{lipics-v2021}
\hideLIPIcs
\nolinenumbers

\usepackage{amsmath,amssymb,mathtools}
\usepackage{booktabs}
\usepackage[classfont=bold]{complexity}
\usepackage{float}
\usepackage{tabularx}
\usepackage{tikz}
\usetikzlibrary{arrows.meta,positioning,calc}
\usepackage{needspace}

\newif\ifanon
\title{NEXP-Completeness of Existential Presburger Arithmetic with Divisibility}
\titlerunning{NEXP-Completeness and Exponential Coefficient Growth for EPAD}
\ifanon
\author{Anonymous Author(s)}{Affiliation withheld for review}{}{}{}
\authorrunning{Anonymous Author(s)}
\Copyright{}
\hypersetup{pdfauthor={Anonymous Author(s)}}
\else
\author{Ignacio Barros}
  {University of Antwerp, Belgium}
  {ignacio.barros@uantwerp.be}
  {https://orcid.org/0000-0002-7729-9413}
  {}
\author{Micha{\"e}l Cadilhac}
  {DePaul University, USA}
  {michael@cadilhac.name}
  {https://orcid.org/0000-0001-9828-9129}
  {}
\author{Guillermo A. P{\'e}rez}
  {University of Antwerp -- Flanders Make, Belgium}
  {guillermo.perez@uantwerp.be}
  {https://orcid.org/0000-0002-1200-4952}
  {}
\authorrunning{I. Barros, M. Cadilhac, and G. A. P{\'e}rez}
\Copyright{Ignacio Barros, Micha{\"e}l Cadilhac, and Guillermo A. P{\'e}rez}
\fi
\ccsdesc[500]{Theory of computation~Logic and verification}
\ccsdesc[500]{Theory of computation~Algebraic complexity theory}
\keywords{Presburger arithmetic, divisibility, arithmetic logic circuits}
\category{}
\relatedversion{}
\supplement{}
\ifanon
\funding{}
\acknowledgements{}
\else
\funding{This work was supported by the Belgian FWO-FNRS ``SynthEx''
  (G0AH524N) project and by the FWO project number G0D9323N.}
\acknowledgements{We thank Alessio Mansutti for feedback on the normalization
  presentation and the connection to word equations with length constraints.}
\fi

\newcommand{\Z}{\mathbb Z}
\newcommand{\N}{\mathbb N}
\newcommand{\Npos}{\mathbb N_{>0}}
\newcommand{\lcm}{\operatorname{lcm}}
\newcommand{\val}{\operatorname{val}}
\newcommand{\word}{\operatorname{word}}
\newcommand{\enc}{\operatorname{enc}}
\newcommand{\Div}{\operatorname{Div}}
\newcommand{\Scale}{\operatorname{Scale}}
\renewcommand{\vec}[1]{\boldsymbol{#1}}
\newcommand{\bits}{\{0,1\}}
\newcommand{\polyALC}{\ComplexityFont{polyALC}}
\newcommand{\expALC}{\ComplexityFont{expALC}}
\newcommand{\polyALCstd}{\ComplexityFont{polyALC}^{\mathrm{std}}}
\newcommand{\FEPAD}{\ComplexityFont{FEPAD}}
\providecommand{\FPSPACE}{\ComplexityFont{FPSPACE}}
\providecommand{\FEXP}{\ComplexityFont{FEXP}}
\newcommand{\len}{\operatorname{len}}
\providecommand{\poly}{\operatorname{poly}}
\newcommand{\MergeAbs}{\textnormal{merge-absorptive}}

\theoremstyle{definition}
\newcommand{\exampleqed}{\unskip\nobreak\hfill\(\triangleleft\)}
\newcommand{\proofparagraph}[1]{\par\smallskip\noindent\textbf{#1.}\ }
\BeforeBeginEnvironment{example}{\Needspace{8\baselineskip}}
\newtheorem*{overviewdef}{Definition}

\begin{document}
\pagenumbering{roman}
\pagestyle{empty}
\maketitle

\begin{abstract}
  We prove that satisfiability for existential Presburger arithmetic with
  divisibility (EPAD) is \(\NEXP\)-hard.  Together with the known \(\NEXP\) upper
  bound, this establishes \(\NEXP\)-completeness.  This settles a 50-year old open
  question, improving on Lipshitz's 1981 \NP-hardness lower bound which had been
  conjectured to be tight.  We reduce from the \(\NEXP\)-complete succinct version of Exact-2-in-4-SAT.
  In this problem, a clause is satisfied when exactly two of its four literal occurrences are true, and a polynomial-size circuit, given a clause index in binary, returns the corresponding clause of an exponentially large formula.
  A central difficulty is to
  represent, with a polynomial-size EPAD formula, the exponentially long
  integers arising in the encoding.

  Any requested bit of two constants required by the reduction can be computed
  in polynomial space.  To turn this fact into short arithmetic definitions, we
  introduce fixed-width arithmetic logic circuits (ALCs), whose gates perform addition, multiplication,
  and bit shifts.  Polynomial-time uniform polynomial-size ALC families compute
  exactly the nonnegative-integer functions of polynomial-space transducers
  (\(\FPSPACE\)), and polynomial-size EPAD formulas uniquely define their
  outputs.  Allowing exponentially many gates while retaining polynomial-time local uniformity, under which a gate's type and predecessors are recovered from its binary index, gives exactly the analogous deterministic exponential-time function class \(\FEXP\).
  We also prove that every \(\FEXP\) function
  has a polynomial-size EPAD definition that is functional in its output.

  As further consequences, we obtain \(\NEXP\)-hardness for positive Boolean
  combinations of word equations and Presburger length constraints over a fixed
  two-letter alphabet, already when every variable denotes a nonempty word, and
  that reachability for parametric one-counter automata is \(\NEXP\)-complete.

  We also study the elimination of divisibility constraints by enumerating their possible quotients, when bounded.
  Earlier work identified an \(\NP\) fragment in which the variables can be ordered so that divisibility dependencies always move forward through the order.
  We introduce a different fragment, called \MergeAbs{}, in which bounded-quotient divisibilities can successively merge groups of variables whose ratios are already fixed, until one group remains.
  Merge-absorptivity is a decidable semantic property, and every merge-absorptive system admits complete finite-quotient elimination.
  Satisfiability is nevertheless \(\NEXP\)-hard even on this class.
  Finally, we exhibit a polynomial-size family for which a fixed sequence of quotient replacements has a unique surviving branch containing an exponentially long coefficient; every affine equation defining the same positive solution set also contains one.
\end{abstract}

\clearpage

\phantomsection\label{page:contents}
\tableofcontents
\clearpage
\pagenumbering{arabic}
\pagestyle{headings}

\section{Introduction}\label{sec:introduction}

Existential Presburger arithmetic is \(\NP\)-complete~\cite{Haase2018}.
Existential Presburger arithmetic with divisibility (EPAD) additionally allows atoms \(L\mid M\), where \(L\) and \(M\) are affine terms, that is, integer linear expressions \(c+\sum_i a_ix_i\).
Such an atom holds when \(M=qL\) for some integer \(q\).
Decidability was proved independently by Lipshitz~\cite{Lipshitz1978} and Bel'tyukov~\cite{Beltyukov1980}.
Lechner, Ouaknine, and Worrell proved an \(\NEXP\) upper bound~\cite{LechnerOuaknineWorrell2015}.
Previously, only \(\NP\)-hardness was known~\cite{Lipshitz1978}, and several works discussed the possibility of an \(\NP\) upper bound~\cite{BundalaOuaknine2014,HaaseThesis2012}.

EPAD occurs naturally in verification and string solving.  We mention two
salient occurrences.
A \emph{word equation} equates two patterns built from letters and word variables, and a \emph{length constraint} is a Presburger condition on the lengths of a solution.
Divisibility arises because commuting words are powers of one common word.
A \emph{parametric one-counter automaton} has one counter and transitions that test for zero or add a binary constant, a nonnegative integer parameter, or its negation.
Its reachability problem translates into EPAD and back~\cite{HaaseKreutzerOuaknineWorrell2009}.
Further uses include quantified synthesis problems~\cite{Lechner2015Synthesis,PerezRaha2022}, flat string-solving fragments~\cite{LinMajumdar2021}, parametric timed automata~\cite{BundalaOuaknine2014}, and logics for dynamic memory structures~\cite{BozgaIosif2005}.

Lipshitz-style decision procedures normalize an EPAD formula into finitely many divisibility systems whose local arithmetic can be analyzed.
This gives decidability and, with exponential witness bounds~\cite{LechnerOuaknineWorrell2015}, the known \(\NEXP\) upper bound.
The open question was whether the same route could be pushed down to \(\NP\).
Our main result rules that out; its proof is completed in \cref{sec:nexp-completeness}.

\begin{restatable}[Main theorem]{theorem}{EpadComplete}\label{thm:epad-complete}
EPAD satisfiability is \(\NEXP\)-complete under polynomial-time many-one reductions.
\end{restatable}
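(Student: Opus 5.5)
The upper bound is the \(\NEXP\) procedure of Lechner, Ouaknine, and Worrell~\cite{LechnerOuaknineWorrell2015}, so the work is the lower bound. I will reduce from succinct Exact-2-in-4-SAT, which the abstract names as \(\NEXP\)-complete. An instance is a polynomial-size circuit \(C\) that, on a clause index \(j\in\{0,\dots,2^n-1\}\) in binary, outputs four literals over variables indexed by \(n\)-bit strings. From \(C\) I will build, in polynomial time, an EPAD formula \(\Phi_C\) that is satisfiable iff the exponentially large formula has an assignment making exactly two literals true in every clause.

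\textbf{Encoding.} An assignment is a single integer \(A\) whose base-\(b\) digits, for a fixed small base, are the truth values \(a_i\in\{0,1\}\). The clause conditions become one arithmetic identity. For each clause position \(k\in\{1,\dots,4\}\), I need constants \(P_k\) whose digit at block \(j\) selects the literal \(\ell_{j,k}\) of clause \(j\). The identity should then express that, summed over all four positions, every block-\(j\) digit equals \(2\). Negations can be absorbed by writing \(1-a_i\) through a complementary constant. This is where the abstract's ``two constants'' come in. Each requested bit of these exponentially long constants is computable in polynomial space from \(C\) by evaluating \(C\) on \(j\) and comparing indices. I will therefore invoke the ALC machinery: polynomial-time uniform polynomial-size ALC families compute exactly \(\FPSPACE\), and their outputs are uniquely definable by polynomial-size EPAD formulas. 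This gives formulas \(\psi_k(y)\) with unique solution \(y=P_k\) (or the repunit \(R=\sum_j b^{j}\) and similar).

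\textbf{Expressing the assignment and the check.} The key steps, in order, are as follows.

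\emph{(i)} Constrain \(A\) to have digits in \(\{0,1\}\). I will do this with a divisibility and carry trick: require \(A\) and \(2A\) to be carry-free with respect to the base, which is Presburger-expressible once \(b^{N}\) is available as a defined constant.

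\emph{(ii)} Express the ``selection'' of \(a_{\ell_{j,k}}\) into block \(j\). This is the real crux, because it is a permutation or gather of digits, not a linear map. I would encode it as a product: build, via the \(\FPSPACE\) definitions, a constant \(Q_k\) such that the digit-wise gather corresponds to extracting the coefficients of \(A\cdot Q_k\) at prescribed positions. Then check the extracted positions with divisibility atoms of the form \(b^{m}\mid (AQ_k - \text{stuff})\). Multiplication of a variable by a defined constant is itself expressible in EPAD by the standard Lipshitz/Bel'tyukov device: \(x\mid y\) together with auxiliary divisibilities pins down \(y=xQ\).

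\emph{(iii)} Sum the four selected vectors and assert equality with \(2R\).

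\textbf{Main obstacle.} I expect step (ii) to be hardest. Multiplication by a constant must be realized with divisibility alone, and spurious carries between blocks must be ruled out by padding blocks with enough zero digits. I would try to make everything reduce to the uniqueness statements of the ALC-to-EPAD theorem: let the whole verification be an \(\FPSPACE\)-style computation whose only nondeterministic input is \(A\). The paper's stronger result gives polynomial-size EPAD definitions, functional in the output, for \(\FEXP\) functions. If that statement extends to functions taking the guessed variable as input, the check ``every block equals \(2\)'' follows directly, and correctness reduces to the two directions of the encoding.
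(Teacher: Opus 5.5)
Your upper bound, your source problem, and your plan to get exponentially long constants from \(\polyALC=\FPSPACE\) and functional EPAD definitions all agree with the paper. The gap is the core of the reduction: you need to pack an exponentially long assignment into one integer so that checking it is well formed, and checking every clause, costs only polynomially many atoms. You do not supply this, and the steps you propose for it fail.

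\emph{Step (i).} It is wrong as stated. ``\(A\) and \(2A\) carry-free in base \(b\)'' constrains each of exponentially many digit positions. No Presburger formula of size polynomial in \(\log N\), in \(A\) and \(b^N\), expresses it. For \(b\ge3\), the admissible \(A\) contain no three-term arithmetic progression, since \(x+z=2y\) is then carry-free and forces \(x=y=z\). So this \(2^N\)-element set cannot be assembled from the few periodic pieces a small Presburger definition provides.

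\emph{Steps (ii)--(iii).} Multiplication by a sparse constant \(Q_k=\sum_j b^{s_j}\) does place \(a_{\ell_{j,k}}\) at the \(j\)th target position without collisions, once targets are spaced \(2\ell\) apart. But every shifted copy of \(A\) also deposits all the other digits of \(A\) around that position. These garbage digits depend on the assignment, so ``equality with \(2R\)'' is not an equation you can write. An atom \(b^m\mid AQ_k-(\cdots)\) tests only a contiguous block of low-order digits. Isolating the sparse target digits needs exactly the digit-wise primitive that step (i) lacks.

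\emph{The fallback.} It does not close the gap. The characterization \(\polyALC=\FPSPACE\) concerns polynomially many Boolean inputs. A polynomial-size ALC receiving the exponentially long word \(A\) would have to permute and mask its bits to form packed tables, and the four gate types give no evident way to do this. The paper does prove functional EPAD definitions for exponential-size ALCs with free word inputs (\cref{thm:expalc-fepad}). With it, your sentence \(\exists A\,\bigl(0\le A<2^\ell\wedge\Phi(A,1)\bigr)\) would be correct. However, that theorem is proved through the designated-prefix extension of the one-integer normal form, which is the very construction behind the main theorem. Invoking it relocates the missing argument rather than supplying it.

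\emph{The missing idea.} Give up positional digits.
\begin{itemize}
\item Use signs \(\alpha_i\in\{-1,+1\}\). The clause scores \(s_k(\vec\alpha)\) are then linear, so \(\sum_k s_k4^k=\sum_jc_j\alpha_j\). The gather is absorbed into fixed coefficients \(c_j\), and all clause tests become one congruence modulo \(M=2\cdot4^m\).
\item Following Manders and Adleman, choose weights \(\theta_j\equiv c_j\pmod M\) that are divisible by \(p_h^e\) for every \(h\ne j\) and prime to \(p_j\). Put \(H=\sum_j\theta_j\) and \(K=\prod_jp_j^e\), with \(2H<K\).
\item Then \(X=\sum_j\alpha_j\theta_j\) ranges exactly over the integers with \(|X|\le H\) and \(K\mid H^2-X^2\) (\cref{lem:recover-signs}).
\end{itemize}
Well-formedness of the witness then costs one quotient-sharing product and one divisibility. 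The clause check becomes \(M\mid X\), and only \(H\) and \(K\) need the \(\FPSPACE\)-to-EPAD machinery.
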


We further exhibit a computational limit of normalization itself: on a polynomial-size family, one elimination order produces an equation containing an exponential-bit coefficient, and every affine equation defining the same positive solution set also contains one (\cref{thm:replacement-blowup}).

\subparagraph*{Overview of the lower bound.}
The source problem is Succinct Exact-2-in-4-SAT: a polynomial-size Boolean circuit, on input a clause address, returns its four literals, each as a variable index and a sign, and a clause is satisfied when exactly two of these literal occurrences are true.
We prove its \(\NEXP\)-completeness in \cref{prop:succinct-exact-two-nexp}.

If the succinct formula has \(\ell\) variables, write \(\vec\alpha=(\alpha_0,\ldots,\alpha_{\ell-1})\) for their truth-value signs, with \(+1\) meaning true and \(-1\) meaning false.
For clause \(k\), let \(s_k(\vec\alpha)\) be the signed sum of its four literal occurrences: a positive occurrence of variable \(i\) contributes \(\alpha_i\), and a negative occurrence contributes \(-\alpha_i\).
Exactly two occurrences are true if and only if \(s_k(\vec\alpha)=0\).
Since \(s_k(\vec\alpha)/2\in\{-2,-1,0,1,2\}\), the packed equation \(\mathcal E(\vec\alpha)=\sum_k s_k(\vec\alpha)4^k=0\) holds exactly when every clause score is zero.
Since the succinct formula may contain exponentially many variables, we cannot introduce one EPAD variable for every sign.
A Chinese-remainder construction instead represents all signs by one signed integer witness subject to two divisibility constraints.

Some constants in this construction have exponentially many bits and therefore
cannot be written into the output formula.  We show instead that any requested
bit of the constants is computable in polynomial space from this succinct
formula description, and that \(\PSPACE\)-computable values are EPAD-expressible.  To
show this latter result, we rely on \emph{fixed-width arithmetic logic circuits}
(ALCs), and show that polynomial-time uniform polynomial-size ALCs compute
exactly the class \(\FPSPACE\) of nonnegative-integer functions computable in
polynomial space, and then provide an ALC-to-EPAD translation that gives
polynomial-size EPAD definitions of the ALC outputs.  We note, as an aside, that
allowing exponentially many ALC gates, provided each gate's type and predecessors can be recovered from its binary index in polynomial time, gives
a characterization of the exponential-time function class \(\FEXP\), which is
similarly included in EPAD-expressible functions.

\subparagraph*{Finite-quotient elimination.}
Lipshitz-style procedures~\cite{Lipshitz1978,LechnerOuaknineWorrell2015} replace a divisibility \(L\mid M\) by the finitely many quotient equations \(M=qL\) allowed by the current constraints, producing a search tree of quotient choices.
We call a system \MergeAbs{} when one sequence of such steps, fixed in advance and independent of the quotient choices, pins every variable to a fixed rational multiple of a single unknown, after which every remaining divisibility is a constant test.
This differs from the \(\NP\) fragment of D\'efossez et al.~\cite{DefossezHaaseMansuttiPerez2024}, which instead requires a variable order making divisibility dependencies one-way.
We prove that merge-absorptivity is decidable, while satisfiability is already \(\NEXP\)-hard even when restricted to merge-absorptive systems.
We also give a polynomial-size family with a fixed quotient-replacement schedule whose unique surviving branch requires an exponential-bit coefficient (\cref{thm:replacement-blowup}).

\subparagraph*{Consequences of the lower bound.}
We derive the following consequences of the main theorem, with the automata proof in \cref{sec:nexp-completeness} and the word-equation proof in Appendix~\ref{sec:word-equations}.
For word equations, \emph{nonerasing} means that no variable is assigned the empty word, and a \emph{positive Boolean combination} uses only conjunction and disjunction at the outer level.
\begin{restatable}{corollary}{DownstreamHardness}\label{cor:downstream-hardness}
Nonerasing satisfiability for positive Boolean combinations of word equations and quantifier-free Presburger length constraints over the fixed alphabet \(\{a,b\}\), and reachability for parametric one-counter automata, are \(\NEXP\)-hard.
The latter problem is in \(\NEXP\), and hence \(\NEXP\)-complete.
\end{restatable}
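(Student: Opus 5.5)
The corollary has three parts: \(\NEXP\)-hardness for word equations, \(\NEXP\)-hardness for parametric one-counter automata, and an \(\NEXP\) upper bound for the latter. I would derive all three from \cref{thm:epad-complete} through polynomial-time reductions, treating EPAD hardness as a black box.

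\emph{Parametric one-counter automata.} I would use the correspondence of Haase, Kreutzer, Ouaknine and Worrell~\cite{HaaseKreutzerOuaknineWorrell2009}, which works in both directions.

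For the upper bound, reachability reduces in polynomial time to EPAD satisfiability. Since EPAD is in \(\NEXP\)~\cite{LechnerOuaknineWorrell2015}, reachability is in \(\NEXP\).

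For hardness, I need the converse reduction, from EPAD to reachability. I would first check that it tolerates the instances our lower bound produces. I would put the hard EPAD formulas into a normal form: a conjunction of linear equalities and atoms \(x \mid y\) over fresh nonnegative variables, with disjunctions pushed to a polynomial-size guess. The guess can be removed by nondeterministic branching inside the automaton.

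Each divisibility \(x\mid y\) would be realized by the standard gadget:
\begin{itemize}
\item load \(y\) using the parameter;
\item loop subtracting the parameter \(x\);
\item zero-test at the end.
\end{itemize}
Linear constraints with binary coefficients are checked by counter updates with binary constants and zero tests. The one delicate point is that variables cannot share a single counter simultaneously. The known reduction handles this by representing each variable as a parameter and verifying the constraints sequentially, so I expect it to apply directly to our normal form. Hardness then follows from \cref{thm:epad-complete}.

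\emph{Word equations.} The key step is encoding divisibility, starting from the fact that commuting words are powers of a common primitive word. For each atom \(x\mid y\) with \(x,y\ge1\), I introduce word variables \(X,Y\) with \(|X|=x\) and \(|Y|=y\), and impose \(XY=YX\) together with \(X\in a^{+}\), expressed as \(aX=Xa\). Then \(Y\) commutes with a power of \(a\), so \(Y\in a^{+}\). This alone gives no divisibility, since \(x\) and \(y\) may be arbitrary.

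To get divisibility, I would instead let \(X\) be a primitive block, say \(X = bZ\) with \(Z\in a^{*}\) forced by \(aZ=Za\). The equation \(XY=YX\) then forces \(Y=X^{k}\), hence \(|X|\) divides \(|Y|\). Since \(b\) occurs exactly once in \(X\), \(X\) is primitive. Nonerasing is handled as follows:
\begin{itemize}
\item shift lengths so that \(x,y\ge1\);
\item handle the case \(y=0\), or a zero divisor, by a disjunct;
\item replace \(Z\) by \(aZ'\) where \(Z\) must be nonempty;
\item write signed integers as differences of positive lengths.
\end{itemize}
The integer variables of the EPAD formula become lengths, and linear parts become Presburger length constraints. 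Positive Boolean combinations suffice because negation can be eliminated in the EPAD normal form. Negated divisibilities become \(y = qx + r\) with \(0<r<x\), which is expressible in existential Presburger arithmetic given divisibility of \(y-r\) by \(x\).

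The main obstacle I expect is the negation-free normal form. It requires that the hard EPAD instances can be taken without negated divisibility atoms and over natural-number variables with polynomial blowup. I would verify this against the concrete construction of \cref{sec:nexp-completeness}, which uses only positive divisibilities, rather than relying on general normalization.
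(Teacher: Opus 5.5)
Your outline matches the paper's. You use \cref{thm:epad-complete} as a black box and the two-way translation of Haase et al.\ with the load/subtract-loop/zero-test gadget. You encode length divisibility by commutation with a primitive word containing a single \(b\). Restricting to the hard sentences \(\Xi_{\mathcal I}\) to avoid negated divisibility is legitimate, since the paper notes they are primitive positive. However, one step of your automaton lower bound fails as written.

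\textbf{The gap: binary coefficients in the automaton gadgets.} You claim that linear constraints with binary coefficients are checked by counter updates with binary constants and zero tests. A transition adds or subtracts a parameter exactly once, so a term \(cz\) with \(c\) written in binary needs \(|c|\) edges. A loop cannot count to \(c\), because the only counter is occupied, and binary constants help only with the constant term. The hard instances contain exactly such terms. For example, the coefficients \(2^{2^q}+1\) in the Fermat-ladder exclusion atoms have polynomial bit length but exponential value, so your gadgets would have exponentially many edges.

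Appealing to ``the known reduction'' does not close this, because polynomiality under binary coefficients is precisely what has to be checked. The missing step is the paper's second normalization. Replace each product \(cz\) by a fresh variable defined through \(O(\len(c))\) equations that double and add along the binary expansion of \(c\), such as \(w_{i+1}=w_i+w_i\), adding \(z\) at set bits. Then every variable coefficient lies in \(\{-1,0,1\}\) while constants stay binary, and each gadget is polynomial. Adding the positive part before subtracting keeps the counter nonnegative.

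\textbf{Smaller points in the word-equation part and the upper bound.}
\begin{itemize}
\item Under nonerasing semantics, \(X=bZ\) with \(aZ=Za\) forces \(|Z|\ge1\), hence \(|X|\ge2\). A divisor of value \(1\) is therefore never represented, and substituting \(aZ'\) for \(Z\) does not address this. Either add a disjunct for divisor length \(1\), or use the paper's \(aX=Zb\wedge Za=aZ\), which yields \(X=a^{j-1}b\) with \(j\ge1\) and covers length \(1\).
\item Atoms between signed affine terms must be translated as \(|L|\mid|M|\), which is a quantifier-free Presburger case split.
\item Your rewrite of \(\neg(x\mid y)\) omits the case \(x=0<y\). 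This is moot if you stay with primitive-positive inputs.
\item The upper-bound translation yields one EPAD formula per guessed order of zero tests, so it is a nondeterministic polynomial-time reduction rather than a deterministic one. This still gives membership in \(\NEXP\).
\end{itemize}
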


\noindent
Day and Konefal show that every EPAD-definable relation can be built from length abstractions of word equations~\cite{DayKonefal2025}.
The main theorem therefore gives the word-equation lower bound.
Its nonerasing two-letter specialization is spelled out for completeness in Appendix~\ref{sec:word-equations}.
The parametric one-counter reductions are due to Haase et al.~\cite{HaaseKreutzerOuaknineWorrell2009}.

\Needspace{10\baselineskip}
\subparagraph*{Relation to interactive proof systems.}
Like our source circuit, verifiers of exponentially long probabilistically
checkable proofs (PCPs) for \(\NEXP\) use polynomial-length
addresses~\cite[Proposition~10.8(4)]{BellareGoldreichSudan1998}.
Babai et al.'s multi-prover protocol over finite fields (to establish \(\MIP=\NEXP\))
tests a polynomial at a random field element to check that its coefficients
all vanish~\cite[Section~7.1]{BabaiFortnowLund1991}.
Our fixed weights \(4^k\) instead make one exact congruence equivalent
to \(s_k(\sigma,\vec\alpha)=0\) for every clause \(k\)
(\cref{lem:exact-one-congruence}).

\addtolength{\abovedisplayskip}{-3pt}
\addtolength{\belowdisplayskip}{-3pt}
\addtolength{\abovedisplayshortskip}{-3pt}
\addtolength{\belowdisplayshortskip}{-3pt}
\section{Background and Technical Overview}\label{sec:technical-overview}

This section states the main intermediate results and sketches their proofs; the full arguments follow in the subsequent sections.

We reduce from Succinct Exact-2-in-4-SAT\@.
A succinct instance \(\mathcal I\) consists of an exponentially large Exact-2-in-4 formula described by a polynomial-size Boolean circuit that returns any addressed clause.
The formal definition is in \cref{def:succinct-exact-two}.
Its \(\NEXP\)-completeness is proved in \cref{prop:succinct-exact-two-nexp}.

Given a succinct instance \(\mathcal I\), the reduction constructs in polynomial time a polynomial-size EPAD sentence \(\Xi_{\mathcal I}\) that is satisfiable exactly when the Exact-2-in-4 formula described by \(\mathcal I\) is satisfiable.

Throughout this section, variables range over \(\Z\), and constants and coefficients are binary encoded, so their bit lengths count toward formula size.

Our main technical contribution is the observation that divisibility can encode information into very large numbers.

\subsection{Divisibility as a Compressor}
\label{subsec:overview-divisibility}

The reduction needs short definitions of multipliers that are too long to write.
For every \(t\ge0\), the \emph{Fermat ladder} \(\Delta_t(x,r)\) below has \(O(t\log t)\) divisibility atoms and coefficients of polynomial bit length in \(t\), yet its solutions satisfy exactly
\begin{equation}\label{eqn:ladder-relation}
 r=(2^{2^t}-1)x.
\end{equation}
The forced multiplier has \(2^t\) bits and doubly exponential value in \(t\).
Polynomial-size systems with exponentially long solutions were already known~\cite{LechnerOuaknineWorrell2015}.
Here \(r/x=2^{2^t}-1\), and hence \(r+x=2^{2^t}x\).
This gives a short definition of multiplication by the fixed power \(2^{2^t}\), without writing that power as a coefficient.

The successive ratios used below are Fermat numbers of the form \(2^{2^q}+1\).
For \(t\ge0\), take variables
\begin{equation}\label{eqn:order-constraints}
0<y_0<\cdots<y_t,
\end{equation}
where \(y_0=x\) and \(y_t=r\).
At each stage \(i=t,\ldots,1\), impose the following four affine divisibility constraints:
\begin{equation}\label{eqn:fmat-atoms}
 y_{i-1}\mid y_i,
 \qquad
 y_i-y_{i-1}\mid y_t+y_i,
 \qquad
 y_i-2y_{i-1}\mid y_t,
 \qquad
 y_i+y_{i-1}\mid y_t+3y_i.
\end{equation}
The first three atoms restrict the stage ratio to finitely many Fermat-number candidates, and the exclusion atoms below select the intended one.
The fourth atom does not affect the forced ratio.
Its role is to permit elimination by finitely many quotient equations: once \(y_t/y_i\) is fixed, its quotient ranges over a finite set, and each feasible integer quotient determines \(y_{i-1}/y_i\).

At stage \(i\), put \(h=t-i\).
For \(h\ge1\), let \(m_h\) be the least nonnegative integer satisfying \(2^{m_h+1}/(m_h+2)>h\), and set \(D_h=\min\{m_h,h-1\}\).
For every \(0\le q\le D_h\), add the exclusion atom
\begin{equation}\label{eqn:exclusion}
 y_i-(2^{2^q}+1)y_{i-1}
 \mid y_t+(1-2^{2^q})y_i.
\end{equation}
No exclusion atom is needed when \(h=0\).

The system \(\Delta_t(x,r)\) consists of the order constraints from \cref{eqn:order-constraints}, the four atoms from \cref{eqn:fmat-atoms} at every stage, and the exclusion atoms from \cref{eqn:exclusion}.
Its construction realizes the identity
\[
  \prod_{q=0}^{t-1}\bigl(2^{2^q}+1\bigr)=2^{2^t}-1
\]
and therefore forces the advertised relation from \cref{eqn:ladder-relation}.
At \(t=2\), the two stages leave the quotients \(3\) and \(5\), so \(y_1=5x\) and \(r=y_2=15x=(2^{2^2}-1)x\).

\begin{sloppypar}
\proofparagraph{Proof idea}Correctness is an induction on \(h=t-i\), whose invariant is \(y_t=(2^{2^h}-1)y_i\).
Suppose \(y_t=(B-1)y_i\) with \(B=2^{2^h}\), and write \(y_i=Qy_{i-1}\).
The first three atoms force \(Q=2^{2^a}+1\) for some \(0\le a\le h\), so the stage ratio is a Fermat number.
The written exclusions rule out \(a\le D_h\) directly.
If \(D_h=h-1\), no smaller candidate remains.
Otherwise \(D_h=m_h\), and a surviving \(a<h\) would make every integer in \(a-m_h,\ldots,a\) divide \(h-a<h\).
The least common multiple of any \(d+1\) consecutive positive integers is at least \(2^{d+1}/(d+2)\) (\cref{lem:lcm-interval}); taking \(d=m_h\) makes it greater than \(h\), a contradiction.
Thus only \(O(\log h)\) written exclusions are needed, \(Q=B+1\), and \((B-1)(B+1)=B^2-1\) advances the invariant.
The full argument is in \cref{lem:mersenne-multiplier,rem:mersenne-exclusions}.\qed
\end{sloppypar}

Having obtained divisibility as a compression mechanism, we next need an arithmetic model capable of manipulating the resulting exponentially wide values so as to encode hard combinatorial computations.
This role is played by arithmetic logic circuits.

\subsection{Arithmetic Logic Circuits and Indexed Bits}
\label{subsec:overview-alc}

A circuit can describe a machine word too wide to write by specifying how its bits are produced.
For each input length \(n\), an arithmetic logic circuit (ALC) \(C_n\) has \(n\) Boolean inputs, constant gates \(0\) and \(1\), one distinguished output gate, and a common word-width \(W(n)\).
It computes a function \(C_n:\bits^n\to\{0,\ldots,2^{W(n)}-1\}\), whose output is interpreted as a nonnegative \(W(n)\)-bit integer.
Its gates compute addition and multiplication modulo \(2^{W(n)}\), together with logical left and right shifts by fixed nonnegative offsets.

We use polynomial-time \emph{local} uniformity for both circuit sizes considered below, with the gates of \(C_n\) numbered \(0,\ldots,S(n)-1\) in topological order.
On input \(1^n\), a uniformity algorithm returns \(S(n)\), \(W(n)\), and the output-gate index in binary.
Given also a gate index \(i<S(n)\) in binary, it returns the gate type, predecessor indices, and input index or shift offset when applicable.
Even when \(S(n)\le2^{\poly(n)}\), the index has polynomial length, so this interface gives local access without constructing the whole exponential circuit in polynomial time.
For polynomial-size circuits, it is equivalent to explicit polynomial-time construction.
The class \(\polyALC\) restricts \(S(n)\) to polynomial size, while \(\expALC\) permits \(S(n)\le2^{\poly(n)}\) with the same ALC model and local-uniformity condition.
In both cases \(1\le W(n)\le2^{\poly(n)}\).

We write \(\FPSPACE\) for nonnegative-integer function families on Boolean inputs computed by deterministic polynomial-space transducers with a one-way write-only output tape, and \(\FEXP\) for the analogous deterministic \(2^{\poly(n)}\)-time families.

For a function family \(f=(f_n)_n\), indexed-bit access supplies a predicate \(\mathcal B(\vec{x},i)\) that returns the \(i\)th bit of \(f_n(\vec{x})\).
It is equivalent to the transducer model above: \(f\) belongs to \(\FPSPACE\) if and only if there are a polynomial \(p\) and a predicate \(\mathcal B\in\PSPACE\) such that \(f_n(\vec{x})<2^{2^{p(n)}}\) and, for every binary index \(0\le i<2^{p(n)}\), \(\mathcal B(\vec{x},i)\) is the \(i\)th bit of \(f_n(\vec{x})\).
We use the standard indexed-bit characterization of \(\FPSPACE\)~\cite{Ladner1989}, which we recall and prove for completeness in \cref{lem:indexed-bits}.

For polynomial-size ALC families, polynomial space recovers one output bit by following shifts and scanning lower-order bits and their carries (\cref{prop:alc-to-fpspace}).
The choice of these four gate types is not restrictive: adding standard fixed-width operations such as subtraction, comparisons, division, rotations, and bitwise operations does not change the class of functions computed by polynomial-size ALCs (\cref{prop:alc-standard-operations}).

Polynomial-size families characterize \(\FPSPACE\).
Allowing exponentially many gates under the same local-uniformity convention gives the analogous class \(\FEXP\) (\cref{prop:expalc-fexp}).
Encoding such circuits into EPAD requires an additional layer of compression.

\subsection{Packing Truth Tables and FPSPACE}
\label{subsec:overview-packed-tables}

We therefore first treat the polynomial-size case, where the main ideas are easier to see: we show how ALCs encode polynomial-space computations and then compile those circuits into EPAD.
We sketch the converse inclusion \(\FPSPACE\subseteq\polyALC\) by packing truth tables into integers so that ALC operations evaluate Boolean connectives and quantifiers in parallel.

For Boolean functions \(F,G:\bits^k\to\bits\), one integer can hold an entire truth table in widely separated bit slots.
Packing Boolean data is related to arithmetizations of polynomial space~\cite{BabaiFortnow1991,LundFortnowKarloffNisan1992,Shamir1992}, but our identities are deterministic, with no verifier interaction and no random challenges.

For \(F\colon\bits^k\to\bits\), let \(\enc(\vec a)\) be the integer whose base-five digits are \(a_0,\ldots,a_{k-1}\), from least to most significant.
At scale \(R\), the packed integer
\[
  \langle F\rangle_R
  =\sum_{\vec a\in\bits^k}F(\vec a)2^{R\enc(\vec a)}
\]
is its \emph{packed truth table} at scale \(R\), which we shorten to packed table.

The tuple \(\vec a\) contains Boolean values, \(\enc(\vec a)\) is its base-five address, and \(R\enc(\vec a)\) is the binary position where the table entry is stored.

Multiplying \(\langle F\rangle_R\) and \(\langle G\rangle_R\) generates one term for every pair of assignments.
The operation \(\operatorname{block}_{h,d}(x)\) denotes the \(d\) consecutive bits of \(x\) starting at position \(h\).
For one-variable Boolean functions \(F,G\), let \(\delta\) be a nonnegative integer giving the desired separation between the two retained entries.
Before alignment, the product \(\langle F\rangle_R \langle G\rangle_R\) is \(F(0)G(0)+(F(0)G(1)+F(1)G(0))2^R+F(1)G(1)2^{2R}\).
The first and last terms are the desired diagonal products, with equal input bits.
The two middle products are cross terms.
Choose \(R>\delta+4\).
Then
\[
 \operatorname{block}_{2R,R}\!\left(
   \langle F\rangle_R\langle G\rangle_R(2^{2R}+2^\delta)
 \right)
 =F(0)G(0)+F(1)G(1)2^\delta.
\]
The two diagonal products land at positions \(2R\) and \(2R+\delta\).
The cross terms either remain below position \(2R\) without carrying into the block or begin at position \(3R\).

For output offsets \(\vec\delta=(\delta_0,\ldots,\delta_{k-1})\), define
\[
 \mathcal M_{R,\vec\delta}(F,G)
 =\langle F\rangle_R\langle G\rangle_R
   \prod_{i=0}^{k-1}\bigl(2^{2R5^i}+2^{\delta_i}\bigr).
\]
An expanded term is determined by \(\vec a,\vec b,\vec z\in\bits^k\), where \(\vec a\) and \(\vec b\) select entries from the two packed tables and \(z_i=1\) indicates the choice of \(2^{2R5^i}\) from the \(i\)th alignment factor.
Its base-five digit is \(a_i+b_i+2z_i\in\{0,\ldots,4\}\), so no carries occur between base-five positions, and every digit is two exactly when \(\vec a=\vec b\) and \(z_i=1-a_i\) for every \(i\).
For a desired output scale \(R'\), take \(\delta_i=R'5^i\) and \(R>\sum_i\delta_i+3k+1\).  The block of \(R\) bits at position \(R(5^k-1)/2\) is then the packed table \(\langle F\wedge G\rangle_{R'}\) (\cref{lem:diagonal-extraction,lem:pointwise}).

To evaluate Boolean formulas, maintain a pair of packed tables for each predicate: one for its true assignments and one for its false assignments.
Pointwise product implements conjunction, exchanging the paired tables implements negation, and disjunction follows from conjunction and complement.
Restricting a quantified variable to \(0\) and to \(1\) gives two tables on the remaining variables, and combining them by the same operations implements \(\forall y\) and \(\exists y\) (\cref{lem:restriction}).
The paired-table invariant therefore persists through nested quantifiers.

Fix a function family \(f=(f_n)_n\in\FPSPACE\), and let \(p\) and \(\mathcal B(\vec x,i)\) be the polynomial and indexed-bit predicate from the characterization above.
Let \(\vec a=(a_0,\ldots,a_{p(n)-1})\in\bits^{p(n)}\) be the binary representation, least significant bit first, of an output-bit index \(i<2^{p(n)}\), so that \(i=\sum_{j=0}^{p(n)-1}a_j2^j\).
Thus \(i\) is a bit position in the exponentially long output \(f_n(\vec x)\), whereas \(\enc(\vec a)=\sum_j a_j5^j\) is the base-five address used to store \(\mathcal B(\vec x,i)\) in the packed table.
A uniform reduction from \(\PSPACE\) to quantified Boolean formulas (QBF)~\cite[Chapter~19]{Papadimitriou1994} gives a polynomial-size prenex formula
\[
 \Psi_n(\vec x,\vec a)=Q_0y_0\cdots Q_{q-1}y_{q-1}\,\varphi_n(\vec x,\vec a,\vec y),
 \qquad \vec y=(y_0,\ldots,y_{q-1}),
\]
for some \(q=\poly(n)\), with \(\varphi_n\) quantifier-free.
The QBF is true exactly when \(\mathcal B(\vec x,i)=1\).
The circuit packs \(\vec a,y_0,\ldots,y_{q-1}\), while \(\vec x\) remains its Boolean input.
Over the \(k\) variables still indexing the table, the all-ones table at scale \(R\) is \(\prod_{i<k}(1+2^{R5^i})\), and omitting or shifting one factor gives the false and true tables of a packed variable, each using \(O(k)\) gates.
Choose the common ALC word-width \(W\) large enough for the packed computation.
An external input bit \(x_t\) then needs its complement.
Writing \(+_W\) and \(\ll_W\) for addition and left shift modulo \(2^W\), and \(\gg\) for right shift, the circuit forms it as \(\bar x_t=((x_t+_W1)\ll_W(W-1))\gg(W-1)=1-x_t\).
Evaluating \(\varphi_n\) and eliminating quantified variables from the inside out leaves a table indexed by \(\vec a\) whose entry at address \(\vec a\) is \(\mathcal B(\vec x,i)\).
Those entries are Boolean, so choosing \(\delta_j=2^j\) places each one in output bit position \(i\), producing \(f_n(\vec x)\).
The construction has polynomially many gates, and every scale, offset, and the exponential word width has a polynomial-length binary description.
The full proofs of the two inclusions are given in \cref{prop:alc-to-fpspace,thm:fpspace-to-alc}, establishing the following characterization.

\begin{restatable}[Arithmetic-circuit characterization]
  {theorem}{AlcFpspace}\label{thm:alc-fpspace}
Polynomial-time uniform polynomial-size ALC families compute exactly \(\FPSPACE\): \(\polyALC=\FPSPACE\).
\end{restatable}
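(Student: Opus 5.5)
The theorem asserts two inclusions, and I would prove them separately, following the sketch in \cref{subsec:overview-alc,subsec:overview-packed-tables}. Both directions go through the indexed-bit characterization of \(\FPSPACE\) (\cref{lem:indexed-bits}). A family \(f\) is in \(\FPSPACE\) exactly when its outputs are below \(2^{2^{p(n)}}\) and its \(i\)th output bit is decidable in \(\PSPACE\) from \((\vec x,i)\) with \(i\) given in binary. Since polynomial-size local uniformity coincides with explicit polynomial-time construction, I would treat the circuit as an explicitly given object throughout.

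\emph{Inclusion \(\polyALC\subseteq\FPSPACE\).} Fix a uniform family with \(S(n)=\poly(n)\) and \(W(n)\le2^{\poly(n)}\). It suffices to decide bit \(j\) of the value at gate \(g\) in polynomial space. I would write a recursive procedure \(\mathrm{bit}(g,j)\) and handle each gate type in turn.
\begin{itemize}
\item Constants and inputs are immediate.
\item A shift by \(c\) reduces to \(\mathrm{bit}(g',j\mp c)\), returning \(0\) when the index leaves \([0,W)\).
\item For addition modulo \(2^W\), bit \(j\) is determined by the two operand bits at \(j\) and the carry into \(j\). The carry is computed by a left-to-right scan over positions \(0,\ldots,j-1\) that keeps one carry bit and makes recursive calls.
\item For multiplication, bit \(j\) of \(uv\bmod 2^W\) is determined by the column sums \(\sum_{a+b=m}u_av_b\) for \(m\le j\). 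I would scan \(m=0,\ldots,j\) while keeping a running carry. The carry has \(O(\log j + \log W)=\poly(n)\) bits, so it fits in polynomial space.
\end{itemize}
The recursion depth is at most \(S(n)\), and each frame stores polynomially many bits (indices, counters, carries), so the total space is polynomial. Time may be exponential, which is fine here. This gives a \(\PSPACE\) indexed-bit predicate with \(p(n)\) satisfying \(2^{p(n)}\ge W(n)\), and \cref{lem:indexed-bits} converts it into an \(\FPSPACE\) transducer (\cref{prop:alc-to-fpspace}).

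\emph{Inclusion \(\FPSPACE\subseteq\polyALC\).} Take \(p\) and \(\mathcal B\) from \cref{lem:indexed-bits}, together with the uniform QBF \(\Psi_n(\vec x,\vec a)=Q\vec y\,\varphi_n\) for \(\mathcal B\). The circuit maintains, for every subformula, a pair of packed tables (true and false) over the variables still in scope: \(\vec a\) and the remaining \(\vec y\).
\begin{itemize}
\item Packed variables are built from the all-ones products \(\prod(1+2^{R5^i})\), omitting or shifting one factor.
\item Input literals \(x_t\) are formed as Boolean words \(x_t\) and \(1-x_t\) via the shift trick. Multiplying the all-ones table by these words gives constant tables.
\item Conjunction uses the diagonal-extraction product \(\mathcal M_{R,\vec\delta}\) followed by a block extraction, which costs two shifts (\cref{lem:diagonal-extraction,lem:pointwise}).
\item Negation swaps the pair, and disjunction follows by De Morgan.
\item Each quantifier uses restriction (\cref{lem:restriction}) followed by \(\wedge\) or \(\vee\).
\end{itemize}
Processing from the inside out leaves a table over \(\vec a\). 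A final pointwise product with offsets \(\delta_j=2^j\) moves entry \(\vec a\) to binary position \(i=\sum a_j2^j\), producing \(f_n(\vec x)\). The gate count is polynomial because each connective and quantifier costs \(O(\text{number of variables})\) gates.

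The main obstacle is choosing the scales. Each operation needs its input scale \(R\) to exceed the output offsets by roughly \(\sum_i\delta_i+3k+1\), and the block position \(R(5^k-1)/2\) must stay inside the word. I would fix the scales top-down. Starting from the final offsets \(\delta_j=2^j\), which need an output scale of about \(2^{p}\), each level of the formula receives a scale that is polynomially many bits larger than the level above it. Since there are polynomially many levels, every scale has polynomial bit length. The word width \(W\) is then chosen as a single value exceeding \(2R_{\max}5^{k_{\max}}\), which also has polynomial bit length. With \(W\) this large, none of the modular operations wrap around, so the exact integer identities of \cref{lem:pointwise} hold. Uniformity follows because every gate, shift offset, and scale is computable in polynomial time from \(n\).
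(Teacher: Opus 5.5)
Your proposal is correct and follows the paper's proof essentially step for step. For \(\polyALC\subseteq\FPSPACE\) it uses the bit-level recursion with carry scanning (\cref{prop:alc-to-fpspace}). For the converse it uses the packed-truth-table evaluation of the QBF (\cref{thm:fpspace-to-alc}): paired tables, restriction for quantifiers, top-down scale choices growing by \(O(k)\) bits per level, final offsets \(\delta_j=2^j\), and one large word width. The only step you leave implicit is how to produce the false table of a conjunction, which the quantifier cases also need. The paper obtains it as a carry-free sum of pointwise products, \(F_{\alpha\wedge\beta}^0=F_\alpha^0F_\beta^0+F_\alpha^0F_\beta^1+F_\alpha^1F_\beta^0\), using that at most one summand is \(1\) at each assignment.
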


\subsection{Functional EPAD Compilation}
\label{subsec:overview-gates}

We now compile polynomial-size ALC families into polynomial-size existential EPAD formulas whose unique outputs are the corresponding circuit outputs.
Combined with the previous subsection, this gives short EPAD definitions of the polynomial-space-computable constants needed by the reduction.
Our compiler first constructs the powers of two determined by the word width and shift offsets, then translates each gate operation.

An existential EPAD formula \(\Phi_n(\vec x,y)\) defines \(f_n:\bits^n\to\N\) functionally if, for every \(\vec a\in\bits^n\) and \(b\in\Z\), one has \(\Phi_n(\vec a,b)\Longleftrightarrow b=f_n(\vec a)\).
The class \(\FEPAD\) consists of families with polynomial-time uniform, polynomial-size functional formulas.

\proofparagraph{Large powers of two}For every fixed nonnegative integer \(e\) given in binary, \(\Scale_e(A,B)\) is an EPAD formula of size polynomial in the binary length of \(e\) that holds exactly when \(B=2^eA\).
Writing \(2^e\) as a coefficient would require \(e+1\) bits and make the EPAD formula exponentially large.
Instead, let \(J=\{j:e_j=1\}\) be the set of nonzero binary positions of \(e\), so \(e=\sum_{j\in J}2^j\).
On a positive current value \(z\), the Fermat ladder \(\Delta_j(z,r)\) from \cref{subsec:overview-divisibility} forces \(r=(2^{2^j}-1)z\).
Following it with \(z'=r+z\) therefore multiplies \(z\) by \(2^{2^j}\).
Chaining the set bits \(j\in J\) multiplies by \(2^e\), and \(e=0\) uses equality.
For signed inputs, decompose \(A\) as a difference of positive values, scale both values, and subtract the results.

For an ALC of word-width \(W\), the constraints \(\Scale_W(1,P)\) and \(\Scale_{2W}(1,P_2)\) introduce the modulus \(P=2^W\) and the product bound \(P_2=2^{2W}=P^2\).
For every gate \(g\), let \(Y_g\) represent its value and impose \(0\le Y_g<P\).
For an input gate corresponding to Boolean input \(x_i\), set \(Y_g=x_i\).
Set constant gates to zero or one.
The remaining gate types are compiled as follows.

\proofparagraph{Addition}For \(g=u+_W v\), impose \(P\mid Y_u+Y_v-Y_g\).
The bound \(0\le Y_g<P\) forces \(Y_g\) to be the unique remainder of \(Y_u+Y_v\) modulo \(P\).

\proofparagraph{Multiplication}For \(g=u\times_W v\), the ordinary integer product \(T=Y_uY_v\) before reduction modulo \(P\) lies in \([0,P_2)\).
Introduce \(T\) with this bound.
Since EPAD terms are affine, the formula must force the equality indirectly.
For the correctness argument, let \(\Lambda=2^{2W+2}=4P_2\).
The formula never writes \(\Lambda\) as a coefficient.
Instead, it introduces \(Z_v,Z_T\) and imposes
\begin{equation}\label{eq:overview-product-atoms}
 \Scale_{2W+2}(Y_v,Z_v),
 \qquad
 \Scale_{2W+2}(T,Z_T),
 \qquad
 Y_v\mid T,
 \qquad
 1+Z_v\mid Y_u+Z_T.
\end{equation}
The scaling constraints mean \(Z_v=\Lambda Y_v\) and \(Z_T=\Lambda T\).
If \(Y_v\ne0\), let \(q=T/Y_v\), the quotient in the first divisibility.
Since \(T=qY_v\), scaling also gives \(Z_T=qZ_v\).
Modulo \(1+Z_v\), one has \(Z_v\equiv-1\), and hence \(Y_u+Z_T\equiv Y_u-q\).
The second divisibility therefore gives \(1+\Lambda Y_v\mid Y_u-q\).
Because \(q<P_2\), \(\lvert Y_u-q\rvert<P+P_2<\Lambda\).
The divisor is at least \(1+\Lambda\), so \(q=Y_u\).
If \(Y_v=0\), then \(Y_v\mid T\) is \(0\mid T\), which under ordinary integer divisibility holds exactly when \(T=0\).
Conversely, if \(T=Y_uY_v\), both divisibilities admit the same quotient \(Y_u\).
Thus, when \(Y_v\ne0\), the two divisibility constraints force the same quotient \(Y_u\), with \(T/Y_v=Y_u\).
We call this quotient sharing.
Thus the constraints force the ordinary product, and \(P\mid T-Y_g\) then selects its unique residue in \([0,P)\).

\proofparagraph{Left shift}For \(g=u\ll_W s\), set \(Y_g=0\) if \(s\ge W\).
Otherwise, introduce a fresh \(T\).
The constraint \(\Scale_s(Y_u,T)\) forces the ordinary value \(T=2^sY_u\), and \(P\mid T-Y_g\) selects its unique residue in \([0,P)\).

\proofparagraph{Right shift}For \(g=u\gg s\), set \(Y_g=0\) if \(s\ge W\).
Otherwise, introduce a divisor \(S\), a multiple \(Q\), and a remainder \(R\), and impose \(\Scale_s(1,S)\), \(\Scale_s(Y_g,Q)\), \(Y_u=Q+R\), and \(0\le R<S\).
The scaling constraints give \(S=2^s\) and \(Q=2^sY_g\).
Thus \(Y_u=Q+R\) is quotient--remainder division by \(S\), with quotient \(Y_g\) and remainder \(R\).
It forces \(Y_g=\lfloor Y_u/2^s\rfloor\).

In topological order, every gate relation uniquely determines its bounded output from the values of its predecessors.
A family in \(\polyALC\) has polynomially many gates, and every scaling exponent has polynomial binary length.
Each gate therefore contributes a polynomial-size formula.
Complete proofs and bounds appear in \cref{sec:divisibility-compressor,sec:poly-alc-fepad}.
Together with \cref{thm:alc-fpspace}, this gives \(\FPSPACE\subseteq\FEPAD\), which supplies the short definitions used next for the reduction's exponentially long constants.

\subsection{From Succinct Exact-2-in-4 to EPAD}
\label{subsec:overview-reduction}

The reduction assembles the three preceding subsections into one polynomial-size EPAD sentence.
Let \(\mathcal I=(D,m,\ell)\) be a succinct Exact-2-in-4 instance.

\begin{overviewdef}[Succinct Exact-2-in-4 instance, formally \cref{def:succinct-exact-two}]
The integers \(m,\ell\ge1\), giving the numbers of clauses and variables, are written in binary, and a polynomial-size Boolean circuit \(D\), given the binary representation of a clause index \(0\le k<m\), returns the four literals of clause \(k\), each as a variable index and a sign.
The described formula is denoted \(F_{\mathcal I}\); each clause requires exactly two true literal occurrences.
\end{overviewdef}

The decision problem asks whether \(F_{\mathcal I}\) is satisfiable; we prove its \(\NEXP\)-completeness in \cref{prop:succinct-exact-two-nexp}.

We prove the following normal form in \cref{sec:succinct-exact-two-normal-form}. The remainder of this subsection outlines its construction and use in the reduction.

\begin{restatable}[Succinct Exact-2-in-4 normal form]
  {theorem}{OneWitnessNormalForm}\label{thm:succinct-exact-two-normal-form}
For every succinct Exact-2-in-4 instance \(\mathcal I=(D,m,\ell)\), there are positive integers \(M,H,K\), all determined by \(\mathcal I\), such that \(F_{\mathcal I}\) is satisfiable if and only if there exists \(X\in\Z\) satisfying
\[
 -H\le X\le H,\qquad
 M\mid X,\qquad
 K\mid H^2-X^2.
\]
\end{restatable}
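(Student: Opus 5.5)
The plan is a Manders–Adleman style Chinese-remainder encoding of the sign vector \(\vec\alpha\in\{\pm1\}^\ell\). Since \(H\) is free, the condition \(K\mid H^2-X^2\) is meant to force, for every prime-power factor \(q_i=p_i^{\lambda}\) of \(K\), that \(X\equiv\alpha_iH\pmod{q_i}\) for some sign \(\alpha_i\).

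\textbf{Modulus \(K\).} Fix distinct odd primes \(p_0,\ldots,p_{\ell-1}\), all coprime to a modulus \(M\) chosen below. Set \(q_i=p_i^{\lambda}\) and \(K=\prod_i q_i\), with \(\lambda\) large.

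\textbf{Sign extraction.} If \(p_i\nmid 2H\), then \(q_i\mid(H-X)(H+X)\) forces \(q_i\) to divide exactly one factor. The two factors differ by \(2H\), which is coprime to \(p_i\), so \(p_i\) cannot divide both. This defines \(\alpha_i\) with \(X\equiv\alpha_iH\pmod{q_i}\).

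\textbf{Clause weights and basis values.} For each variable \(i\), let
\[
 w_i=\sum_k \epsilon_{i,k}4^k,
\]
where \(\epsilon_{i,k}\) is the signed number of occurrences of variable \(i\) in clause \(k\). Then \(\mathcal E(\vec\alpha)=\sum_i\alpha_iw_i\). By CRT, choose \(\theta_i>0\) with
\[
 \theta_i\equiv0\pmod{q_j}\ (j\ne i),\qquad p_i\nmid\theta_i,\qquad \theta_i\equiv w_i\pmod M,
\]
and set \(H=\sum_i\theta_i\). Then \(H\equiv\theta_i\pmod{q_i}\), and \(p_i\nmid 2H\).

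\textbf{Correctness.} In the forward direction, a satisfying assignment gives \(X=\sum_i\alpha_i\theta_i\). This satisfies \(|X|\le H\). It satisfies \(X\equiv\alpha_iH\pmod{q_i}\) for every \(i\), so \(K\mid H^2-X^2\). And \(X\equiv\mathcal E(\vec\alpha)=0\pmod M\).

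In the reverse direction, every admissible \(X\) satisfies \(X\equiv\sum_i\alpha_i\theta_i\pmod K\) for the extracted signs. If we can upgrade this to the integer equality \(X=\sum_i\alpha_i\theta_i\), then \(M\mid X\) gives \(M\mid\mathcal E(\vec\alpha)\). Take \(M\) odd with \(M>2\cdot4^{m}\). A base-four digit argument then applies: each \(s_k/2\in\{-2,\ldots,2\}\), and \(\bigl|\sum_{j<k}(s_j/2)4^j\bigr|<4^k\). Together these show that \(\mathcal E\) is divisible by \(M\) only if \(\mathcal E=0\), and hence only if every \(s_k=0\). So \(\vec\alpha\) satisfies \(F_{\mathcal I}\).

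\textbf{Main obstacle: the integer equality.} From a congruence mod \(K\) and \(|X|\le H\), the equality follows only if \(2H<K\). But each \(\theta_i\) is determined only modulo \(MK\), which pushes \(H\) above \(K\).

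I would first try Manders and Adleman's fix. Keep the congruences that define \(\theta_i\) modulo \(M\prod_{j}p_j\) at exponent one, so that the \(\theta_i\) stay small. Separately raise the exponents in \(K\) to \(p_i^{\lambda}\). The sign extraction still works, because \(p_i\nmid2H\). The task is then to show \(X\equiv\sum_i\alpha_i\theta_i\) modulo a quantity exceeding \(2H\).

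If this bookkeeping fails, the fallback is to adjoin to \(K\) an extra coprime factor \(q_\ast>2H\). We would impose \(\theta_i\equiv0\pmod{q_\ast}\) for \(i\ne0\) and \(\theta_0\equiv H\pmod{q_\ast}\), which turns the global sign into a harmless extra variable. The delicate part is checking that the circular dependence of \(H\) on the \(\theta_i\) still closes.

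\textbf{Computability.} The statement only requires existence of \(M,H,K\). I would nevertheless choose the primes as, say, the first \(\ell\) odd primes exceeding a bound, and make the CRT lifts explicit. This prepares the later claim that their bits are \(\PSPACE\)-computable from \(\mathcal I\).
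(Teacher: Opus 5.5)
Your architecture is the paper's: sign extraction from \(q_i\mid(H-X)(H+X)\) via \(p_i\nmid 2H\), weights \(\theta_i\) vanishing modulo the other prime powers with \(\theta_i\equiv w_i\pmod M\), \(H=\sum_i\theta_i\), and base-four packing. But the proof is incomplete at the one step you flag. You never establish \(2H<K\), and your reason for thinking it fails is a miscount.

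The conditions on \(\theta_i\) involve \(q_j\) only for \(j\ne i\). At index \(i\) only the single prime \(p_i\) enters, through \(p_i\nmid\theta_i\). So \(\theta_i\) is constrained modulo \(Mp_iK/q_i\), not modulo \(MK\).

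Concretely, write \(\theta_i=t_iP_i\) with \(P_i=K/q_i\). The remaining conditions then say only that \(t_i\equiv w_iP_i^{-1}\pmod M\) and \(p_i\nmid t_i\). Let \(0\le\lambda_i<M\) be that residue. Since \(\gcd(p_i,M)=1\), one of \(\lambda_i\) and \(\lambda_i+M\) is a valid positive choice, so \(0<t_i<2M\). Hence \(\theta_i<2MK/p_i^{\lambda}\le 2MK/3^{\lambda}\), and \(2H<4\ell MK/3^{\lambda}\le K\) once \(3^{\lambda}\ge 4\ell M\).

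The paper takes \(M=2\cdot4^m\) and \(\lambda=e=2(m+\ell)\), using \(4\ell M\le 4^{m+\ell+1}<9^{m+\ell}\). The large exponent is exactly what makes each \(P_i\) tiny relative to \(K\). With this estimate your reverse direction goes through as written.

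Neither proposed repair supplies the estimate.

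\begin{itemize}
\item Weakening the vanishing conditions to \(\theta_i\equiv0\pmod{p_j}\) breaks the forward direction. Take \(X=\sum_j\alpha_j\theta_j\) for a genuine sign vector. Then \(H+\alpha_iX\equiv 2\theta_i\not\equiv0\pmod{p_i}\), so \(q_i\) would have to divide \(H-\alpha_iX=2\sum_{j:\alpha_j\ne\alpha_i}\theta_j\). That sum is only guaranteed to be divisible by \(p_i\), not by \(p_i^{\lambda}\), so \(K\mid H^2-X^2\) fails in general.
\item The fallback with an extra factor \(q_*>2H\) is self-contradictory whenever \(\ell\ge2\). From \(\theta_1\equiv0\pmod{q_*}\) and \(\theta_1>0\) you get \(H\ge\theta_1\ge q_*>2H\).
\end{itemize}
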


The modulus \(M\) checks all clauses directly on \(X\), while the square-difference condition involving \(H\) and \(K\) ensures that \(X\) encodes a choice of signs.

\proofparagraph{Zero-target clause packing}For \(\vec u\in\bits^\ell\), put \(\alpha_i=2u_i-1\in\{-1,+1\}\).
If clause \(k\) contains literals \((i_h,\varepsilon_h)\), for \(h=1,2,3,4\), where \(i_h\) is the variable index and \(\varepsilon_h\in\{-1,+1\}\) is its polarity (\(+1\) for a positive literal and \(-1\) for a negated literal), put
\[
 s_k(\vec\alpha)=\sum_{h=1}^4\varepsilon_h\alpha_{i_h}.
\]
The equality \(s_k=0\) says that clause \(k\) has exactly two true literal occurrences.
With
\[
 \mathcal E(\vec\alpha)=\sum_{k=0}^{m-1}s_k(\vec\alpha)4^k,
 \qquad M=2\cdot4^m,
\]
the base-four spacing gives
\[
 \vec u\models F_{\mathcal I}
 \quad\Longleftrightarrow\quad
 \mathcal E(\vec\alpha)\equiv0\pmod M.
\]

\proofparagraph{The constants \(H\) and \(K\)}Write
\[
 \mathcal E(\vec\alpha)=\sum_{j=0}^{\ell-1}c_j\alpha_j.
\]
Manders and Adleman encode a choice of signs by distributing pairwise coprime prime powers between \(H-X\) and \(H+X\), the two factors of a square difference~\cite[Lemma~1]{MandersAdleman1978}.
Using the first \(\ell\) odd primes, let \(K\) be the product of suitably large powers of these primes.
The construction chooses positive weights \(\theta_j\) with \(\theta_j\equiv c_j\pmod M\).
Each \(\theta_j\) is divisible by every prime-power factor of \(K\) except the one associated with coordinate \(j\), whose underlying prime does not divide \(\theta_j\).
Set \(H=\sum_j\theta_j\), choosing the prime powers large enough to ensure \(2H<K\).
Then
\[
 X=\sum_j\alpha_j\theta_j
\]
satisfies \(|X|\le H\) and \(K\mid H^2-X^2\), and conversely those two conditions recover a unique sign vector.
Since \(X\equiv\mathcal E(\vec\alpha)\pmod M\), the remaining condition is simply \(M\mid X\).

\proofparagraph{Why ALCs enter}The constants \(H(\mathcal I)\) and \(K(\mathcal I)\) have exponentially many bits. In \cref{prop:hk-fpspace}, we prove that their indexed bits are computable in polynomial space by querying the clause-description circuit \(D\) only at addressed clauses and performing the required arithmetic by indexed access.
Hence the maps \(\mathcal I\mapsto H(\mathcal I)\) and \(\mathcal I\mapsto K(\mathcal I)\) are \(\FPSPACE\) functions.
The characterization \(\polyALC=\FPSPACE\) converts these procedures into polynomial-size ALC families, and the functional ALC-to-EPAD translation gives polynomial-size EPAD formulas for their values.
For a concrete source instance, the bits of \(\mathcal I\) are fixed to their constants in those formulas.
This is the only use of ALCs in the \(\NEXP\)-hardness reduction.

\proofparagraph{The remaining arithmetic}The final EPAD sentence keeps a signed variable \(X\), imposes \(-H\le X\le H\) and \(M\mid X\), and uses the quotient-sharing multiplication gadget from the previous subsection to define
\[
 Y=(H-X)(H+X).
\]
The last test is \(K\mid Y\).
Once \(H\) and \(K\) have been defined, this is the only additional use of the bounded-product gadget.
The resulting polynomial-size EPAD sentence \(\Xi_{\mathcal I}\) is satisfiable exactly when \(F_{\mathcal I}\) is satisfiable.
Section~\ref{sec:succinct-exact-two-normal-form} proves this one-integer encoding and shows that the exponentially large constants \(H\) and \(K\) can be represented by polynomial-size EPAD formulas, yielding the polynomial-time reduction \(\mathcal I\mapsto\Xi_{\mathcal I}\).
Together with the known \(\NEXP\) upper bound, Section~\ref{sec:nexp-completeness} then gives \(\NEXP\)-completeness.

Separately, Section~\ref{sec:expalc-fepad-section} extends the same compression ideas to locally uniform exponential-size ALCs, yielding polynomial-size functional EPAD definitions for \(\FEXP\) functions.

\subsection{Finite-Quotient Elimination}
\label{subsec:overview-fragment}

Successive quotient equations can force all variables to be fixed rational multiples of a single variable.
Merge-absorptivity captures this behavior.
It is a semantic property; in \cref{prop:merge-absorptive-recognition}, we prove that it is decidable and that every merge-absorptive system admits complete finite-quotient elimination.
Satisfiability remains \(\NEXP\)-hard even on this fragment.

\begin{overviewdef}[Finite-quotient replacement, after
Lipshitz~\cite{Lipshitz1978,LechnerOuaknineWorrell2015}, formally
\cref{def:finite-quotient-replacement}]
Suppose the current constraints imply that every valuation satisfying \(L\mid M\) with \(L\ne0\) has \(M/L\) in a finite nonempty set \(\mathcal Q\subseteq\Z\).
Replace the atom by one equation \(M=qL\) for each \(q\in\mathcal Q\).
Together these alternatives contain every solution with \(L\ne0\).
A solution with \(L=M=0\) satisfies every alternative because each equation becomes \(0=q\cdot0\).
\end{overviewdef}

Under \(0<x\le y\le3x\), replacing \(x\mid y\) gives the three children \(y=x\), \(y=2x\), and \(y=3x\).
For example, the middle child is written as \(y-2x=0\), whose variable coefficients are \(1\) and \(-2\).
In general, rewrite a child equation \(M=qL\) as \(M-qL=0\).
If every coefficient, including the constant term, is zero, discard the equation as a tautology.
If all variable coefficients are zero but the constant term is nonzero, discard the child as inconsistent.
Otherwise make the equation \emph{primitive} by dividing all coefficients by their positive greatest common divisor.
This preserves its integer solutions.

\proofparagraph{Positive homogeneous systems}A linear form is \emph{homogeneous} if it has no constant term.
A \emph{positive homogeneous divisibility system} is a conjunction of divisibility atoms between homogeneous integer linear forms and homogeneous equality and inequality \emph{side conditions}, over variables in \(\Npos\).
Scaling a valuation by a positive integer therefore preserves every constraint.

\proofparagraph{Components}After some quotient equations have been chosen, a \emph{component} is a set of variables that the side conditions and those equations force to be positive rational multiples of one common scale.
Initially every variable forms a singleton component.

A divisibility atom \(L\mid M\) is a \emph{merge atom} for two components \(A,B\) if, under the current side conditions and earlier quotient equations, it has only finitely many possible integer quotients, and every quotient equation \(M=qL\) that is consistent with those constraints fixes the relative scale of \(A\) and \(B\).
The two components can then be replaced by \(A\cup B\).

Start with the singleton components \(\{a\},\{b\},\{c\}\).
Under \(a>b>0\), the quotient in \(2a+b\mid5a\) is forced to be \(2\), and the equation \(5a=2(2a+b)\) forces \(a=2b\).
With \(a=2b\) and \(b>c>0\), the quotient in \(a\mid b+3c\) is forced to be \(1\), giving \(b=3c\).
The component evolution is
\[
 \{a\},\{b\},\{c\}
 \longrightarrow\{a,b\},\{c\}
 \longrightarrow\{a,b,c\}.
\]

\begin{overviewdef}[Merge-absorptive systems, formally \cref{def:merge-absorptive}]
A positive homogeneous divisibility system with \(r\) variables is \MergeAbs{} if one can fix in advance a sequence of \(r-1\) distinct divisibility atoms and the pair of current components merged at each stage such that, for every consistent sequence of earlier quotient choices, the next atom merges its prescribed components.
These \(r-1\) merges reduce the singleton partition to a single component.
\end{overviewdef}

We prove the following hardness result in \cref{subsec:merge-absorptive}. The decidability and hardness arguments are outlined below.

\begin{restatable}[NEXP-hardness on merge-absorptive systems]
  {corollary}{MergeAbsHard}\label{cor:mergeabs-hardness}
Satisfiability is \(\NEXP\)-hard even when restricted to \MergeAbs{} positive homogeneous divisibility systems.
\end{restatable}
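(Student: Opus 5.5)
We reduce from Succinct Exact-2-in-4-SAT: we transform the sentence \(\Xi_{\mathcal I}\) of \cref{subsec:overview-reduction} into an equisatisfiable positive homogeneous divisibility system and then check that this system is \MergeAbs{}. By \cref{prop:succinct-exact-two-nexp} and \cref{thm:succinct-exact-two-normal-form}, it suffices to express the conditions \(-H\le X\le H\), \(M\mid X\) and \(K\mid H^2-X^2\).

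\textbf{Step 1: homogenize.} Introduce a fresh positive variable \(u\) that plays the role of the constant \(1\). Replace every constant \(c\) in an affine term by \(cu\). Since scaling a valuation by a positive integer preserves every constraint, the solutions with \(u=1\) correspond exactly to the original solutions. Conversely, a homogeneous solution with general \(u\) is a \(u\)-multiple of a genuine solution, provided every variable is pinned to a rational multiple of \(u\). That proviso is precisely what merge-absorptivity will give us.

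\textbf{Step 2: move to positive variables.} The signed variable \(X\) becomes \(X'=X+H+u>0\), and the other signed quantities are shifted by positive multiples of \(u\) in the same way. The gadgets behind \(H\) and \(K\) are the ALC-compiled functional formulas of \cref{subsec:overview-gates}. In each of them every auxiliary value is pinned by a chain of quotient-forced divisibilities, coming either from the Fermat ladder \(\Delta_t\) or from the gate relations \(P\mid Y_u+Y_v-Y_g\) with bounded \(Y_g\).

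\textbf{Step 3: fix the merge schedule.} List the atoms in topological order. The ladder atoms \(y_{i-1}\mid y_i\) have finitely many quotients once \(y_t/y_i\) is fixed, which is the stated purpose of the fourth atom in \cref{eqn:fmat-atoms}. Each such atom merges \(y_{i-1}\) into the component of \(u\).

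\textbf{Main obstacle: gates with several consistent quotients.} For a gate atom \(P\mid Y_u+Y_v-Y_g\), the possible quotients form a finite set because of the bounds. However, different quotients fix different ratios, and merge-absorptivity only asks that every consistent quotient fix \emph{some} ratio. A homogeneous quotient equation \(Y_u+Y_v-Y_g=qP\), with all other variables already in \(u\)'s component, pins \(Y_g\) for every \(q\). So this atom merges \(\{Y_g\}\) into the main component regardless of which quotient is chosen.

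The genuinely nonlinear steps need more care:
\begin{itemize}
\item For the multiplication gadget, merge \(T\) through \(Y_v\mid T\). This is allowed only if its quotient is bounded, which follows from \(T<P_2\) when \(Y_v\ne0\). The degenerate case \(Y_v=0\) must be avoided, for instance by adding \(u\) to all gate values so that they are positive. The atom \(1+Z_v\mid Y_u+Z_T\) is then left over after all merges and becomes a constant test.
\item For the right shift, the remainder \(R\) with \(0\le R<S\) is not forced by a divisibility atom. We would rewrite it as \(S\mid Y_u-R\cdot(\cdot)\) or express the shift through a divisibility with a bounded quotient.
\end{itemize}
Finally, \(X'\) itself is merged through \(M u\mid X'-(H+1)u\), whose quotient is bounded by \(2H/M\).

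It then remains to check three things: the schedule uses each atom at most once, it consists of exactly \(r-1\) merges, and every leftover atom, including \(K\mid (H-X)(H+X)\) via the product gadget, is a constant test. Equisatisfiability holds because the reduction is polynomial time and, by Steps 1 and 2, it preserves satisfiability.
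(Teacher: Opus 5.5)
Your route differs from the paper's, but it has a genuine gap in the converse direction of Step~1.

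\textbf{Homogenization loses integrality.} You claim a homogeneous solution is a \(u\)-multiple of an original solution once every variable is pinned to a rational multiple of \(u\), and that merge-absorptivity supplies this proviso. It does not. Merge-absorptivity fixes ratios only over \(\mathbb Q\), whereas dividing a solution by \(u\) requires \(u\mid v\) for every variable \(v\). For example, \(2x=1\) has no integer solution, but its homogenization \(2x=u\) is \MergeAbs{} (a single linear merge) and is satisfied by \(u=2\), \(x=1\).

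Integrality is not automatic in \(\Xi_{\mathcal I}\) either. After homogenization the right-shift gadget reads \(Y_u=R+Q\), \(Q=2^sY_g\), \(0\le R<2^su\). With \(s=1\), the valuation \(u=4\), \(Y_u=4\), \(R=2\), \(Q=2\), \(Y_g=1\) satisfies it and gives the gate the value \(Y_g/u=1/4\) instead of \(\lfloor 1/2\rfloor=0\). Shifting by multiples of \(u\) in Step~2 does not change this. Hence the compiled definitions of \(H\) and \(K\) stop being functional. The residue and quotient-sharing arguments, which assume integer operands, no longer apply, so your soundness direction is unproved.

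The paper never homogenizes naively. In \cref{prop:merge-absorptive-completeness} each source variable is exposed as the quotient \((s_{j-1}-s_j)/u\) of an atom \(u\mid s_{j-1}-s_j\). These atoms both force integrality and serve as the bounded-quotient merge atoms; boundedness comes from \(s_0=2^Eu\) and the small-solution bound. The corollary then follows from \cref{thm:epad-complete} with no gate-by-gate analysis. You could repair your route by adding \(u\mid v\) for every variable, since these become constant tests after the last merge, but this has to be stated and used.

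\textbf{Errors and omissions in the merge schedule.}
\begin{itemize}
\item The ladder atom \(y_{i-1}\mid y_i\) is not a merge atom. Once \(y_t\) and \(y_i\) are pinned, only \(0<y_{i-1}<y_i\) constrains \(y_{i-1}\), so \(y_i/y_{i-1}\) can be any integer at least \(2\). The bounded atom is the fourth one, \(y_i+y_{i-1}\mid y_t+3y_i\), as in \cref{lem:scaling-systems-merge-absorptive}.
\item Your claim that every auxiliary value in the compiled formulas is pinned is false as stated. They use \(\Scale_e\) from \cref{lem:signed-scaling}, whose decomposition \(A=A^+-A^-\) leaves \(A^-\) as a free positive scale when \(A\) is fixed. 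A \MergeAbs{} system has finitely many branches, each pinning all variables to one scale, so it has only finitely many solutions for any given value of \(u\). This gadget has infinitely many, so it must be rebuilt (for instance from positive scaling of \(A+u\) and of \(u\)), not merely shifted.
\item The right-shift remainder is left as ``we would rewrite it.'' A concrete fix is \(S\mid Y_u-R\) with \(0\le R<S\), which is a bounded-dividend merge, followed by the linear merge \(Q=Y_u-R\).
\item The atom \(Mu\mid X'-(H+1)u\) writes \(M=2^{2m+1}\) and \(H\) as coefficients, which have exponentially many bits. It must be written as \(M\mid X'-H-u\), with \(M\) and \(H\) the homogenized variables defined by their gadgets.
\end{itemize}
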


\proofparagraph{Decidability}Once the ratios within the current components are fixed, a candidate merge depends only on the ratio \(z=t_B/t_A\) between two component scales.
Feasible values of \(z\) form a rational interval computable by linear programming, and the divisibility quotient is a linear-fractional function of \(z\).
One can therefore decide whether only finitely many integer quotients occur and whether every consistent quotient equation fixes \(z\).
Proposition~\ref{prop:merge-absorptive-recognition} makes this test effective.

\proofparagraph{Hardness}For the hardness direction, we first use the standard normalization of EPAD formulas to conjunctions of affine divisibility constraints over integer variables~\cite[Section~IV.A]{LechnerOuaknineWorrell2015}.
Proposition~\ref{prop:epad-affdivz} recalls this reduction in detail for completeness.
Replacing each integer variable by a difference of two positive variables gives a system \(\mathcal J\) over positive variables.
We then transform the resulting system into a positive homogeneous one.
The construction is arranged so that its quotient atoms successively merge all variables into a single component.

The small-solution bound supplies a bit-length bound \(\nu=2^{\poly(|\mathcal J|)}\) such that a satisfiable system \(\mathcal J\), with variables \(x_1,\ldots,x_k\), has a solution \(0<x_j<2^\nu\) for every \(1\le j\le k\).
Put \(E=\nu+\lceil\log_2(k+1)\rceil+1\), so that \(k2^\nu<2^E\).
Introduce a common positive scale \(u\) and a strictly decreasing chain
\[
 s_0>s_1>\cdots>s_k>0,
 \qquad u\mid s_{j-1}-s_j\quad(1\le j\le k),
\]
and force \(s_0=2^Eu\) by chaining the Fermat-ladder gadgets \(\Delta_h\) for the set bits of \(E\), as in \cref{subsec:overview-gates}.
Recall that \(\Delta_h(x,r)\) enforces \(r=(2^{2^h}-1)x\), so adding \(r+x\) realizes multiplication by \(2^{2^h}\).

The quotient represents the source variable \(x_j\), and every affine form is represented homogeneously according to
\[
 x_j=\frac{s_{j-1}-s_j}{u},
 \qquad
 L=c+\sum_{j=1}^k a_jx_j
 \quad\longmapsto\quad
 \widehat L=cu+\sum_{j=1}^k a_j(s_{j-1}-s_j).
\]
Under the quotient encoding, \(\widehat L=uL(x_1,\ldots,x_k)\).
Since \(u>0\), replacing every source form by its hatted form preserves equalities, inequalities, and divisibilities, including a zero divisor.
The target conjunction is therefore a positive homogeneous divisibility system.

The Fermat ladders used to enforce \(s_0=2^Eu\) first merge all variables of the scaling gadget into one component.
The atoms \(u\mid s_{j-1}-s_j\) then merge \(s_1,\ldots,s_k\) into that component one by one, because each chosen quotient fixes \(s_j\) relative to the common scale.
After these merges, the source divisibilities reduce to constant tests (\cref{prop:merge-absorptive-completeness}).
The translation is polynomial-time and preserves satisfiability, so restricting to positive homogeneous merge-absorptive systems loses no generality for this decision problem.

\subparagraph*{Large coefficients.}
The same Fermat-ladder gadget \(\Delta_j(x,r)\) from \cref{subsec:overview-divisibility}, which enforces \(r=(2^{2^j}-1)x\), also witnesses the coefficient-growth lower bound.
Writing \(M_j=2^{2^j}-1\), its projection onto the variables \(x,r\) is \(\{t(1,M_j):t\in\Npos\}\), which we call its \emph{projected ray}.

We prove the following coefficient bounds in \cref{subsec:large-coefficients}.

\begin{restatable}[Large coefficients in finite-quotient replacements]
  {theorem}{ReplacementBlowup}\label{thm:replacement-blowup}
The polynomial-size merge-absorptive family \((\Delta_j)\) admits a specified replacement sequence whose unique surviving branch contains an equation with a coefficient of bit length \(\Theta(2^j)\).
Every affine equation over \(x,r\) that defines the same projected ray with \(x,r>0\) contains a coefficient with at least \(2^j\) bits.
\end{restatable}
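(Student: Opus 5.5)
The proof has two parts, matching the two sentences of \cref{thm:replacement-blowup}: an explicit replacement schedule on \(\Delta_j\) and a lower bound valid for every defining equation.

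\textbf{Upper part (the schedule).} Process the stages of \(\Delta_j(x,r)\) from the top, \(i=j,j-1,\ldots,1\), as in the proof of \cref{lem:mersenne-multiplier}. At stage \(i\) the earlier choices have already fixed \(y_j/y_i\). The fourth atom \(y_i+y_{i-1}\mid y_j+3y_i\) then has finitely many possible quotients, because \(y_{i-1}<y_i\) and \(y_j/y_i\) is known. Replacing this atom gives children of the form \(y_j+3y_i=q(y_i+y_{i-1})\). Each feasible child fixes \(y_{i-1}/y_i\) and so merges \(y_{i-1}\) into the component. This is the merge-absorptive schedule already described, and it shows the family \((\Delta_j)\) is merge-absorptive.

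By \cref{lem:mersenne-multiplier}, exactly one ratio \(Q=2^{2^{h}}+1\) survives the remaining atoms, with \(h=j-i\). Every other child is inconsistent: its ratio contradicts one of the first three atoms or an exclusion atom, and this is detected once the ratios are constant tests. So there is a unique surviving branch.

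Along that branch, express each primitive child equation in the variables \(y_j,y_i,y_{i-1}\), after substituting the earlier equation \(y_j=(2^{2^h}-1)y_i\). The last stage \(i=1\) has \(h=j-1\). There the surviving equation, written in primitive form, is equivalent to \(y_j=(2^{2^j}-1)y_0\) on the component. Concretely, \(y_j+3y_1=q(y_1+y_0)\) with \(y_1=(2^{2^{j-1}}+1)y_0\) and \(y_j=(2^{2^j}-1)y_0\). This forces \(q=(2^{2^j}+2+3\cdot2^{2^{j-1}})/(2^{2^{j-1}}+2)\), and I would verify that this is an integer by direct computation. Alternatively, I would pick the fourth atom so that it is. Its coefficient \(q\) has \(\Theta(2^{j-1})=\Theta(2^j)\) bits, and the equation is primitive because its coefficients \(1\), \(3-q\), \(-q\) have gcd \(1\).

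The bound \(O(2^j)\) is immediate, since all ratios are at most \(2^{2^j}\) and children have coefficients polynomial in these ratios. The matching lower bound comes from the explicit \(q\).

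\textbf{Lower part.} Let \(\alpha x+\beta r+\gamma=0\) be any affine equation whose solution set with \(x,r>0\) is exactly \(\{t(1,M_j):t\in\Npos\}\). Substitute \(t=1\) and \(t=2\) and subtract. This gives \(\alpha+\beta M_j=0\) and then \(\gamma=0\). If \(\beta=0\), then \(\alpha=0\) and the equation is trivial, so it defines all of \(\Npos^2\), a contradiction. Hence \(\beta\neq0\) and \(|\alpha|=|\beta|M_j\ge M_j=2^{2^j}-1\). Since \(j\ge0\) forces \(2^{2^j}-1\ge1\), \(|\alpha|\) has at least \(2^j\) bits.

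The main obstacle is the upper part. I must check that the stated schedule yields a single surviving branch and must identify precisely which child carries the large coefficient. Integrality of the quotient of the fourth atom at the final stage is the delicate point. If it fails, I would instead track the exclusion atom's replacement, or the substituted equation \(y_j=(2^{2^j}-1)y_0\), as the witnessing equation.
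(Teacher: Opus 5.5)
Your proposal is correct and follows the paper's route. You process the merge atoms \(y_i+y_{i-1}\mid y_j+3y_i\) for \(i=j,\ldots,1\), use the soundness argument of \cref{lem:mersenne-multiplier} to make every other child inconsistent, read off the large quotient at \(i=1\), and get the lower bound by substituting points of the ray to obtain \(\beta(r-M_jx)=0\) with \(\beta\ne0\). The integrality point you flag as delicate is immediate: with \(B=2^{2^{j-1}}\) your numerator is \(B^2+3B+2=(B+1)(B+2)\), so \(q=B+1=2^{2^{j-1}}+1\), exactly as computed in \cref{lem:exposed-quotient} and in the completeness part of \cref{lem:mersenne-multiplier}, and none of your fallbacks is needed.
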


\proofparagraph{Proof sketch}Replacing the atoms \(y_i+y_{i-1}\mid y_j+3y_i\) in the order \(i=j,j-1,\ldots,1\) successively fixes one more variable relative to \(y_j\), so \(\Delta_j\) is merge-absorptive.
For \(j\ge1\), the final surviving child is \(y_j+3y_1=q(y_1+x)\), where \(q=2^{2^{j-1}}+1\), so it contains a coefficient of bit length \(\Theta(2^j)\).
Any equation \(ax+br+c=0\) defining the projected ray is nonzero, and substituting \((x,r)=(t,M_jt)\) for all \(t\in\Npos\) gives \(c=0\) and \(a=-bM_j\).
Hence \(b\ne0\), so \(|a|=|b|M_j\ge M_j\) and \(a\) has at least \(2^j\) bits.\qed

\subparagraph*{Discussion.}
The complexity of the Bozga--Iosif--Lechner fragment, the quantified extension of EPAD used in verification~\cite{BozgaIosif2005,LechnerThesis2016} and called BIL in~\cite{PerezRaha2022}, is open, as are satisfiability for word equations with length constraints and reachability for other parametric automata models such as parametric timed automata~\cite{BundalaOuaknine2014}.
We have implemented the normalization procedures used here, and the decision procedure of D\'efossez et al.~\cite{DefossezHaaseMansuttiPerez2024} is the natural next target.

\phantomsection\label{sec:technical-overview-end}
\clearpage
\addtolength{\abovedisplayskip}{3pt}
\addtolength{\belowdisplayskip}{3pt}
\addtolength{\abovedisplayshortskip}{3pt}
\addtolength{\belowdisplayshortskip}{3pt}
\section{Preliminaries and the Indexed-Bit Characterization}
\label{sec:preliminaries}

\subparagraph*{Conventions.}
The expression \(2^{\poly(n)}\) denotes \(2^{p(n)}\) for some polynomially bounded integer-valued function \(p\).
For \(a>0\), let \(\len(a)=\lfloor\log_2 a\rfloor+1\), and put \(\len(0)=1\).
Integers and affine coefficients are binary encoded, and their bit lengths count toward formula size.
Shifts are logical on nonnegative binary words, with bit zero least significant.

An \emph{affine-linear integer term} is \(c+\sum_i a_i x_i\).
It is \emph{homogeneous} when \(c=0\).

\begin{definition}[EPAD formula satisfiability]\label{def:epad-sat}
An EPAD formula is an existential first-order formula over integer variables using Boolean combinations of affine-linear equalities, affine-linear inequalities, and divisibility atoms \(L\mid M\) between affine-linear terms.
EPAD satisfiability asks whether a given such formula is satisfiable.
\end{definition}

Divisibility has its ordinary integer meaning: \(L\mid M\) if and only if \(M=qL\) for some \(q\in\Z\).
Thus \(0\mid0\) is true and \(0\mid M\) is false for \(M\ne0\).
When we invoke a result stated under the convention that the divisor must be nonzero, we separate the case \(L=M=0\) by a Presburger disjunction, without further comment.

\begin{definition}[Functional EPAD]\label{def:fepad}
An \(n\)-input functional EPAD formula is an EPAD formula \(\Phi_n(\vec x,y)\equiv\exists\vec z\,\Theta_n(\vec x,y,\vec z)\), where \(\vec x\in\bits^n\) are Boolean free inputs and \(y\) is the distinguished integer output.
It computes \(f_n\colon\bits^n\to\N\) if, for every \(\vec a\in\bits^n\) and every \(b\in\Z\),
\[
 \Phi_n(\vec a,b)\quad\Longleftrightarrow\quad b=f_n(\vec a).
\]
Only the output is required to be unique.
The auxiliary witnesses need not be.
The class \(\FEPAD\) consists of functions computed by polynomial-time uniform polynomial-size families of such formulas.
Zero-input families are allowed.
\end{definition}

\begin{definition}[Arithmetic logic circuits]\label{def:alc}
An \emph{arithmetic logic circuit} (ALC) of \emph{word-width \(W\)} is an acyclic circuit with a distinguished output gate whose values lie in \(\{0,\ldots,2^W-1\}\).
It has Boolean inputs, constants \(0,1\), and gates
\[
 \begin{aligned}
 u+_W v&=(u+v)\bmod 2^W,&
 u\times_W v&=uv\bmod 2^W,\\
 u\ll_W s&=2^s u\bmod 2^W,&
 u\gg s&=\left\lfloor u/2^s\right\rfloor.
 \end{aligned}
\]
Shift offsets are nonnegative binary parameters in the circuit description.
Shifts by at least \(W\) return zero.

An ALC family \((C_n)_n\) has a size \(S(n)\), with gates indexed \(0,\ldots,S(n)-1\) in topological order, and a word-width \(W(n)\) satisfying \(1\le W(n)\le2^{\poly(n)}\).
The family is \emph{polynomial-time uniform} if a polynomial-time algorithm, on input \(1^n\), returns the binary representations of \(S(n)\), \(W(n)\), and the output-gate index, and, on input \((1^n,i)\) with \(i<S(n)\) given in binary, returns the local description of gate \(i\): its type, predecessor indices, and Boolean input index or shift offset when applicable.
Predecessor indices are smaller than \(i\).

The class \(\polyALC\) consists of the functions computed by polynomial-time uniform families with \(S(n)\le\poly(n)\).
The class \(\expALC\) is defined identically except that \(S(n)\le2^{\poly(n)}\).
In both classes the output word is interpreted as a nonnegative binary integer.
\end{definition}

For polynomial-size families, this local definition of uniformity is equivalent to generating the whole circuit in polynomial time.
Its advantage is that exactly the same definition continues to apply when the circuit has exponentially many gates.

\begin{definition}[Function classes]\label{def:function-classes}
The class \(\FPSPACE\) consists of functions from Boolean strings to nonnegative integers computed by deterministic polynomial-space transducers with a one-way write-only output tape.
The class \(\FEXP\) is defined in the same way using deterministic \(2^{\poly(n)}\)-time transducers.
Outputs are read in binary, and leading zeros are immaterial.
The class \(\NEXP\) consists of languages accepted by nondeterministic machines in time \(2^{\poly(n)}\).
\end{definition}

\begin{lemma}[Indexed-bit characterization]\label{lem:indexed-bits}
A function family \(f=(f_n)_n\), with \(f_n\colon\bits^n\to\N\), belongs to \(\FPSPACE\) if and only if there are a polynomial \(p\) and a \(\PSPACE\) predicate \(\mathcal B\) such that \(f_n(\vec x)<2^{2^{p(n)}}\) and, for every binary index \(0\le i<2^{p(n)}\), \(\mathcal B(\vec x,i)=1\) exactly when the \(i\)th binary bit of \(f_n(\vec x)\) is \(1\).
\end{lemma}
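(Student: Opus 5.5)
We prove the two directions separately, using only standard space-bounded simulation arguments.

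\emph{Forward direction.} Let \(T\) be a polynomial-space transducer computing \(f\), with work space \(s(n)\). Its configurations, excluding the write-only output tape, number at most \(2^{O(s(n))}\). It halts, so it never repeats a configuration, and hence it runs for at most \(2^{c\,s(n)}\) steps. It writes at most one symbol per step, so the output has at most \(2^{c\,s(n)}\) bits. Choosing \(p(n)\ge c\,s(n)\) gives \(f_n(\vec x)<2^{2^{p(n)}}\).

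The paper reads outputs in binary with leading zeros immaterial, and shifts index bit zero as least significant. We therefore fix the convention that the transducer writes the least significant bit first. If it writes most significant first, we first compute the output length \(N\) by a counting simulation, then read bit \(i\) at write position \(N-1-i\).

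To decide \(\mathcal B(\vec x,i)\), we simulate \(T\) on \(\vec x\) while keeping a binary counter of the symbols written so far. The counter needs \(p(n)\) bits. We return the symbol written when the counter reaches \(i\), or \(0\) if the machine halts first. This uses polynomial space, so \(\mathcal B\in\PSPACE\).

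\emph{Converse direction.} Given \(p\) and a polynomial-space decider for \(\mathcal B\), the transducer loops \(i=0,1,\ldots,2^{p(n)}-1\) using a \(p(n)\)-bit counter. For each \(i\) it reruns the decider on \((\vec x,i)\), reusing the same work space, and writes the resulting bit on the output tape. Since the output tape is one-way and write-only, the bits come out in increasing index order, which is least significant first under our convention. If the convention requires most significant first, we run the loop downward instead.

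The total space is the counter plus the decider's space, which is polynomial. The running time is exponential, which is allowed for \(\FPSPACE\). Leading zeros are immaterial, so the emitted string denotes exactly \(f_n(\vec x)\).

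\emph{Main obstacle.} The only delicate point is the forward direction: a one-way write-only output tape cannot be reread, so bit \(i\) must be recovered by recomputation. This works because \(T\) is deterministic and its run on \(\vec x\) can be replayed in polynomial space while counting output positions. The bound on the output length also rests on the halting, non-repeating configuration count. Matching bit order with the least-significant-first convention is bookkeeping.
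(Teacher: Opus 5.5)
Your proposal is correct and follows essentially the same route as the paper: you bound the output length by the configuration count of a halting transducer, recover an indexed bit by deterministic replay with a counter (first computing the length when bits are written most-significant-first), and for the converse emit the bits by re-evaluating \(\mathcal B\) under a polynomial-length counter. The paper fixes the most-significant-first convention, writing from the largest index down, which is one of the two cases you handle; in that case \(\mathcal B\) should return \(0\) for indices \(i\ge N\).
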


\begin{proof}
Consider the standard write-only-output model for \(\FPSPACE\)~\cite{Ladner1989}.
Apart from its output, a halting polynomial-space transducer has only exponentially many configurations.
Repeating one would make it loop, so its output has exponential length.
Replaying the computation first determines that length and then recovers any requested least-significant-bit index in polynomial space.
Conversely, evaluate \(\mathcal B(\vec x,i)\) from the largest index down to zero and write each returned bit.
The index counter and each evaluation use polynomial space, and leading zeros are immaterial.
\end{proof}

\section{Divisibility as a Compressor}
\label{sec:divisibility-compressor}

The constructions in this section define, with polynomially many constraints, multiplication by a power of two whose exponent is given in binary, and the bounded product of two variables.
They are the arithmetic core of the lower bound, and \cref{sec:poly-alc-fepad} uses them to translate circuit gates.

If a formula has free variables \(x,r\), its \emph{\((x,r)\)-projection} is the set of pairs that extend to values of its auxiliary existential variables satisfying the formula.
The power-of-two constructions in this section use linear equalities, divisibility atoms, and positivity and order constraints.

A \emph{homogeneous divisibility system} consists of divisibility atoms between homogeneous integer linear forms and homogeneous linear side conditions, that is, equalities and inequalities.
It is \emph{positive} when all variables range over positive integers.

\subsection{Number-Theoretic Bounds Used by the Power-of-Two Gadgets}
\label{subsec:number-theoretic-gadgets}

The next two lemmas are the number-theoretic base of the Fermat ladder.
The first turns divisibility between Mersenne numbers into divisibility between their exponents.
The second bounds how many exclusion atoms a stage needs.

\begin{lemma}[Mersenne divisibility]\label{lem:mersenne-div}
For positive integers \(u,v\), one has \(2^u-1\mid2^v-1\) if and only if \(u\mid v\).
\end{lemma}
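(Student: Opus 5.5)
The plan is to prove both directions through the identity \(2^{v}\equiv2^{v\bmod u}\pmod{2^u-1}\), which reduces the question to comparing remainders of exponents.

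For the direction \(u\mid v\Rightarrow 2^u-1\mid 2^v-1\), write \(v=ku\) with \(k\ge1\). Then the geometric-sum factorization
\[
 2^{ku}-1=(2^u-1)\bigl(1+2^u+2^{2u}+\cdots+2^{(k-1)u}\bigr)
\]
gives the divisibility at once. Equivalently, \(2^u\equiv1\pmod{2^u-1}\), so \(2^{ku}\equiv1\).

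For the converse, suppose \(2^u-1\mid2^v-1\). Divide \(v=ku+\rho\) with \(0\le\rho<u\). Since \(2^u\equiv1\pmod{2^u-1}\), we get \(2^v\equiv2^\rho\pmod{2^u-1}\), and therefore \(2^u-1\mid 2^\rho-1\). Now \(0\le2^\rho-1<2^u-1\), because \(\rho<u\). A nonnegative integer smaller than a positive divisor can be divisible by it only when it is zero. Hence \(2^\rho-1=0\), so \(\rho=0\) and \(u\mid v\).

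An alternative route is to prove \(\gcd(2^u-1,2^v-1)=2^{\gcd(u,v)}-1\) by running the Euclidean algorithm on the exponents. That argument has the same core, and the remainder version above is shorter.

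No step is a genuine obstacle. The only point needing care is the edge case \(u=1\), where \(2^u-1=1\) divides everything and \(1\mid v\) also holds. The argument handles it uniformly: in that case \(\rho=0\) is forced, so no separate treatment is needed.
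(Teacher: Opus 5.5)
Your proposal is correct and follows the paper's proof essentially verbatim: the geometric factorization of \((2^u)^k-1\) for the forward direction, and for the converse, division \(v=ku+\rho\) together with \(2^v-1\equiv2^\rho-1\pmod{2^u-1}\) and the bound \(0\le2^\rho-1<2^u-1\), which forces \(\rho=0\).
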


\begin{proof}
If \(v=ku\), factor \((2^u)^k-1\).
Conversely, write \(v=qu+r\) with \(0\le r<u\).
Modulo \(2^u-1\), one has \(2^v-1\equiv2^r-1\).
Divisibility then forces \(r=0\), since \(0\le2^r-1<2^u-1\).
\end{proof}

\begin{lemma}[LCM of a short interval]\label{lem:lcm-interval}
The least common multiple of any \(d+1\) consecutive positive integers is at least \(2^{d+1}/(d+2)\).
\end{lemma}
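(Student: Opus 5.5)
We reduce to the classical binomial-coefficient bound. Let the interval be \(n+1,\ldots,n+d+1\) with \(n\ge0\), and put \(L=\lcm(n+1,\ldots,n+d+1)\). The plan is to show that \(L\) is a multiple of \(\binom{n+d+1}{d+1}\cdot(d+1)\), or more simply that \(L\) is at least a quantity we can compare with \(2^{d+1}/(d+2)\). The standard identity is
\[
 \frac{1}{\binom{n+d+1}{d}}\cdot\frac{1}{n+1}
 \;=\;\int_0^1 t^{n}(1-t)^{d}\,dt
 \;=\;\sum_{j=0}^{d}(-1)^j\binom{d}{j}\frac{1}{n+1+j}.
\]
Here the left side equals \(\frac{n!\,d!}{(n+d+1)!}\) by the Beta integral. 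Multiplying the right-hand sum by \(L\) gives an integer, since each denominator \(n+1+j\) divides \(L\). Hence \(L\cdot\frac{n!\,d!}{(n+d+1)!}\) is a positive integer. It follows that \(\frac{(n+d+1)!}{n!\,d!}\mid L\), that is, \((n+1)\binom{n+d+1}{n+1}\mid L\), and in particular \(L\ge(n+1)\binom{n+d+1}{d}\).

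It remains to bound \((n+1)\binom{n+d+1}{d}\) from below by \(2^{d+1}/(d+2)\). This is not as direct as it looks: \(\binom{n+d+1}{d}\) grows with \(n\), but when \(n=0\) it equals \(1\) and the factor \(n+1\) is only \(1\). So the \(n=0\) case must be treated separately. In that case \(L=\lcm(1,\ldots,d+1)\), and we apply the same integral identity with the roles chosen so that it yields \(\lcm(1,\ldots,N)\ge m\binom{N}{m}\) for every \(1\le m\le N\). This is the Nair-type bound, obtained from \(\int_0^1 t^{m-1}(1-t)^{N-m}\,dt\). Taking \(N=d+1\) and \(m=\lceil N/2\rceil\) gives \(L\ge\binom{N}{\lfloor N/2\rfloor}\ge 2^N/(N+1)=2^{d+1}/(d+2)\), using that the central binomial coefficient is the largest of the \(N+1\) coefficients summing to \(2^N\).

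For \(n\ge1\) a cleaner route is monotonicity. The general version of the identity gives, for any \(n+1\le m\le n+d+1\), that \(m\binom{n+d+1}{m}\cdot\) (a suitable factor) divides \(L\). Simplest is to note that \(L\) is divisible by \(\frac{(n+d+1)!}{n!\,d!}=(d+1)\binom{n+d+1}{d+1}\), and that \(\binom{n+d+1}{d+1}\) is nondecreasing in \(n\). This gives \(L\ge(d+1)\binom{n+d+1}{d+1}\ge\binom{2d+2}{d+1}\) once \(n\ge d+1\), which already exceeds \(2^{d+1}\). For the remaining range \(1\le n\le d\), we have \(n+d+1\ge d+2\) and \(\binom{n+d+1}{d+1}=\binom{n+d+1}{n}\). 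The product \((d+1)\binom{n+d+1}{n}\) is then handled by comparing it with the \(n=0\) bound, or, more uniformly, by noting \(L\ge\lcm(1,\ldots,d+1)\)-type arguments fail, so I would instead use \(L\ge m\binom{n+d+1}{m}\) with \(m\) chosen near \((n+d+1)/2\) within the interval. Since \(n+1\le\lceil(n+d+1)/2\rceil\le n+d+1\) when \(n\le d\), this choice is admissible and gives \(L\ge 2^{n+d+1}/(n+d+2)\ge2^{d+1}/(d+2)\), because \(2^k/(k+1)\) is nondecreasing.

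The main obstacle is setting up the general divisibility \(m\binom{N}{m}\mid\lcm(n+1,\ldots,N)\) for \(n+1\le m\le N\). It comes from the partial-fraction expansion of \(\int_0^1t^{m-1}(1-t)^{N-m}dt=\frac{1}{m\binom{N}{m}}\), whose denominators are \(m,\ldots,N\), all lying inside the interval. Once that is checked, the case split \(m=\lceil(n+d+1)/2\rceil\) versus \(m=n+1\) (when the midpoint falls below the interval) finishes the argument via the central-binomial estimate.
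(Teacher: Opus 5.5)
Your proof is correct, but it takes a different route from the paper.

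\textbf{The paper's route.} It first reduces every interval to the initial one. Each \(k\le d+1\) has a multiple among any \(d+1\) consecutive integers, so \(\lcm(1,\ldots,d+1)\) divides the lcm of the interval. Legendre's formula then shows that every \(\binom{d+1}{k}\) divides \(\lcm(1,\ldots,d+1)\). The largest of these \(d+2\) coefficients is at least \(2^{d+1}/(d+2)\), since they sum to \(2^{d+1}\).

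\textbf{Your route.} You use the Beta-integral (Nair) identity to get \(m\binom{N}{m}\mid\lcm(m,\ldots,N)\) for \(N=n+d+1\) and any \(m\) inside the interval. You then split on where the interval sits. If \(n\le d\), the midpoint \(m=\lceil N/2\rceil\) is admissible and gives \(L\ge 2^{N}/(N+1)\ge 2^{d+1}/(d+2)\). If \(n\ge d+1\), taking \(m=n+1\) gives \(L\ge(d+1)\binom{n+d+1}{d+1}\ge\binom{2d+2}{d+1}\ge 2^{d+1}\).

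\textbf{Comparison.} Your argument avoids prime factorizations. It also gives a stronger bound in terms of the top of the interval when \(n\le d\). The paper's argument is uniform and needs no case analysis.

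\textbf{Two incorrect asides.} Neither affects your final argument.
\begin{itemize}
\item \(\binom{d+1}{d}=d+1\), not \(1\). Your conclusion that \(n=0\) needs separate treatment still stands, since \(d+1\) is too small.
\item Your remark that \(\lcm(1,\ldots,d+1)\)-type arguments fail is false. That comparison is exactly the paper's first step. Using it, your \(n=0\) bound \(\lcm(1,\ldots,N)\ge m\binom{N}{m}\) with \(m=\lceil N/2\rceil\) would finish the proof for every interval at once, with no case split.
\end{itemize}
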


\begin{proof}
Put \(n=d+1\).
Among \(n\) consecutive integers, every residue class modulo a prime power \(p^e\le n\) occurs at least once.
In particular, the interval contains a multiple of \(p^e\).
Its lcm is therefore divisible by \(\lcm(1,\ldots,n)\).
For each \(0\le k\le n\), the binomial coefficient \(\binom nk\) divides this lcm.
Indeed, by Legendre's formula, the exponent of a prime \(p\) in \(\binom nk\) is
\[
 \sum_{e\ge1}\left(
   \left\lfloor\frac n{p^e}\right\rfloor
  -\left\lfloor\frac k{p^e}\right\rfloor
  -\left\lfloor\frac{n-k}{p^e}\right\rfloor
 \right).
\]
Each summand is zero or one, and there are at most \(\lfloor\log_p n\rfloor\) nonzero terms.
Since the \(n+1\) binomial coefficients sum to \(2^n\), one is at least \(2^n/(n+1)\).
\end{proof}

The first lemma restricts a candidate ladder ratio through its exponent.
The second shows that a short interval of such restrictions cannot all divide a small residual exponent.
Together they let the construction exclude every incorrect ratio while writing only logarithmically many exclusions at a stage.

\subsection{Power-of-Two Scaling and Bounded Products}
\label{subsec:succinct-powers}

The next construction defines multiplication by \(2^{2^t}-1\) using polynomially many constraints.
Its variables form a chain whose successive ratios are forced to be the Fermat numbers \(2^{2^h}+1\).
Some atoms restrict a ratio to a Fermat number, and additional atoms exclude the incorrect choices.

As previewed in \cref{subsec:overview-divisibility}, writing the final multiplier would already take exponential space.
The ladder uses polynomial-bit coefficients to constrain the successive ratios and rules out the spurious ones without enumerating all earlier stages.

\begin{lemma}[Mersenne multiplier]\label{lem:mersenne-multiplier}
For every integer \(t\ge0\), there is a polynomial-size positive homogeneous divisibility system \(\Delta_t(x,r)\) whose \((x,r)\)-projection is \(\{(x,(2^{2^t}-1)x):x\in\Npos\}\).
Adding \(v=r+x\) gives projection \(\{(x,2^{2^t}x):x\in\Npos\}\).
\end{lemma}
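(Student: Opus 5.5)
We take \(\Delta_t(x,r)\) to be exactly the system of \cref{subsec:overview-divisibility}: the chain \(0<y_0<\cdots<y_t\) with \(y_0=x\), \(y_t=r\), the four atoms of \cref{eqn:fmat-atoms} at each stage \(i=t,\ldots,1\), and the exclusion atoms of \cref{eqn:exclusion} for \(0\le q\le D_h\), \(h=t-i\ge1\). It is positive and homogeneous by inspection. It has \(O(t\log t)\) atoms, since \(m_h=O(\log h)\). Every written coefficient \(2^{2^q}\pm1\) has \(q\le m_h=O(\log t)\), so it has \(\poly(t)\) bits. This gives polynomial size.

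\textbf{Completeness.} Put \(B_h=2^{2^h}\). Set \(y_i=(B_{t-i}-1)x\), which is positive and increasing. Then \(y_i=(B_h+1)y_{i-1}\) with \(h=t-i\), because \((B_{h+1}-1)=(B_h-1)(B_h+1)\). We check each atom by exhibiting an integer quotient, all computed in units of \(y_{i-1}\). Here \(y_i-y_{i-1}=B_h y_{i-1}\) and \(y_t+y_i=(B_{h+1}+B_h)\cdot\) the appropriate multiple, which is divisible since \(B_h\mid B_{h+1}\). Similarly \(y_i-2y_{i-1}=(B_h-1)y_{i-1}\) divides \(y_t=(B_t-1)x\) by \cref{lem:mersenne-div}. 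The fourth atom and the exclusions hold after a direct quotient computation. Each exclusion divisor \(y_i-(2^{2^q}+1)y_{i-1}=(B_h-2^{2^q})y_{i-1}\) is nonzero because \(q<h\), and it divides the right-hand side. If some of these checks fail as written, I would adjust the fourth and exclusion right-hand sides so that the intended quotients are integers; this does not affect soundness.

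\textbf{Soundness.} We prove the invariant \(y_t=(B_h-1)y_i\) by induction on \(h\), starting from \(h=0\). Assume it for \(h\) and write \(y_i=Qy_{i-1}\), with \(Q\) rational and \(Q>1\).

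The first atom makes \(Q\) an integer. The second atom gives \(y_i-y_{i-1}\mid (B_h)y_i\), so \((Q-1)\mid B_hQ\), hence \(Q-1\mid B_h\) by coprimality. Thus \(Q=2^a+1\) with \(0\le a\le 2^h\).

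The third atom gives \((Q-2)y_{i-1}\mid(B_h-1)Qy_{i-1}\), that is, \(2^a-1\mid(2^{2^h}-1)(2^a+1)\). Since \(\gcd(2^a-1,2^a+1)=1\), this forces \(2^a-1\mid 2^{2^h}-1\). By \cref{lem:mersenne-div}, \(a\mid 2^h\), so \(a=2^c\) with \(c\le h\).

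For \(c<h\), the exclusion atom with \(q=c\) reduces to \(Q-(2^{2^c}+1)=0\mid\) a nonzero quantity, which is false. This rules out every \(c\le D_h\).

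If \(D_h=h-1\), we are done. Otherwise \(D_h=m_h<c<h\) and \(c\) survives every written exclusion. The task is then to extract from the atoms that each \(c-j\), for \(0\le j\le m_h\), divides \(h-c\). This should come from the invariants of the later stages: the earlier ratios are forced as Fermat numbers, so the residual \(y_t/y_{i-1}\) must be compatible with the exponent structure, via \cref{lem:mersenne-div} applied to the third atom at the next stage. Then \cref{lem:lcm-interval} gives \(\lcm\ge 2^{m_h+1}/(m_h+2)>h>h-c\), which is a contradiction.

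Hence \(Q=B_h+1\), and \(y_t=(B_h-1)(B_h+1)y_{i-1}=(B_{h+1}-1)y_{i-1}\), which advances the invariant. At \(h=t\) this gives \(r=(2^{2^t}-1)x\). Adding \(v=r+x\) yields the second projection.

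\textbf{Main obstacle.} The hard step is the divisibility chain \(c-j\mid h-c\) that justifies writing only logarithmically many exclusions. This has to be derived from the interaction with the next stage (the third atom at stage \(i-1\) with divisor \(2^{2^{c'}}-1\)). I would first try to prove the invariant in strengthened form, stating which Fermat candidates remain admissible at each stage, and then apply \cref{lem:lcm-interval}. As a fallback that loses only polynomial size, I would write all \(h\) exclusions \(q<h\). This gives \(O(t^2)\) atoms with coefficients of up to \(2^{t}\) bits, which is still exponential, so the fallback fails. The logarithmic argument is therefore genuinely needed.
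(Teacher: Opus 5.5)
Your choice of system, the size count, and the soundness steps up to \(Q=2^{2^c}+1\) with \(c\le h\) match the paper. There is, however, a genuine gap exactly where you flag it. You never establish \(c-j\mid h-c\) for \(0\le j\le m_h\), and the source you propose for it cannot work.

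At the last stage \(i=1\) there is no later stage. Apart from \(0<y_0<y_1\), no constraint outside stage~\(1\) mentions \(y_0\). So a spurious ratio \(y_1/y_0\) can only be excluded by the atoms of stage~\(1\) itself. More generally, once a wrong ratio is chosen at stage \(i\), the invariant you would need at stage \(i-1\) is already false, so the induction cannot use it.

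The missing idea is that the written exclusion atoms with \(q\ne c\) are not vacuous. Substitute \(y_i=(2^{2^c}+1)y_{i-1}\) and \(y_t=(2^{2^h}-1)y_i\). For \(q\le m_h<c\), the exclusion atom then reads
\[
 (2^{2^c}-2^{2^q})\,y_{i-1}\ \mid\ (2^{2^h}-2^{2^q})(2^{2^c}+1)\,y_{i-1}.
\]
The factor \(2^{2^c}-2^{2^q}\) is coprime to the odd number \(2^{2^c}+1\). Indeed, a common odd prime \(p\) would make the order of \(2\) modulo \(p\) equal to \(2^{c+1}\). That order would also divide \(2^c-2^q\), whose \(2\)-adic valuation is \(q<c+1\), which is impossible. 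Hence \(2^{2^c}-2^{2^q}\mid 2^{2^h}-2^{2^q}\). Cancel \(2^{2^q}\), apply \cref{lem:mersenne-div}, cancel \(2^q\), and apply \cref{lem:mersenne-div} again. This gives \(c-q\mid h-q\), that is, \(c-q\mid h-c\). Letting \(q\) range over \(0,\ldots,m_h\) produces the \(m_h+1\) consecutive divisors of \(h-c\) that your \(\lcm\) step needs. So each written exclusion does double duty: it kills \(c=q\) outright, and for \(c>q\) it forces \(c-q\mid h-c\).

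This also shows why your hedge of adjusting the exclusion right-hand sides ``without affecting soundness'' is unsafe. The specific form \(y_t+(1-2^{2^q})y_i\) is exactly what makes the case \(q\ne c\) informative.

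There are two smaller slips. First, your completeness assignment \(y_i=(B_{t-i}-1)x\) is reversed: it gives \(y_0=r\) and \(y_t=x\), and it is decreasing. You want \(y_t=(B_{t-i}-1)y_i\), that is, \(y_i=\prod_{q=t-i}^{t-1}(2^{2^q}+1)\,x\). Second, the candidate \(Q=2\) (your \(a=0\)) must be discarded separately, because \(\gcd(2^a-1,2^a+1)=1\) fails there. In that case the third atom becomes \(0\mid y_t\) with \(y_t>0\), which is false.
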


\begin{proof}
Use variables \(0<y_0<y_1<\cdots<y_t\), with \(y_0=x\) and \(y_t=r\).
When \(t=0\), these conditions reduce to \(r=x\).
For \(i=t,t-1,\ldots,1\), put \(h=t-i\) and include
\begin{align}
 y_i+y_{i-1}&\mid y_t+3y_i,\label{eq:stage-merge}\\
 y_{i-1}&\mid y_i,\label{eq:stage-a}\\
 y_i-y_{i-1}&\mid y_t+y_i,\label{eq:stage-b}\\
 y_i-2y_{i-1}&\mid y_t.\label{eq:stage-c}
\end{align}
For \(h\ge1\), let \(m_h\) be the least nonnegative integer satisfying \(2^{m_h+1}/(m_h+2)>h\), put \(D_h=\min(m_h,h-1)\), and, for \(0\le q\le D_h\), include
\begin{equation}\label{eq:exclusion-atom}
 y_i-(2^{2^q}+1)y_{i-1}
 \ \mid\
 y_t+(1-2^{2^q})y_i.
\end{equation}
Atom~\eqref{eq:stage-merge} is not needed for the projection asserted in this lemma.
It permits a finite-quotient replacement that fixes \(y_{i-1}\) relative to \(y_i\) once \(y_t/y_i\) is fixed.
Indeed, if \(y_t=cy_i\) with \(c>0\), then its quotient satisfies
\[
 q=\frac{(c+3)y_i}{y_i+y_{i-1}},
 \qquad 0<q<c+3,
\]
and the quotient equation gives \(y_{i-1}=(c+3-q)y_i/q\).
Thus there are finitely many integer quotients, and each consistent choice fixes the required ratio.

\proofparagraph{Soundness}
We prove, for \(i=t,t-1,\ldots,1\), that \(y_t=(B-1)y_i\), where \(B=2^{2^{t-i}}\), implies \(y_i=(B+1)y_{i-1}\).
The hypothesis is an identity when \(i=t\).
By \eqref{eq:stage-a}, write \(y_i=Qy_{i-1}\) with \(Q\ge2\).
Equation~\eqref{eq:stage-b} gives \(Q-1\mid BQ\), and \(\gcd(Q-1,Q)=1\), so \(Q-1\mid B\).
Thus \(Q=2^s+1\) for some \(s\ge0\).
The case \(s=0\) contradicts \eqref{eq:stage-c}.
Hence \(s>0\).
From \eqref{eq:stage-c} and \(\gcd(2^s-1,2^s+1)=1\), we obtain \(2^s-1\mid B-1=2^{2^{t-i}}-1\).
By \cref{lem:mersenne-div}, \(s\mid2^{t-i}\), so \(s=2^a\) for some \(0\le a\le h=t-i\), and \(Q=2^{2^a}+1\).
If \(h=0\), then \(a=h\).

Assume \(h>0\) and \(a<h\).
If \(a\le D_h\), the exclusion atom with \(q=a\) has zero left-hand side and positive right-hand side, a contradiction.
Thus \(a>D_h\).
Fix \(0\le q\le D_h\).
The coefficient \(2^{2^a}-2^{2^q}\) is coprime to \(2^{2^a}+1\).
Indeed, if an odd prime \(p\) divided both, then \(2^{2^a}\equiv-1\pmod p\), so the multiplicative order of \(2\) modulo \(p\) would be \(2^{a+1}\).
The congruence \(2^{2^a}\equiv2^{2^q}\pmod p\) would instead make that order divide \(2^a-2^q=2^q(2^{a-q}-1)\), which is impossible.
The exclusion atom therefore implies \(2^{2^a}-2^{2^q}\mid2^{2^h}-2^{2^q}\).
Cancelling the common factor \(2^{2^q}\) gives \(2^{2^a-2^q}-1\mid 2^{2^h-2^q}-1\), and \cref{lem:mersenne-div} yields \(2^a-2^q\mid 2^h-2^q\).
After cancelling \(2^q\), another application of \cref{lem:mersenne-div} gives \(a-q\mid h-q\), and hence \(a-q\mid h-a\).
If \(D_h=h-1\), then \(a>D_h\) contradicts \(a<h\).
Otherwise \(D_h=m_h\), and every integer in \(a-m_h,\ldots,a\) divides the positive integer \(h-a<h\).
By \cref{lem:lcm-interval}, their least common multiple exceeds \(h\), a contradiction.
Thus \(a=h\), and \(y_i=(B+1)y_{i-1}\) and \(y_t=(B^2-1)y_{i-1}\).
After stage \(i=1\), this gives \(y_t=(2^{2^t}-1)y_0\).

\proofparagraph{Completeness}
Conversely, fix \(x=y_0>0\) and set \(y_i=(2^{2^{t-i}}+1)y_{i-1}\) for \(1\le i\le t\).
The identity \(\prod_{q=0}^{h-1}(2^{2^q}+1)=2^{2^h}-1\) shows that, at stage \(i\), with \(h=t-i\) and \(B=2^{2^h}\), one has \(y_t=(B-1)y_i\) and \(y_i=(B+1)y_{i-1}\).
The quotient in \eqref{eq:stage-merge} is then \((y_t+3y_i)/(y_i+y_{i-1})=B+1\).
The quotients in \eqref{eq:stage-a}--\eqref{eq:stage-c} are also \(B+1\).
In an exclusion atom, both sides contain the factor \(2^{2^h}-2^{2^q}\), and the quotient is \(2^{2^h}+1\).
Thus every positive \(x\) extends to a solution.

\proofparagraph{Size}
The formula has polynomial size.
For \(h\ge1\), the integer \(\lceil2\log_2(h+2)\rceil+2\) satisfies the defining inequality for \(m_h\), so \(m_h=O(\log h)\).
There are \(O(t\log t)\) exclusion atoms, and the largest coefficient \(2^{2^q}\) has polynomial binary length.
\end{proof}

\begin{remark}\label{rem:mersenne-exclusions}
At stage \(h\), explicitly excluding every incorrect Fermat ratio would require coefficients \(2^{2^q}\) for values of \(q\) close to \(h\).
Those coefficients have exponential binary length in \(h\), even though there are only \(h\) possible ratios.
The construction writes exclusions only for \(q=O(\log h)\), where the coefficients have polynomial binary length.
The interval-lcm bound in \cref{lem:lcm-interval} then rules out all remaining smaller ratios simultaneously.
\end{remark}

For an arbitrary binary exponent, compose one ladder for each set bit and multiply the corresponding factors along a chain.
This gives a polynomial-size scaling system without writing the resulting power of two.

\begin{lemma}[Succinct positive power scaling]\label{lem:positive-scaling}
Given \(e\ge0\) in binary, there is a polynomial-size positive homogeneous divisibility system whose \((u,v)\)-projection is \(\{(u,2^eu):u\in\Npos\}\).
\end{lemma}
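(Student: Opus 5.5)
The plan is to chain one Mersenne-multiplier gadget from \cref{lem:mersenne-multiplier} for each set bit of \(e\).
Let \(J=\{j:e_j=1\}\), listed as \(j_1<\cdots<j_s\); here \(s\le\len(e)\) and every \(j\in J\) satisfies \(j<\len(e)\).
If \(e=0\), the system consists of the single equality \(v=u\), which is homogeneous.
Otherwise, introduce positive variables \(z_0,z_1,\ldots,z_s\) and \(r_1,\ldots,r_s\) with \(z_0=u\) and \(z_s=v\).
For each \(1\le\ell\le s\), include a copy of \(\Delta_{j_\ell}(z_{\ell-1},r_\ell)\) whose auxiliary variables are fresh, together with the homogeneous equality \(z_\ell=r_\ell+z_{\ell-1}\).
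The result is a conjunction of positive homogeneous divisibility systems and homogeneous equalities, hence again a positive homogeneous divisibility system.

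For correctness I would argue projection by projection, by induction along the chain.
By \cref{lem:mersenne-multiplier}, any solution has \(r_\ell=(2^{2^{j_\ell}}-1)z_{\ell-1}\), so \(z_\ell=2^{2^{j_\ell}}z_{\ell-1}\).
Inductively, \(z_s=2^{\sum_\ell 2^{j_\ell}}u=2^eu\).
Conversely, fix any \(u\in\Npos\) and define \(z_\ell\) and \(r_\ell\) by these formulas; all values are positive.
The completeness half of \cref{lem:mersenne-multiplier} then extends each \((z_{\ell-1},r_\ell)\) to a solution of its own gadget.
Because the gadgets share only their interface variables and their auxiliaries are disjoint, these extensions combine into one solution.
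Hence the \((u,v)\)-projection is exactly \(\{(u,2^eu):u\in\Npos\}\).

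For size, each \(\Delta_{j}\) with \(j<\len(e)\) has \(O(j\log j)\) atoms, and its coefficients \(2^{2^q}\) have \(q=O(\log j)\), so their bit length is polynomial in \(j\).
Summing over at most \(\len(e)\) gadgets gives size polynomial in \(\len(e)\), and the construction is clearly polynomial-time computable from the binary representation of \(e\).
The only step requiring any care is the one just made, namely keeping the auxiliary variables of the gadgets disjoint so that their projections compose. I expect no real obstacle, since all the number theory is already handled by \cref{lem:mersenne-multiplier}.
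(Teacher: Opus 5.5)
Your proposal is correct and matches the paper's proof. Both use \(v=u\) for \(e=0\), and otherwise chain one fresh copy of the Mersenne-multiplier system per set bit of \(e\) through \(z_\ell=r_\ell+z_{\ell-1}\) from \(u\) to \(v\), with at most \(\len(e)\) gadgets of polynomial size; your remark that disjoint auxiliaries let the gadget solutions combine makes explicit a point the paper leaves implicit.
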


\begin{proof}
For \(e=0\), use \(v=u\).
Assume \(e>0\) and let \(J=\{j:e_j=1\}\) for the binary expansion \(e=\sum_{j\in J}2^j\).
Put \(s=|J|\), enumerate \(J=\{j_0,\ldots,j_{s-1}\}\), and introduce a chain \(u=z_0,z_1,\ldots,z_s=v\).
At step \(a\in\{0,\ldots,s-1\}\), take a fresh copy of \cref{lem:mersenne-multiplier} with endpoint \(r_a\), forcing \(r_a=(2^{2^{j_a}}-1)z_a\), and add \(z_{a+1}=r_a+z_a\).
The chain forces \(v=2^eu\).
Since \(s=|J|\le\len(e)\), there are at most \(\len(e)\) steps, each of size polynomial in \(\len(e)\).
\end{proof}

\begin{lemma}[Signed power-of-two scaling]\label{lem:signed-scaling}
Given \(e\ge0\) in binary, there is a polynomial-size EPAD formula \(\Scale_e(A,B)\) over free integer variables \(A,B\) such that \(\Scale_e(A,B)\) holds exactly when \(B=2^eA\).
\end{lemma}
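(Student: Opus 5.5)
The plan is to reduce the signed statement to the positive scaling system of \cref{lem:positive-scaling}, writing \(A\) as a difference of positive integers as the overview suggests. Let \(\Pi_e(u,v)\) be the positive homogeneous system from \cref{lem:positive-scaling}, with its auxiliary variables existentially quantified and each of them constrained to be positive, so that \(\Pi_e(u,v)\) holds over \(\Z\) exactly when \(u>0\) and \(v=2^eu\). Define
\[
 \Scale_e(A,B)\;\equiv\;\exists u_1,u_2,v_1,v_2\;\bigl(A=u_1-u_2\wedge B=v_1-v_2\wedge\Pi_e(u_1,v_1)\wedge\Pi_e(u_2,v_2)\bigr),
\]
with the two copies of \(\Pi_e\) using disjoint fresh auxiliary variables.

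For soundness, any witness gives \(v_1=2^eu_1\) and \(v_2=2^eu_2\). Hence \(B=v_1-v_2=2^e(u_1-u_2)=2^eA\).

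For completeness, given \(A\in\Z\), choose \(u_1=|A|+1\) and \(u_2=|A|+1-A\). Both are positive and \(u_1-u_2=A\). Then put \(v_i=2^eu_i\) and take auxiliary witnesses for \(\Pi_e\), which exist by \cref{lem:positive-scaling}. The equality \(B=2^eA\) then makes \(B=v_1-v_2\) hold.

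The size is twice that of \(\Pi_e\) plus a constant number of atoms, so it is polynomial in \(\len(e)\). The case \(e=0\) needs no separate treatment, since \(\Pi_0\) is just \(v=u\); alternatively one can set \(\Scale_0(A,B)\equiv B=A\) directly.

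I expect no real obstacle. The only point needing care is that the positive systems are stated over \(\Npos\), while EPAD variables range over \(\Z\). The positivity constraints must therefore be written explicitly as inequalities \(z>0\) for every variable of each copy. Without them, \cref{lem:positive-scaling} would not apply: in particular, the divisibility atoms could be satisfied degenerately, for example with zero divisors.
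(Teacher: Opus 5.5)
Your proposal is correct and is essentially the paper's proof: the paper likewise writes \(A=A^+-A^-\) and \(B=B^+-B^-\) with positive variables, and uses two copies of \cref{lem:positive-scaling} to force \(B^{\pm}=2^eA^{\pm}\). Your explicit witness \(u_1=|A|+1\), \(u_2=|A|+1-A\) and your remark that positivity must be imposed by explicit inequalities fill in details the paper leaves implicit.
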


\begin{proof}
Introduce positive variables \(A^+,A^-,B^+,B^-\), impose \(A=A^+-A^-\) and \(B=B^+-B^-\), and use two copies of \cref{lem:positive-scaling} to force \(B^+=2^eA^+\) and \(B^-=2^eA^-\).
Every integer is a difference of two positive integers, so the construction is complete as well as sound.
\end{proof}

The positive version keeps every variable positive and every constraint homogeneous.
The signed version permits arbitrary integer inputs, including zero, as required for auxiliary calculations in the circuit translation.

A bounded product can be defined by two divisibility atoms that share a quotient.
The bounds ensure that the second atom fixes this quotient to be the other operand, rather than merely constraining it modulo the divisor.

\begin{lemma}[Quotient-sharing product]\label{lem:product-gadget}
Let \(U,V,W\) be integer affine forms satisfying \(|U|<M_U\) and \(|W|<M_W\).  Let \(h\ge0\) be an integer and put \(\Lambda=2^h\), with \(\Lambda>M_U+M_W+1\).
Multiples by \(\Lambda\) are represented by fresh variables constrained with \cref{lem:signed-scaling}.
Under these bounds, the atoms
\begin{equation}\label{eq:quotient-sharing}
 V\mid W,
 \qquad
 1+\Lambda V\mid U+\Lambda W
\end{equation}
are equivalent to \(W=UV\).
\end{lemma}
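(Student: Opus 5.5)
The plan is to prove the two directions separately, splitting the forward direction on whether \(V=0\). For the converse, assume \(W=UV\). The quotient in \(V\mid W\) is \(U\), and \(U+\Lambda W=U(1+\Lambda V)\), so the second atom holds with the same quotient \(U\). This covers \(V=0\) as well: then \(W=0\), the first atom reads \(0\mid0\), and the second reads \(1\mid U\). The auxiliary variables representing \(\Lambda V\) and \(\Lambda W\) are assigned their intended values, which \cref{lem:signed-scaling} allows.

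For the forward direction, first use \cref{lem:signed-scaling} to replace the fresh variables by \(\Lambda V\) and \(\Lambda W\) exactly. Consider first the case \(V=0\). Under the ordinary integer semantics fixed after \cref{def:epad-sat}, the atom \(0\mid W\) forces \(W=0=UV\). The second atom \(1\mid U\) is then trivially true, so this case is complete.

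Now suppose \(V\ne0\). Write \(W=qV\) with \(q\in\Z\), which exists by the first atom. Then \(\Lambda W=q\Lambda V\equiv-q\pmod{1+\Lambda V}\), since \(\Lambda V\equiv-1\). The second atom therefore gives \(1+\Lambda V\mid U-q\).

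The key step is to bound both sides of this divisibility. Because \(V\) is a nonzero integer, \(|1+\Lambda V|\ge\Lambda|V|-1\ge\Lambda-1\). On the other side, \(|q|=|W|/|V|\le|W|<M_W\), so \(|U-q|<M_U+M_W<\Lambda-1\), using the hypothesis \(\Lambda>M_U+M_W+1\). A multiple of \(1+\Lambda V\) that is smaller in absolute value than \(|1+\Lambda V|\) must be zero. Hence \(q=U\) and \(W=UV\).

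This is the one place needing care. A negative \(V\) only gives \(|1+\Lambda V|\ge\Lambda-1\) rather than \(\Lambda+1\), and the strict inequality \(\Lambda>M_U+M_W+1\) is exactly what absorbs that loss. I would also note explicitly that the divisibility reasoning modulo \(1+\Lambda V\) is valid because \(1+\Lambda V\ne0\), which holds since \(\Lambda\ge2\) exceeds \(1\).
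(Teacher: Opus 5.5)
Your proposal is correct and follows essentially the same argument as the paper. Both handle \(V=0\) via the ordinary semantics of \(0\mid W\), reduce the second atom to \(1+\Lambda V\mid U-q\) modulo \(1+\Lambda V\), and conclude from \(|U-q|<M_U+M_W<\Lambda-1\le|1+\Lambda V|\).
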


\begin{proof}
If \(W=UV\), both atoms hold.
Conversely, if \(V=0\), then \(V\mid W\) forces \(W=0\).
If \(V\ne0\), write \(W=qV\).
The bound on \(W\) gives \(|q|<M_W\).
Modulo \(1+\Lambda V\), the second atom gives \(1+\Lambda V\mid U-q\).
Since \(|U-q|<M_U+M_W<\Lambda-1\le|1+\Lambda V|\), we obtain \(U=q\) and \(W=UV\).
\end{proof}

The gadget expresses the only nonlinear gate relation by equating two divisibility quotients.
Its numerical bounds are essential: without them the second atom would determine \(U-q\) only modulo \(1+\Lambda V\).
For circuit translation, the declared word-width supplies those bounds uniformly.

\section{Arithmetic Logic Circuits: \texorpdfstring{\(\FPSPACE\)}{FPSPACE} and \texorpdfstring{\(\FEXP\)}{FEXP}}
\label{sec:poly-alc}

To prove \(\polyALC\subseteq\FPSPACE\), we evaluate a requested output bit using polynomial space.
For the converse inclusion, we convert a polynomial-space indexed-bit predicate to a QBF and represent its truth table in a large integer.
Arithmetic operations on this representation evaluate all index assignments simultaneously.

\subsection{From Polynomial-Size ALCs to \texorpdfstring{\(\FPSPACE\)}{FPSPACE}}
\label{subsec:alc-to-fpspace}

A requested output bit is recovered by a polynomial-space recursion that never materializes an entire exponentially wide word.

\begin{proposition}\label{prop:alc-to-fpspace}
Every polynomial-time uniform polynomial-size ALC family computes a function in \(\FPSPACE\).
\end{proposition}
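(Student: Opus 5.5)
By \cref{lem:indexed-bits}, it suffices to exhibit a polynomial \(p\) with \(C_n(\vec x)<2^{2^{p(n)}}\) and a \(\PSPACE\) predicate \(\mathcal B(\vec x,g,i)\) returning bit \(i\) of the value \(Y_g\) of gate \(g\).

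The size bound is immediate. Every gate value is below \(2^{W(n)}\), and \(W(n)\le2^{\poly(n)}\). Uniformity also gives \(W(n)\) in binary in polynomial time, so bits at indices \(i\ge W(n)\) are simply zero.

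We compute \(\mathcal B(\vec x,g,i)\) by a recursive procedure over the polynomial-size circuit, one case per gate type. Each call stores only \(O(\poly(n))\)-bit quantities: a gate index, a bit index below \(W\), and small counters.

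\begin{itemize}
\item \emph{Inputs and constants.} Read the value directly.
\item \emph{Left shift} \(g=u\ll_W s\). The answer is bit \(i-s\) of \(Y_u\) if \(s\le i<W\), and \(0\) otherwise.
\item \emph{Right shift} \(g=u\gg s\). The answer is bit \(i+s\) of \(Y_u\) if \(i+s<W\), and \(0\) otherwise.
\item \emph{Addition.} Scan positions \(j=0,\ldots,i\), maintaining a single carry bit and querying bit \(j\) of both operands recursively.
\item \emph{Multiplication.} Bit \(i\) of \(Y_uY_v\bmod 2^W\) depends only on the partial products \(Y_u^{(a)}Y_v^{(b)}\) with \(a+b\le i\). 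Scan column \(j=0,\ldots,i\). For each column, sum \(\sum_{a\le j}Y_u^{(a)}Y_v^{(j-a)}\) plus the incoming carry, by iterating \(a\) and querying both operand bits recursively. The carry is at most about \(j+1\le 2^{\poly(n)}\), so it fits in polynomially many bits. Output the parity of column \(i\), and pass \(\lfloor\cdot/2\rfloor\) of each column sum to the next column as its carry.
\end{itemize}

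The space analysis is the main point. Each recursive call has a frame of polynomial size. The recursion depth is at most the circuit depth, which is at most \(S(n)=\poly(n)\), because every recursive call goes to a predecessor with a strictly smaller index. The total space is therefore depth times frame size, which is polynomial.

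Running time may be exponential because calls are recomputed rather than stored. That is acceptable, since only space is bounded.

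I expect the multiplication carry bound to be the main obstacle. The step to check carefully is that a column sum of at most \(j+1\) bits plus the carry stays below \(2(j+1)\) by induction, so each carry has \(O(\log W)=\poly(n)\) bits.

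Finally, applying \(\mathcal B\) to the output gate, whose index is supplied by the uniformity algorithm, gives the indexed-bit predicate. Lemma~\ref{lem:indexed-bits} then places the function in \(\FPSPACE\).
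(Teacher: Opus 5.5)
Your proposal is correct and follows the paper's proof essentially step for step. Both recursively evaluate a requested gate bit through the uniformity algorithm, handle shifts by index arithmetic, use column scans for addition and multiplication with the carry bounded by the column index, and bound the stack depth by the number of gates. The only cosmetic difference is that you conclude via \cref{lem:indexed-bits}, whereas the paper has the transducer write the \(W\) output bits directly from most to least significant.
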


\begin{proof}
Fix an input of length \(n\), write \(W=W(n)\), and fix a requested output-bit position \(i<W\).
Whenever the recursion reaches a gate index, obtain its local description from the uniformity algorithm of \cref{def:alc}, and evaluate the requested gate bit recursively.
Input and constant bits are read directly.
A left shift by offset \(s\) replaces index \(i\) by \(i-s\), returning zero if \(i<s\).
A right shift by offset \(s\) replaces it by \(i+s\), returning zero if \(i+s\ge W\).

For addition modulo \(2^W\), scan columns zero through \(i\) and maintain the one-bit carry.
Higher columns cannot affect bit \(i\).
For a multiplication gate with predecessor words \(u,v\), let \(u_j,v_j\) denote their recursively obtained bits.  At column \(t\), compute \(w_t=\sum_{j=0}^{t}u_jv_{t-j}\) and maintain \(c_0=0\) and \(c_{t+1}=\lfloor(c_t+w_t)/2\rfloor\).
Since \(w_t\le t+1\), induction gives \(c_t\le t\), so the carry has \(O(\log(t+1))\) bits.
The requested product bit is \((c_i+w_i)\bmod 2\).

Gate indices, bit indices, and carries have polynomial length, and the recursion stack has depth at most the number of gates.
Each output bit is therefore computable in polynomial space.
The machine writes the \(W\) output bits from most significant to least significant, using the binary representation of \(W\) as its counter bound.
\end{proof}

The recursion may recompute a predecessor bit many times and therefore take exponential time, but it uses polynomial space because it never stores an intermediate \(W\)-bit word.
Thus exponential word-width does not itself require exponential workspace.

\subsection{From \texorpdfstring{\(\FPSPACE\)}{FPSPACE} to
Polynomial-Size ALCs by Packed Truth Tables}
\label{subsec:fpspace-to-alc}

A packed truth table places the values of a Boolean function at widely separated bit positions.
The proof needs to retain, from two packed tables, the products of entries belonging to the same assignment.
We use this operation for pointwise conjunction, restriction of a variable, and the final placement of the output bits.

The input and output tables of an operation may use different scales.
This freedom is essential because each multiplication creates cross terms around the entries that must survive.
We choose the input scale after the desired output positions are known.
An explicit separation bound keeps cross terms and carries outside the retained block.

For \(\vec a=(a_0,\ldots,a_{k-1})\in\bits^k\), let \(\enc(\vec a)=\sum_{i=0}^{k-1}a_i5^i\), the integer whose base-five digits are \(a_0,\ldots,a_{k-1}\), from least to most significant.
For a Boolean function \(F\colon\bits^k\to\bits\) and a positive integer \(R\), define its packed truth table at scale \(R\) by
\[
 \langle F\rangle_R
 =\sum_{\vec a\in\bits^k}F(\vec a)2^{R\enc(\vec a)}.
\]
Thus the bit at position \(R\enc(\vec a)\) is \(F(\vec a)\).

Let \(F,G\colon\bits^k\to\bits\), let \(R>0\) be an integer, and let \(\vec\delta=(\delta_0,\ldots,\delta_{k-1})\in\N^k\).
Define
\begin{equation}\label{eq:matching-product}
 \mathcal M_{R,\vec\delta}(F,G)
 =\langle F\rangle_R\langle G\rangle_R
  \prod_{i=0}^{k-1}\left(2^{2R5^i}+2^{\delta_i}\right).
\end{equation}
This product contains terms arising from every pair of assignments to \(F\) and \(G\).
The next lemma identifies the consecutive bits that retain exactly the pairs in which the two assignments are equal.
For an integer \(x\), write \(\operatorname{block}_{h,d}(x)= \lfloor x/2^h\rfloor\bmod 2^d\) for the \(d\) consecutive bits of \(x\) beginning at position \(h\).
\begin{lemma}[Diagonal extraction]\label{lem:diagonal-extraction}
Let \(F,G\colon\bits^k\to\bits\), let \(R>0\) be an integer, let \(\vec\delta\in\N^k\), and put \(h=R(5^k-1)/2\).
If \(R>\sum_{i=0}^{k-1}\delta_i+3k+1\), then
\[
 \operatorname{block}_{h,R}
   \bigl(\mathcal M_{R,\vec\delta}(F,G)\bigr)
 =\sum_{\vec a\in\bits^k}F(\vec a)G(\vec a)
  2^{\sum_{i=0}^{k-1}a_i\delta_i}.
\]
\end{lemma}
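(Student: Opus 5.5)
Expanding the product \eqref{eq:matching-product} gives a sum of nonnegative terms indexed by triples \((\vec a,\vec b,\vec z)\in(\bits^k)^3\) with \(F(\vec a)=G(\vec b)=1\):
\[
 T_{\vec a,\vec b,\vec z}
 =2^{R\sum_i(a_i+b_i+2z_i)5^i}\cdot 2^{\sum_i(1-z_i)\delta_i}.
\]
I would write the first factor as \(2^{RN}\) with \(N=\sum_i d_i5^i\), where \(d_i=a_i+b_i+2z_i\in\{0,\ldots,4\}\). The map \((a_i,b_i,z_i)\mapsto d_i\) is not injective, but \(N\) is a base-five numeral with digits \(d_i\), so distinct digit vectors give distinct \(N\). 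The target position \(h=R(5^k-1)/2=R\sum_i2\cdot5^i\) corresponds to the all-twos digit vector. Digit \(2\) arises either from \(a_i=b_i=1,z_i=0\) or from \(a_i=b_i=0,z_i=1\), and never from \(a_i\ne b_i\), because then \(a_i+b_i=1\) is odd. Hence the terms with \(N=(5^k-1)/2\) are exactly those with \(\vec a=\vec b\) and \(z_i=1-a_i\). Each such term equals \(2^h\cdot2^{\sum_ia_i\delta_i}\) with multiplicity one, and together they contribute exactly \(2^h\) times the claimed right-hand side.

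Next I would group all terms by the value of \(N\). The group with index \(N\) is \(2^{RN}\cdot c_N\), where \(c_N\) is a sum of at most \(8^k\) terms, each at most \(2^{\sum_i\delta_i}\). So \(c_N<2^{\sum\delta_i+3k+1}\le 2^{R}\) by the hypothesis (using \(8^k=2^{3k}\)). This is the key estimate, and I expect it to be the main point needing care: the slack \(3k+1\) must absorb the multiplicity \(8^k\) together with a factor \(2\). With it, every group value fits in \(R\) bits, so the groups occupy pairwise disjoint bit windows \([RN,RN+R)\) and no carries pass between them.

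The integer \(\mathcal M\) is therefore \(\sum_N c_N2^{RN}\) with \(0\le c_N<2^R\), which is a base-\(2^R\) representation. Extracting the block of \(R\) bits at position \(h=R\cdot N_0\), where \(N_0=(5^k-1)/2\), returns exactly \(c_{N_0}\). By the first paragraph this is \(\sum_{\vec a}F(\vec a)G(\vec a)2^{\sum a_i\delta_i}\). For completeness I would note that this value is itself below \(2^R\), which follows from the same bound, so no truncation occurs.
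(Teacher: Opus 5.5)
Your proof is correct and follows essentially the same route as the paper. It uses the same expansion over \((\vec a,\vec b,\vec z)\), the same base-five digit argument that isolates the diagonal terms at position \(h\), and the same estimate \(2^{3k}\cdot2^{\sum_i\delta_i}<2^R\); the only cosmetic difference is that the paper bounds all terms below \(h\) in one aggregate sum and notes the rest start at \(h+R\), whereas you show every \(R\)-bit window is carry-free and so obtain a full base-\(2^R\) decomposition.
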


\begin{proof}
Expand \(\mathcal M_{R,\vec\delta}(F,G)\).
A term is determined by \(\vec a,\vec b,\vec z\in\bits^k\), where \(\vec a\) and \(\vec b\) select entries from the two packed tables and \(z_i\) records the choice of \(2^{2R5^i}\) from the \(i\)-th factor in \eqref{eq:matching-product}.
Its exponent is \(Rq+r\), where
\[
 q=\sum_{i=0}^{k-1}(a_i+b_i+2z_i)5^i,
 \qquad
 r=\sum_{i=0}^{k-1}(1-z_i)\delta_i.
\]
Every base-five digit of \(q\) lies in \(\{0,1,2,3,4\}\), so these digits produce no carries.
Since \((5^k-1)/2=\sum_{i=0}^{k-1}2\cdot5^i\), we have \(Rq=h\) exactly when \(a_i=b_i\) and \(z_i=1-a_i\) for every \(i\).
For those terms, \(r=\sum_i a_i\delta_i\), and their sum is the right-hand side of the claimed identity.

It remains to check that no other term contributes to the block beginning at \(h\).
If \(Rq<h\), then the exponent is at most \(h-R+\sum_i\delta_i\).
There are at most \(2^{3k}\) terms, and the assumed inequality prevents their sum from carrying into position \(h\).
If \(Rq>h\), then the exponent is at least \(h+R\), beyond the extracted block.
Finally, the retained sum is at most \(2^{k+\sum_i\delta_i}<2^R\), so it fits within that block.
\end{proof}

The lemma is an arithmetic analogue of projecting a matrix onto its diagonal.
The base-five address identifies the diagonal terms, while the inequality on \(R\) protects the corresponding binary block from neighboring terms and carries.
The construction uses only multiplication and shifts available in an ALC.

\begin{corollary}[Pointwise product]\label{lem:pointwise}
Let \(F,G\colon\bits^k\to\bits\), let \(R,R'>0\) be integers, define \(\vec\delta\in\N^k\) by \(\delta_i=R'5^i\), and put \(h=R(5^k-1)/2\).
If \(R>R'(5^k-1)/4+3k+1\), then
\[
 \operatorname{block}_{h,R}
   \bigl(\mathcal M_{R,\vec\delta}(F,G)\bigr)
 =\langle F\wedge G\rangle_{R'}.
\]
\end{corollary}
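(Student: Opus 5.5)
This follows directly from \cref{lem:diagonal-extraction} with the specific offsets \(\delta_i=R'5^i\). The plan is to check the hypothesis of that lemma, and then recognize its right-hand side as a packed truth table.

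First, compute the offset sum:
\[
 \sum_{i=0}^{k-1}\delta_i=R'\sum_{i=0}^{k-1}5^i=R'\frac{5^k-1}{4}.
\]
The assumption \(R>R'(5^k-1)/4+3k+1\) is therefore exactly the condition \(R>\sum_i\delta_i+3k+1\) required by \cref{lem:diagonal-extraction}. The block position \(h=R(5^k-1)/2\) is the same in both statements. Note that \(R'(5^k-1)/4\) is an integer, since \(4\mid 5^k-1\), so no rounding issue arises.

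Second, apply the lemma. It yields
\[
 \operatorname{block}_{h,R}\bigl(\mathcal M_{R,\vec\delta}(F,G)\bigr)=\sum_{\vec a\in\bits^k}F(\vec a)G(\vec a)\,2^{\sum_i a_i\delta_i}.
\]
The exponent satisfies \(\sum_i a_iR'5^i=R'\enc(\vec a)\). Since \(F,G\) are Boolean, \(F(\vec a)G(\vec a)=(F\wedge G)(\vec a)\). Hence the sum equals \(\sum_{\vec a}(F\wedge G)(\vec a)\,2^{R'\enc(\vec a)}=\langle F\wedge G\rangle_{R'}\) by the definition of the packed table.

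No step is a genuine obstacle. The only point needing care is the exact match of the strict inequality with the lemma's hypothesis, which the geometric-sum identity settles.
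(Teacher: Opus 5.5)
Your proposal is correct and matches the paper's proof. Both apply \cref{lem:diagonal-extraction} with \(\delta_i=R'5^i\), identify the exponent \(\sum_i a_i\delta_i\) with \(R'\enc(\vec a)\), and use \(F(\vec a)G(\vec a)=(F\wedge G)(\vec a)\); your geometric-sum check that the hypothesis on \(R\) is exactly the lemma's hypothesis is a step the paper leaves implicit.
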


\begin{proof}
In \Cref{lem:diagonal-extraction}, the exponent attached to \(\vec a\) is \(R'\enc(\vec a)\), and \(F(\vec a)G(\vec a)=(F\wedge G)(\vec a)\).
\end{proof}

\begin{corollary}[Restriction]\label{lem:restriction}
Let \(F\colon\bits^{k+1}\to\bits\), let \(R,R'>0\) be integers, write its last argument as \(y\), and fix \(b\in\bits\).
Put \(F_b(\vec a)=F(\vec a,b)\), and let \(\chi_b(\vec a,y)\) be one when \(y=b\) and zero otherwise.
Define \(\vec\delta\in\N^{k+1}\) by \(\delta_i=R'5^i\) for \(i<k\) and \(\delta_k=0\), and put \(h=R(5^{k+1}-1)/2\).
If \(R>R'(5^k-1)/4+3(k+1)+1\), then
\[
 \operatorname{block}_{h,R}
   \bigl(\mathcal M_{R,\vec\delta}(F,\chi_b)\bigr)
 =\langle F_b\rangle_{R'}.
\]
\end{corollary}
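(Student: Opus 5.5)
The plan is to apply \cref{lem:diagonal-extraction} directly, with \(k+1\) variables in place of \(k\) and with \(G=\chi_b\). With the stated \(\vec\delta\), the hypothesis of that lemma needs \(R>\sum_{i=0}^{k}\delta_i+3(k+1)+1\). Here \(\sum_{i<k}R'5^i+0=R'(5^k-1)/4\), which is exactly the assumed inequality, and \(h=R(5^{k+1}-1)/2\) matches the lemma's block position for \(k+1\) variables. The lemma therefore gives
\[
 \operatorname{block}_{h,R}\bigl(\mathcal M_{R,\vec\delta}(F,\chi_b)\bigr)
 =\sum_{(\vec a,y)\in\bits^{k+1}}F(\vec a,y)\chi_b(\vec a,y)\,2^{\sum_{i<k}a_iR'5^i+y\cdot0}.
\]

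Next, I simplify the sum. The factor \(\chi_b(\vec a,y)\) vanishes unless \(y=b\), so only the terms with \(y=b\) survive, and there \(F(\vec a,b)=F_b(\vec a)\). Because \(\delta_k=0\), the last coordinate contributes nothing to the exponent. The exponent is therefore \(R'\sum_{i<k}a_i5^i=R'\enc(\vec a)\). The sum becomes \(\sum_{\vec a\in\bits^k}F_b(\vec a)2^{R'\enc(\vec a)}=\langle F_b\rangle_{R'}\), which is the claim.

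I expect no real obstacle here; the only care is bookkeeping. First, the lemma must be instantiated at dimension \(k+1\), which fixes the block position and the \(3(k+1)\) term. Second, the choice \(\delta_k=0\) collapses the restricted coordinate, so distinct \(\vec a\) still land at distinct positions \(R'\enc(\vec a)\) and no two surviving terms collide. Since \(y\) is fixed to \(b\), each \(\vec a\) contributes exactly one term, so no carries arise among the retained terms.
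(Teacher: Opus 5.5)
Your proposal is correct and follows the paper's proof: you instantiate \cref{lem:diagonal-extraction} at dimension \(k+1\) with \(G=\chi_b\), note that \(\sum_i\delta_i=R'(5^k-1)/4\) makes its hypothesis coincide with the assumed inequality, and then let \(\chi_b\) remove the terms with \(y\ne b\) while \(\delta_k=0\) drops the fixed coordinate from the exponent. The paper also writes out \(\langle\chi_b\rangle_R\) as an explicit product, which is what the ALC construction later uses. Your closing remark about collisions and carries among retained terms is unnecessary: the lemma already gives an exact integer identity, and its right-hand side is by definition \(\langle F_b\rangle_{R'}\).
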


\begin{proof}
The packed table used in \(\mathcal M_{R,\vec\delta}(F,\chi_b)\) is
\[
 \langle\chi_b\rangle_R
 =\prod_{i=0}^{k-1}(1+2^{R5^i})
 \begin{cases}
  1,&b=0,\\
  2^{R5^k},&b=1.
 \end{cases}
\]
Apply \Cref{lem:diagonal-extraction}.
The factor \(\chi_b(\vec a,y)\) removes the terms with \(y\ne b\), and \(\delta_k=0\) removes the fixed coordinate from the output positions.
\end{proof}

Pointwise product supplies conjunction of packed predicates.
Restriction supplies substitution of a quantified variable by each Boolean value.
OR is obtained arithmetically from the two restricted Boolean tables.
These are the operations needed to evaluate a quantified formula from its innermost subformula outward.

For the converse inclusion, an indexed-bit predicate is first represented by a polynomial-size QBF\@.
The circuit builds one packed table for each subformula, eliminates quantified variables by restriction, and finally extracts the block containing the output bits.
The remaining obligations are to choose all scales uniformly and to keep the gate list polynomial.

\begin{theorem}\label{thm:fpspace-to-alc}
Every function in \(\FPSPACE\) is computed by a polynomial-time uniform family of polynomial-size ALCs.
\end{theorem}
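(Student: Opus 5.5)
Fix \(f\in\FPSPACE\) and apply \cref{lem:indexed-bits} to obtain a polynomial \(p\) and a \(\PSPACE\) predicate \(\mathcal B(\vec x,i)\) for the bits of \(f_n(\vec x)\) with \(i<2^{p(n)}\). Writing \(i\) in binary as \(\vec a\in\bits^{p(n)}\), the standard uniform \(\PSPACE\)-to-QBF reduction gives a polynomial-time constructible prenex formula \(\Psi_n(\vec x,\vec a)=Q_0y_0\cdots Q_{q-1}y_{q-1}\,\varphi_n\) with \(\varphi_n\) quantifier-free, which we may take in negation normal form or as a list of binary \(\wedge\)/\(\neg\) gates. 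The circuit treats \(\vec x\) as ordinary Boolean inputs and packs the \(k=p(n)+q\) variables \(\vec a,\vec y\) into tables. For every subformula it maintains the pair \((\langle G\rangle_R,\langle\neg G\rangle_R)\) at a prescribed scale. The leaves are built as follows: for a packed variable, the all-ones product \(\prod_{i<k}(1+2^{R5^i})\) with the \(i\)th factor replaced by \(1\) or \(2^{R5^i}\); for an input literal \(x_t\), the all-ones table multiplied by the gate \(x_t\) or by \(\bar x_t=1-x_t\), formed with the shift trick from the overview; and the constants are handled directly. Conjunction applies \cref{lem:pointwise} to the true tables. Its complement is \(\neg(G\wedge H)\), which I would obtain as the sum \(\neg G\wedge H+G\wedge\neg H+\neg G\wedge\neg H\) of three pointwise products; these have disjoint supports, so no carries arise. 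Negation swaps the pair.

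Quantifiers are eliminated from the innermost outward. For \(\exists y\), \cref{lem:restriction} gives \(\langle G_0\rangle,\langle G_1\rangle\) and their negations over one fewer variable. The false table is \(\neg G_0\wedge\neg G_1\), a pointwise product, and the true table is obtained from the same disjoint three-term sum. \(\forall\) is dual. Every application of \cref{lem:diagonal-extraction} is realized by ALC gates: multiply the two tables by the alignment constant \(\prod_i(2^{2R5^i}+2^{\delta_i})\), which is built from shifts of \(1\) and additions, then extract \(\operatorname{block}_{h,R}\) by a right shift by \(h\) followed by a left shift by \(W-R\) and a right shift by \(W-R\). All shift offsets are given in binary and have polynomial length. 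After all quantifiers are eliminated, one more diagonal extraction, pairing the remaining table over \(\vec a\) with the all-ones table and using \(\delta_j=2^j\), places \(\mathcal B(\vec x,i)\) at bit \(i\) and yields \(f_n(\vec x)\).

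The main obstacle is choosing the scales and the word width consistently and uniformly. The scales must be fixed backwards from the output. The final step needs input scale \(R_{\mathrm{fin}}>2^{p(n)}+3p+1\). Each earlier operation needs an input scale exceeding roughly \(5^{k}\) times its output scale, where \(k\le p+q\). With polynomially many operations in sequence, say depth \(d\), the scales grow like \(2^{O(d(p+q))}\cdot 2^{p}\). Their logarithms are therefore polynomial, so they have polynomial-length binary descriptions computable in polynomial time. I would simplify by using a single scale per level of the formula tree, and possibly a single geometric sequence \(R_j=c^{j}R_{\mathrm{fin}}\) with one large \(c\), so that the local gate descriptions are easy to emit. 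The width \(W\) is then chosen as a power of two exceeding \(3\cdot5^{k}R_{\max}\), which bounds every intermediate product, so that arithmetic modulo \(2^W\) never wraps. I would check carefully that both the three-term sums and the final output fit in their blocks; this follows because all summands are \(0/1\) entries at disjoint positions. Counting gates gives \(O(k)\) gates per table operation and polynomially many operations, so the family is polynomial-size and polynomial-time uniform.
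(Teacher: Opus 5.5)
Your proposal is correct and follows the paper's proof essentially step for step: indexed bits become a prenex QBF, you keep paired true/false packed tables, use \cref{lem:pointwise} with disjoint three-term sums for connectives and \cref{lem:restriction} for quantifiers, choose scales backwards from the output with polynomial-length logarithms, and finish with a diagonal extraction using \(\delta_j=2^j\). Pairing with the all-ones table where the paper pairs \(F_{\Psi_n}^1\) with itself is immaterial. One detail needs care: your per-level scales should advance only at product steps, not at negation swaps, and should be applied to \(\varphi_n\) as a formula tree rather than as a gate list with fan-out, since otherwise a shared gate would be requested at two different scales; the paper avoids this by building each subformula occurrence once at its own requested scale.
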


\begin{proof}
We first describe the arithmetic calculation, then choose a common word-width \(W\) large enough for its intermediate values.
Ordinary sums and products are realized without overflow.
Bit extraction and complementation of Boolean inputs use fixed-width shifts, with truncation made explicit.
For any nonnegative exponent \(E\) needed by the construction, the circuit forms \(2^E\) as \(1\ll_W E\).
For nonnegative integers \(h,d\) with \(h+d\le W\), it extracts the \(d\)-bit block of an intermediate value \(x\) beginning at position \(h\) by
\begin{equation}\label{eq:block-extraction}
 \bigl((x\gg h)\ll_W(W-d)\bigr)\gg(W-d).
\end{equation}

Let \(f=(f_n)_n\in\FPSPACE\), with \(f_n\colon\bits^n\to\N\).
By \Cref{lem:indexed-bits}, fix a polynomial \(p\) and the corresponding \(\PSPACE\) indexed-bit predicate.
Using the \(p(n)\)-bit binary representation of the index, with the least significant bit first, write this predicate as \(\mathcal B(\vec x,\vec a)\), where \(\vec a\in\bits^{p(n)}\).

The polynomial-space-to-QBF reduction \cite[Chapter~19]{Papadimitriou1994} gives, uniformly in \(n\), a polynomial-size prenex quantified Boolean formula with free variables \(\vec x,\vec a\) and \(q=\poly(n)\) quantified variables.  Writing \(\vec y=(y_0,\ldots,y_{q-1})\), it has the form
\[
 \Psi_n(\vec x,\vec a)
 =Q_0y_0\cdots Q_{q-1}y_{q-1}\,\varphi_n(\vec x,\vec a,\vec y).
\]
It is true exactly when \(\mathcal B(\vec x,\vec a)=1\).
We pack the variables in the order \(\vec a,y_0,\ldots,y_{q-1}\).
The variables \(\vec x\) remain ALC inputs.
We eliminate \(y_{q-1},\ldots,y_0\) in this order, so the variable removed at each step is the last variable indexing the current table.

For a subformula \(\theta\) and \(b\in\bits\), let \(F_\theta^b\) be the Boolean function of the variables indexing the current table, with \(\vec x\) treated as parameters, that equals \(1\) exactly when \(\theta\) has truth value \(b\).
The construction produces \(\langle F_\theta^0\rangle_R\) and \(\langle F_\theta^1\rangle_R\) at any requested output scale \(R\).
At a connective or quantifier, it chooses one common larger scale for the operand tables, large enough to satisfy the relevant inequality in \Cref{lem:pointwise,lem:restriction}.
Taking the right-hand side plus one suffices.
Hence the operand tables used at one step all have the same scale.

\proofparagraph{Literals}
Suppose the current table is indexed by \(v_0,\ldots,v_{k-1}\).
Its all-ones table at scale \(R\) is
\[
 A_{k,R}=\sum_{\vec\xi\in\bits^k}2^{R\enc(\vec\xi)}
 =\prod_{i=0}^{k-1}(1+2^{R5^i}).
\]
For the literal \(v_j\),
\[
 \langle F_{v_j}^1\rangle_R
 =2^{R5^j}\prod_{i\ne j}(1+2^{R5^i}),
 \qquad
 \langle F_{v_j}^0\rangle_R
 =\prod_{i\ne j}(1+2^{R5^i}).
\]
For an ALC input bit \(x_t\), put \(\bar x_t=((x_t+_W 1)\ll_W(W-1))\gg(W-1)\).
Since \(x_t\in\bits\), this value is \(1-x_t\).
Thus
\[
 \langle F_{x_t}^1\rangle_R=x_tA_{k,R},
 \qquad
 \langle F_{x_t}^0\rangle_R=\bar x_tA_{k,R}.
\]
\proofparagraph{Boolean connectives}
The characteristic functions satisfy
\[
 \begin{aligned}
 F_{\alpha\wedge\beta}^1&=F_\alpha^1F_\beta^1,
 &
 F_{\alpha\wedge\beta}^0
 &=F_\alpha^0F_\beta^0+F_\alpha^0F_\beta^1+F_\alpha^1F_\beta^0,\\
 F_{\alpha\vee\beta}^0&=F_\alpha^0F_\beta^0,
 &
 F_{\alpha\vee\beta}^1
 &=F_\alpha^0F_\beta^1+F_\alpha^1F_\beta^0+F_\alpha^1F_\beta^1.
 \end{aligned}
\]
Products here are pointwise products of Boolean functions.
Each is implemented by \Cref{lem:pointwise} at the requested output scale.
For each assignment, at most one summand in either three-term sum is \(1\).
Adding the corresponding packed tables therefore produces no carry and gives the packed table of the sum.
Negation satisfies \(F_{\neg\alpha}^b=F_\alpha^{1-b}\).

\proofparagraph{Quantifiers}
Suppose \(y\) is the last variable indexing the table for \(\theta\).
\Cref{lem:restriction} gives the packed tables of \(\vec\xi\mapsto F_\theta^c(\vec\xi,b)\) for \(b,c\in\bits\).
The required characteristic functions are
\[
 \begin{aligned}
 F_{\exists y\,\theta}^0(\vec\xi)
 &=F_\theta^0(\vec\xi,0)F_\theta^0(\vec\xi,1),\\
 F_{\exists y\,\theta}^1(\vec\xi)
 &=F_\theta^0(\vec\xi,0)F_\theta^1(\vec\xi,1)
  +F_\theta^1(\vec\xi,0)F_\theta^0(\vec\xi,1)
  +F_\theta^1(\vec\xi,0)F_\theta^1(\vec\xi,1),\\
 F_{\forall y\,\theta}^1(\vec\xi)
 &=F_\theta^1(\vec\xi,0)F_\theta^1(\vec\xi,1),\\
 F_{\forall y\,\theta}^0(\vec\xi)
 &=F_\theta^0(\vec\xi,0)F_\theta^0(\vec\xi,1)
  +F_\theta^0(\vec\xi,0)F_\theta^1(\vec\xi,1)
  +F_\theta^1(\vec\xi,0)F_\theta^0(\vec\xi,1).
 \end{aligned}
\]
As in the connective case, \Cref{lem:pointwise} implements each product.
For each assignment, at most one summand in a three-term sum is \(1\), so adding the packed tables produces no carry.

If a table indexed by \(k\) variables is requested at scale \(R'\), either inequality in \Cref{lem:pointwise,lem:restriction} can be met with a scale \(R\) satisfying \(\log R\le\log R'+O(k)\).
The QBF has polynomial size, so the recursion has polynomial depth and \(k=\poly(n)\) at every step.
Consequently every scale used by the construction has a polynomial-length binary representation and is computable in polynomial time.

\proofparagraph{Output}
Choose the requested scale of the final table so that \(R>2^{p(n)}+3p(n)\).
After eliminating all quantified variables, the construction gives
\[
 \langle F_{\Psi_n}^1\rangle_R
 =\sum_{\vec a\in\bits^{p(n)}}
 \mathcal B(\vec x,\vec a)2^{R\enc(\vec a)}.
\]
Define \(\vec\delta\in\N^{p(n)}\) by \(\delta_j=2^j\), and put \(h=R(5^{p(n)}-1)/2\).
By \Cref{lem:diagonal-extraction},
\[
 \operatorname{block}_{h,R}
 \bigl(\mathcal M_{R,\vec\delta}(F_{\Psi_n}^1,F_{\Psi_n}^1)\bigr)
 =\sum_{\vec a\in\bits^{p(n)}}
 \mathcal B(\vec x,\vec a)2^{\sum_{j=0}^{p(n)-1}a_j2^j}
 =f_n(\vec x).
\]
The output has at most \(2^{p(n)}\) bits by \Cref{lem:indexed-bits}.
The circuit therefore extracts the \(2^{p(n)}\)-bit block beginning at \(h\) using \eqref{eq:block-extraction}.

Each subformula occurrence is constructed once at its requested scale, and forming \(\mathcal M_{R,\vec\delta}(F,G)\) uses \(O(k)\) gates.
Since the QBF has polynomial size, the resulting ALC has polynomial size.

It remains to choose \(W\).
All exponents, scales, and block endpoints used to describe the calculation have polynomial-length binary representations.
Along the polynomial-size acyclic calculation, addition increases bit length by at most one, multiplication adds the bit lengths of its inputs, and a left shift by \(E\) increases bit length by at most \(E\).
Therefore a value \(W\le2^{\poly(n)}\), with a polynomial-length binary representation, can be chosen larger than the bit length of every intermediate integer and larger than every block endpoint.
With this word-width, all preceding calculations are valid ALC operations.
The QBF reduction, the recursive scale choices, and the gate construction are polynomial-time uniform.
\end{proof}

The backward scale choices attach a polynomial-length binary parameter to each subformula occurrence without creating a gate for every truth-table entry.
Consequently the circuit has polynomially many gates even though its arithmetic values encode exponentially many entries.

\AlcFpspace*
\begin{proof}
Combine \cref{prop:alc-to-fpspace,thm:fpspace-to-alc}.
\end{proof}

The following standard fixed-width arithmetic and logic operations do not change \(\polyALC\).
They are definable from the four primitives.

\begin{proposition}[Completeness for standard fixed-width operations]
\label{prop:alc-standard-operations}
Let \(\polyALCstd\) be defined like \(\polyALC\), but with the following additional gate types: fixed-width subtraction, two's-complement negation, and unsigned quotient and remainder.
It also has bitwise negation, conjunction, disjunction, and exclusive disjunction, unsigned equality and order comparisons, and logical shifts and left or right rotations whose offset may be supplied by another operand.
Arithmetic results are reduced modulo \(2^W\), comparisons return \(0\) or \(1\), shifts by at least \(W\) return zero, and rotations use the offset modulo \(W\).
Under any fixed convention for division by zero, \(\polyALCstd=\polyALC\).
\end{proposition}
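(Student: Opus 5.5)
The inclusion \(\polyALC\subseteq\polyALCstd\) is immediate, since the four primitive gates are among the standard ones. For the converse I would avoid simulating each extended gate by a gadget inside the circuit and instead go through the characterization of \cref{thm:alc-fpspace}. It suffices to show \(\polyALCstd\subseteq\FPSPACE\), because \cref{thm:fpspace-to-alc} then returns a polynomial-size uniform family over the four primitives. To show this, I would extend the bit-recursion of \cref{prop:alc-to-fpspace}. The new argument is a recursive procedure that, given a gate index and a bit position \(i<W\), returns that bit of the gate's value in polynomial space. The recursion depth stays bounded by the number of gates, and each stack frame holds only polynomially many bits.

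The cases are handled in order of difficulty.
\begin{itemize}
\item Bitwise negation, conjunction, disjunction and exclusive disjunction query the same position in each predecessor.
\item Subtraction and two's-complement negation scan columns \(0,\ldots,i\) with a one-bit borrow, exactly as addition does.
\item Equality and unsigned comparison scan all \(W\) columns from the most significant end, stopping at the first difference. They use a counter of polynomial length, since \(W\le2^{\poly(n)}\) has a polynomial-length binary representation.
\item For shifts and rotations with an operand-supplied offset, the procedure first decides whether the offset word is at least \(W\). It compares the offset bit by bit with the binary representation of \(W\): if any bit at position \(\ge\len(W)\) is set, the offset is too large. Otherwise the offset fits in polynomially many bits and is read explicitly into the work tape, bit by bit.
\item For rotations, the offset modulo \(W\) is computed from those same bits, by scanning them with a running remainder below \(W\). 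The requested position is then redirected to \((i\pm s)\bmod W\).
\end{itemize}

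The main obstacle is unsigned quotient and remainder. For these I would use the known fact that integer division is computable in logarithmic space on explicit \(N\)-bit inputs (Chiu, Davida and Litow; or Hesse, Allender and Barrington, which even gives uniform \(\mathrm{TC}^0\)). Here \(N=W\) is exponential, so a \(\log\)-space algorithm uses \(O(\log W)=\poly(n)\) work space. Its input-tape reads are served by recursive bit queries to the predecessor gates. This is the standard composition of space-bounded computations with recomputation: the space bounds add up along the recursion, and the recursion depth is the number of gates. Division by zero is resolved by first testing whether the divisor word is zero, by a scan, and then outputting the bits prescribed by the fixed convention.

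If one prefers to avoid citing logspace division, the fallback is a polynomial-space long division from the most significant bit. That route must keep a \(W\)-bit running remainder, which is too large, so I would rather cite the \(\mathrm{L}\)-uniform division result. Stacking space-bounded procedures, each \(O(\log W)\) or polynomial, along a recursion of polynomial depth gives a polynomial-space computation of every output bit. \cref{lem:indexed-bits} then places the function in \(\FPSPACE\), and \cref{thm:fpspace-to-alc} completes the proof.
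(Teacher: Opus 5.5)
Your proposal is correct and follows the paper's route. You show \(\polyALCstd\subseteq\FPSPACE\) by recursive bit evaluation with recomputation, handle division through the Hesse--Allender--Barrington result (or any logspace division algorithm), and then return to the four primitives via \cref{lem:indexed-bits} and \cref{thm:fpspace-to-alc}. The only difference is presentational: the paper treats every extra gate at once as a locally uniform Boolean circuit of size \(W^{O(1)}\) and depth \((\log W)^{O(1)}\), whereas you give direct streaming procedures gate by gate, which is equally valid.
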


\begin{proof}
Every circuit over the original basis is also a circuit over the extended basis.
For the converse, consider a polynomial-time uniform polynomial-size family over the extended basis and, at input length \(n\), write \(W=W(n)\).
Standard locally uniform Boolean circuits of size \(W^{O(1)}\) and depth \((\log W)^{O(1)}\) implement every additional gate.
For division and remainder, use the uniform threshold circuits of Hesse, Allender, and Barrington~\cite{HesseAllenderBarrington2002,HesseAllenderBarrington2014}.
Recursively evaluating one requested output bit stores only polynomially many gate indices and counters, each of length \(O(\log W)\).
The enclosing ALC has polynomially many gates and \(\log W=\poly(n)\), so its indexed-output-bit predicate is in \(\PSPACE\).
The output has at most \(W\) bits.
Hence \Cref{lem:indexed-bits,thm:alc-fpspace} give an equivalent polynomial-time uniform polynomial-size ALC family over the original basis.
\end{proof}

\subsection{Exponential-Size ALCs and \texorpdfstring{\(\FEXP\)}{FEXP}}
\label{sec:exp-alc}

The characterization above uses the polynomial-size class \(\polyALC\).
The class \(\expALC\) from \cref{def:alc} changes only the size bound: the ALC operations, exponential word-width, and polynomial-time local uniformity condition are identical, but \(C_n\) may now have \(2^{\poly(n)}\) gates.
Thus no second notion of uniformity or circuit representation is introduced.

Like \(\FPSPACE\), the class \(\FEXP\) permits outputs of exponential length.
On inputs of length \(n\), an output may contain \(2^{\poly(n)}\) bits and thus represent an integer as large as \(2^{2^{\poly(n)}}-1\).
The additional computational power of \(\expALC\) comes from the exponentially larger number of gates, not from a wider word or a different uniformity condition.

\begin{proposition}\label{prop:expalc-fexp}
Polynomial-time uniform exponential-size ALC families compute exactly \(\FEXP\):
\[
 \expALC=\FEXP.
\]
\end{proposition}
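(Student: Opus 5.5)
We prove the two inclusions separately, following the pattern of \cref{thm:alc-fpspace} but with time in place of space.

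\textbf{From \(\expALC\) to \(\FEXP\).} Fix \(n\), and let \(S=S(n)\le2^{p(n)}\) and \(W=W(n)\le2^{p(n)}\) be returned by the uniformity algorithm. The transducer evaluates the circuit explicitly in topological order. For each gate index \(i<S\) it obtains the local description in time \(\poly(n)\) and computes the \(W\)-bit value from the already stored predecessor values. Storing all \(S\) words takes \(S\cdot W\le2^{2p(n)}\) cells. Each operation on \(W\)-bit words costs \(\poly(W)\) time: schoolbook multiplication is \(O(W^2)\), and a shift by \(s\) with \(s\) in binary is linear after comparing \(s\) with \(W\). Looking up a predecessor's stored word on a multitape machine costs time linear in the table size. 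The total is \(S\cdot(2^{2p(n)}+\poly(W))=2^{\poly(n)}\). Finally the output word is written most significant bit first.

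\textbf{From \(\FEXP\) to \(\expALC\).} Let \(f\in\FEXP\) be computed by a transducer \(T\) in time \(t(n)=2^{p(n)}\). We would first make the output access oblivious: the output has at most \(t(n)\) bits, and \(T\) can be modified to a single-tape machine with the output tape simulated as a track. The standard Cook--Levin tableau then gives a Boolean circuit of size \(2^{O(p(n))}\) computing every output bit. Its gates are indexed by a time step, a cell, and a local position, so the type and predecessors of a gate are computable from its binary index in polynomial time. This local uniformity is the key point. We take word width \(W=t(n)+1\) and realize each Boolean gate on words restricted to \(\{0,1\}\):
\begin{itemize}
\item AND is \(u\times_W v\);
\item NOT is the complement trick \(((u+_W1)\ll_W(W-1))\gg(W-1)\), which needs three ALC gates;
\item OR follows from AND and NOT.
\end{itemize}
Each Boolean gate therefore expands into a constant number of ALC gates, with indices computed arithmetically from the Boolean gate index.

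To assemble the output, let \(b_j\) be the Boolean output bit in position \(j<t(n)\). We form \(\sum_j b_j\ll_W j\) by a chain of \(t(n)\) additions, each taking \(b_j\) shifted by \(j\). Gate \(j\) of this chain has offset \(j\) and predecessors computable from \(j\), so the chain is also locally uniform. Since the output has fewer than \(W\) bits, no wraparound occurs. The integer read from the tableau is interpreted least significant bit first, with trailing blanks read as zeros.

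The main obstacle is preserving polynomial-time \emph{local} uniformity while keeping the gates in topological order with exponentially many of them. I would place them in blocks: first the inputs and constants, then the tableau gates in lexicographic (time, cell, subgate) order, then the packing chain. This makes every gate index an explicit affine function of its coordinates and ensures each predecessor index is strictly smaller. The other delicate point is reading the output from a tableau of a machine with a one-way output tape. I would handle it by simulating the output as a write-once track, whose final contents at time \(t(n)\) give \(b_j\) at cell \(j\).
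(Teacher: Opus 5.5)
Your proposal is correct and follows the paper's proof closely. For \(\expALC\subseteq\FEXP\) you evaluate the stored gate values in topological order; for the converse you build a locally uniform tableau circuit with AND as a product, NOT via the shift trick, and pack the output from shifted one-bit summands, using a chain of additions where the paper uses a balanced addition tree, which is equally valid and locally uniform. The one point to tighten is bit order: your packing reads cell \(j\) as bit \(j\), i.e.\ least significant bit first, which conflicts with the most-significant-first convention you use yourself in the first direction, so first normalize \(T\) at exponential-time cost, for instance to write exactly \(t(n)\) output bits padded with leading zeros as the paper does and then use offset \(t(n)-1-j\) for cell \(j\), or to write its output reversed.
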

\begin{proof}
\proofparagraph{From \(\expALC\) to \(\FEXP\)}
Let \((C_n)\) be an \(\expALC\) family, so \(S(n),W(n)\le2^{\poly(n)}\).
Compute \(S(n)\) and enumerate the gate indices \(0,\ldots,S(n)-1\) in topological order.
For each gate, query the uniformity algorithm for its local description and evaluate it from the stored predecessor values.
An addition, a shift, or a schoolbook multiplication on \(W(n)\)-bit words takes time polynomial in \(W(n)\).
Hence all \(S(n)\) gates are evaluated in \(2^{\poly(n)}\) time, proving \(\expALC\subseteq\FEXP\).

\proofparagraph{From \(\FEXP\) to \(\expALC\)}
Conversely, let a deterministic transducer run for \(T(n)\le2^{\poly(n)}\) steps.
The standard tableau construction gives a bounded-fan-in Boolean circuit of size \(T(n)^{O(1)}\) for its computation: a gate is indexed by a time, tape position, track, and a constant-size local subgate.
Its type and predecessors are computable from that index in polynomial time.
Include \(T(n)\) output cells, padded by leading zeros, and choose an ALC word-width \(W=W(n)\ge T(n)\).

A Boolean AND gate is the product of its one-bit inputs.
Boolean negation is computed by
\[
 \neg a=\bigl((a+_W 1)\ll_W(W-1)\bigr)\gg(W-1),
 \qquad a\in\bits.
\]
Thus every bounded-fan-in Boolean gate has a constant-size ALC simulation.
Pack the output bits \(b_0,\ldots,b_{W-1}\) as \(\sum_i(b_i\ll_W i)\) using a balanced addition tree.
The summands have disjoint supports, so the sum is the required word.
The resulting ALC has exponential size, and its local gate description remains computable in polynomial time.
Therefore \(\FEXP\subseteq\expALC\).
\end{proof}

Both inclusions use the same polynomial-time local-uniformity interface as \(\polyALC\).
At polynomial size that interface can be enumerated to print the whole circuit in polynomial time; at exponential size the same enumeration takes exponential time.

\begin{remark}[Whole-circuit uniformity]
The same proof also shows that \(\FEXP\) is exactly the class computed by exponential-size ALC families whose whole circuits can be generated in exponential time.
The stronger polynomial-time local uniformity of \(\expALC\) lets a polynomial-size Boolean circuit return any addressed clause of a Boolean formula expressing correct circuit evaluation.
This gives a succinct description of the evaluation constraints, whereas writing one constraint for each gate would take exponential space.
Section~\ref{sec:expalc-fepad-section} uses this description to obtain polynomial-size functional EPAD definitions.
\end{remark}

\section{Functional EPAD Compilation of ALCs}
\label{sec:poly-alc-fepad}

We translate each gate of a polynomial-size ALC into EPAD\@.
Addition uses a congruence, and right shift uses a quotient--remainder equation.
The other constructions needed for the translation define multiplication by a power of two whose exponent is given in binary and define the bounded product of two variables.
The resulting translation proves \(\polyALC\subseteq\FEPAD\).

The translation introduces one bounded integer variable for each circuit gate.
Once the predecessor values are fixed, the constraints for a gate determine its value uniquely.

The proof follows the circuit order.
At each gate it uses the scaling and product interfaces just established, then enforces the unique representative in the fixed-width output range.

\begin{theorem}\label{thm:alc-to-fepad}
For every ALC of word-width \(W\), one can construct an EPAD formula in time polynomial in the circuit description and in \(\len(W)\).
The formula defines the graph of the function computed by the circuit and is functional in the distinguished output.
Consequently, \(\polyALC\subseteq\FEPAD\).
\end{theorem}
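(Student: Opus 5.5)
The translation introduces one integer variable \(Y_g\) per gate \(g\) of the ALC and follows the gate order. Using \cref{lem:signed-scaling}, impose \(\Scale_W(1,P)\) and \(\Scale_{2W}(1,P_2)\), so \(P=2^W\) and \(P_2=2^{2W}\); each has size polynomial in \(\len(W)\). For every gate, impose \(0\le Y_g<P\). Input gates get \(Y_g=x_i\), and constant gates get \(Y_g=0\) or \(Y_g=1\). The formula \(\Phi(\vec x,y)\) existentially quantifies all gate variables and auxiliaries, and sets \(y=Y_{\mathrm{out}}\).

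The gates are compiled as follows.
\begin{itemize}
\item \emph{Addition} \(g=u+_Wv\): impose \(P\mid Y_u+Y_v-Y_g\). Together with the range bound, this forces the unique residue.
\item \emph{Left shift} \(g=u\ll_W s\): if \(s\ge W\), set \(Y_g=0\); this comparison is done when building the formula, in time polynomial in \(\len(s)+\len(W)\). Otherwise add a fresh \(T\) with \(\Scale_s(Y_u,T)\) and impose \(P\mid T-Y_g\).
\item \emph{Right shift} \(g=u\gg s\) with \(s<W\): add \(S,Q,R\) with \(\Scale_s(1,S)\), \(\Scale_s(Y_g,Q)\), \(Y_u=Q+R\) and \(0\le R<S\). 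This is Euclidean division by \(2^s\), so \(Y_g=\lfloor Y_u/2^s\rfloor\) is forced.
\item \emph{Multiplication} \(g=u\times_Wv\): add \(T\) with \(0\le T<P_2\), and apply \cref{lem:product-gadget} with \(U=Y_u\), \(V=Y_v\), \(W=T\), bounds \(M_U=P\), \(M_W=P_2\), and \(\Lambda=2^{2W+2}\). The condition \(\Lambda>P+P_2+1\) holds since \(4\cdot 2^{2W}>2^W+2^{2W}+1\). The multiples \(\Lambda Y_v\) and \(\Lambda T\) are introduced as fresh variables via \(\Scale_{2W+2}\). The lemma gives \(T=Y_uY_v\), and \(P\mid T-Y_g\) then gives \(Y_g=Y_uY_v\bmod P\).
\end{itemize}

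Correctness follows by induction along the topological order. Fix Boolean inputs \(\vec a\). Every gate constraint, once its predecessors' values are fixed, is satisfied by exactly one \(Y_g\) in \([0,P)\), namely the gate's value. Conversely, the true values together with suitable auxiliary witnesses satisfy all constraints: the scaling witnesses exist by completeness of \cref{lem:signed-scaling}, and the divisibility quotients exist for the true values. Hence \(\Phi(\vec a,b)\) holds iff \(b\) equals the circuit output. The output is unique even though the auxiliary witnesses, such as the positive/negative splits in \(\Scale\), need not be.

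For size, each gate contributes \(O(1)\) atoms plus a constant number of \(\Scale_e\) instances. Here \(e\in\{s,W,2W,2W+2\}\) has binary length polynomial in the circuit description and \(\len(W)\). By \cref{lem:positive-scaling,lem:mersenne-multiplier}, each instance has size polynomial in \(\len(e)\), and it is constructible in that time since the exclusion coefficients \(2^{2^q}\) have \(q=O(\log\len(e))\). For a \(\polyALC\) family, the uniformity algorithm outputs the polynomially many gate descriptions and \(W\le2^{\poly(n)}\) in polynomial time, so the formulas are polynomial-time uniform and polynomial-size, giving \(\polyALC\subseteq\FEPAD\).

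The main obstacle is the multiplication gate. There the nonlinearity must be captured exactly, and the bound bookkeeping (range constraints on \(T\) and the choice of \(\Lambda\) without ever writing it) is what makes \cref{lem:product-gadget} applicable. The degenerate case \(Y_v=0\) is also handled by ordinary divisibility \(0\mid T\iff T=0\).
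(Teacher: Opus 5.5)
Your proposal is correct and follows the paper's proof essentially step for step. It uses the same scaled constants \(P=2^W\) and \(P_2=2^{2W}\), the same per-gate constraints (a congruence for addition, \(\Scale_s\) plus reduction modulo \(P\) for left shift, Euclidean division by \(2^s\) for right shift, and the quotient-sharing gadget with \(\Lambda=2^{2W+2}\) for multiplication), and the same induction along the topological order.

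The paper also imposes \(0\le x_i\le 1\) on input gates. This is irrelevant for the \(\FEPAD\) claim, since inputs are Boolean by definition, but it is what makes the formula literally define the graph over all integer inputs. Your omission of the \(s\ge W\) case for right shifts is harmless, because your division constraints then force \(Y_g=0\) anyway.
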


\begin{proof}
Let \(C\) be an ALC of word-width \(W\)\@, and write its Boolean inputs as \(x_0,\ldots,x_{n-1}\).
Generate variables \(P=2^W\) and \(P_2=2^{2W}\) from the constant one by \cref{lem:signed-scaling}.
For every gate \(g\), introduce a value variable \(Y_g\) and impose \(0\le Y_g<P\).
Boolean input gates are identified with the corresponding \(x_i\) and constrained by \(0\le x_i\le1\).
Constant gates are fixed to zero or one.

\proofparagraph{Addition}
For an addition gate \(g=u+_W v\), impose \(P\mid Y_u+Y_v-Y_g\).
The bound \(0\le Y_g<P\) forces \(Y_g\) to be the unique remainder of \(Y_u+Y_v\) modulo \(P\).

\proofparagraph{Multiplication}
For a multiplication gate \(g=u\times_W v\), quantify \(T\) with \(0\le T<P_2\).
Apply \cref{lem:product-gadget} to force \(T=Y_uY_v\), using operand bound \(2^W\), product bound \(2^{2W}\), and \(\Lambda=2^{2W+2}\).
Then impose \(P\mid T-Y_g\).
The bound \(0\le Y_g<P\) makes \(Y_g\) the unique remainder of \(T\) modulo \(P\).

\proofparagraph{Left shift}
For a left shift \(g=u\ll_W s\), if \(s\ge W\), impose \(Y_g=0\).
Otherwise, quantify \(T\), use \(\Scale_s(Y_u,T)\), and impose \(P\mid T-Y_g\).
Since \(s<W\) and \(Y_u<P\), one has \(T<P_2\).

\proofparagraph{Right shift}
For a right shift \(g=u\gg s\), if \(s\ge W\), impose \(Y_g=0\).
Otherwise, generate \(P_s=2^s\), quantify \(R,Q\), and impose \(\Scale_s(Y_g,Q)\), \(Y_u=R+Q\), and \(0\le R<P_s\).
This is the Euclidean decomposition of \(Y_u\) by \(2^s\), and therefore forces \(Y_g=\lfloor Y_u/2^s\rfloor\).

Identify the formula output \(y\) with the value of the distinguished circuit gate.
Induction over the circuit shows that every gate value is uniquely determined.
Each gate contributes polynomially many constraints, and every exponent in the circuit description has polynomial binary length.
For a polynomial-time uniform polynomial-size family, the formula generator enumerates the polynomially many gate indices and obtains each local gate description from the uniformity algorithm of \cref{def:alc}; hence the resulting EPAD family is polynomial-time uniform.
\end{proof}

Functionality is stronger than equisatisfiability and is needed for composition.
An EPAD formula can therefore use the circuit output as an exponentially long constant without admitting a second value for the same inputs.
The gate-by-gate uniqueness argument preserves that property through the translation.

\begin{corollary}\label{cor:fpspace-fepad}
One has \(\FPSPACE\subseteq\FEPAD\).
\end{corollary}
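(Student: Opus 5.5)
The inclusion follows by composing the two earlier results. Let \(f=(f_n)_n\in\FPSPACE\). By \cref{thm:fpspace-to-alc}, equivalently the inclusion \(\FPSPACE\subseteq\polyALC\) in \cref{thm:alc-fpspace}, there is a polynomial-time uniform family \((C_n)_n\) of polynomial-size ALCs with word-width \(W(n)\le2^{\poly(n)}\) such that \(C_n\) computes \(f_n\) on \(\bits^n\). Then apply \cref{thm:alc-to-fepad} to each \(C_n\). It yields an EPAD formula \(\Phi_n(\vec x,y)\), constructible in time polynomial in the size of the description of \(C_n\) and in \(\len(W(n))\). This formula defines the graph of the function computed by \(C_n\) and is functional in \(y\).

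Two points need checking.

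\textbf{Uniformity and size.} The description of \(C_n\) has polynomial size, since it has polynomially many gates, each with polynomial-length indices and shift offsets. The value \(\len(W(n))\) is also polynomial. The generator of \cref{thm:alc-to-fepad} enumerates gates through the uniformity algorithm of \cref{def:alc}. So the map \(1^n\mapsto\Phi_n\) is polynomial-time computable, and \(\Phi_n\) has polynomial size.

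\textbf{Functionality as required by \cref{def:fepad}.} For every \(\vec a\in\bits^n\) and every \(b\in\Z\), we need \(\Phi_n(\vec a,b)\) to hold if and only if \(b=f_n(\vec a)\). There are two directions.
\begin{itemize}
\item Existence of a witness: set every gate variable to its true circuit value, and set the auxiliary variables (scaling ladders, product quotients, remainders) to their intended values. This is the completeness direction of \cref{lem:mersenne-multiplier,lem:signed-scaling,lem:product-gadget}.
\item Uniqueness of \(y\): proceed by induction along the topological order. Each gate constraint, together with \(0\le Y_g<2^{W}\), pins \(Y_g\) to its circuit value. The output therefore equals \(C_n(\vec a)=f_n(\vec a)\).
\end{itemize}
Both directions were already established in the proof of \cref{thm:alc-to-fepad}.

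The one subtlety, which I expect to be the main thing to verify, is the interpretation of the output. In \cref{def:alc} the ALC output is a nonnegative \(W\)-bit integer. So the circuit must produce \(f_n(\vec x)\) exactly, not a truncation of it. This holds because the construction of \cref{thm:fpspace-to-alc} chooses \(W\) larger than the bit length \(2^{p(n)}\) of every output, which is guaranteed by \cref{lem:indexed-bits}. With that in place, \(f\in\FEPAD\) follows immediately.
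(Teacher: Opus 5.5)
Your proposal is correct and matches the paper's proof, which simply combines \cref{thm:alc-fpspace} (for \(\FPSPACE\subseteq\polyALC\)) with \cref{thm:alc-to-fepad}. The extra checks you list (uniformity and size, functionality of the output, and exactness of the output word) are already part of what those two results state, so they are harmless but not needed.
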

\begin{proof}
Combine \cref{thm:alc-fpspace,thm:alc-to-fepad}.
\end{proof}

\begin{remark}\label{rem:expalc-fepad}
In \cref{sec:expalc-fepad-section}, we will prove that every function in \(\expALC\), the exponential-size class of \cref{def:alc}, has polynomial-size functional EPAD definitions of its graph.
Together with \(\expALC=\FEXP\), this will yield
\[
  \FEXP\subseteq\FEPAD.
\]
\end{remark}

\section{Encoding Succinct Exact-2-in-4 Satisfiability}
\label{sec:succinct-exact-two-normal-form}

This section proves \cref{thm:succinct-exact-two-normal-form}, which expresses the existence of a satisfying assignment using one signed integer witness and two divisibility conditions.

\begin{definition}[Succinct Exact-2-in-4 instance]
\label{def:succinct-exact-two}
A succinct Exact-2-in-4 instance is a triple
\[
 \mathcal I=(D,m,\ell),
\]
where \(m,\ell\ge1\) are written in binary and \(D\) is a Boolean \emph{clause-description circuit} for \(m\) ordered clauses over variables \(u_0,\ldots,u_{\ell-1}\).
Put
\[
 r_m=\max(1,\lceil\log_2m\rceil),
 \qquad
 r_\ell=\max(1,\lceil\log_2\ell\rceil).
\]
The circuit \(D\) has \(r_m\) input bits and \(4(r_\ell+1)\) output bits.
On the \(r_m\)-bit representation of a clause index \(0\le k<m\), it returns an \(r_\ell\)-bit integer \(j_h\) and a sign bit \(b_h\) for each \(h\in\{1,2,3,4\}\).
Put \(\varepsilon_h=+1\) when \(b_h=1\) and \(\varepsilon_h=-1\) when \(b_h=0\), and interpret \(i_h=j_h\bmod\ell\).
The pair \((i_h,\varepsilon_h)\) denotes \(u_{i_h}\) when \(\varepsilon_h=+1\) and \(\neg u_{i_h}\) otherwise.
These four literal occurrences form clause \(k\), with repetitions allowed and counted with multiplicity.
An assignment satisfies a clause exactly when two of its four literal occurrences are true.
Let \(F_{\mathcal I}\) denote the Exact-2-in-4 formula described by \(\mathcal I\), and let \(|\mathcal I|\) be the binary length of its description.

The problem \textsc{Succinct Exact-2-in-4-SAT} asks whether \(F_{\mathcal I}\) has a satisfying assignment; malformed encodings are rejected.
\end{definition}

\begin{lemma}[Local conversion from 3CNF to Exact-2-in-4]
\label{lem:three-cnf-to-exact-two}
Let \(G\) be a 3CNF formula with \(m\) clauses and \(\ell\) variables.
One can construct in polynomial time an equisatisfiable Exact-2-in-4 formula \(F\) with \(3m\) clauses and \(\ell+4m+1\) variables.
It contains the variables of \(G\) and a fresh reference variable \(t\).
An assignment to the variables of \(G\) satisfies \(G\) if and only if it extends to a satisfying assignment of \(F\) with \(t=1\).
Moreover, the conversion is local: from a polynomial-size clause-description circuit describing \(G\), one can construct in polynomial time a polynomial-size clause-description circuit describing \(F\), and the variables inherited from \(G\) retain their indices.
\end{lemma}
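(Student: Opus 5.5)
The plan is a clause-by-clause gadget that uses three Exact-2-in-4 clauses and four fresh variables per 3CNF clause, plus one global reference variable \(t\).
Exact-2-in-4 is invariant under complementing every variable, so \(t=1\) cannot be forced; it is only a convention. This matches the statement, which only asks for extensions with \(t=1\). With \(t=1\), the literal \(\neg t\) acts as the constant false and \(t\) as the constant true.

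For the original clause \(c=(l_1\vee l_2\vee l_3)\) with index \(c<m\), introduce fresh variables \(z_{c,1},\ldots,z_{c,4}\). Write \(z_j\) for \(z_{c,j}\) when \(c\) is clear. Emit the three clauses
\[
 (l_1,l_2,z_1,z_2),\qquad (\neg z_1,l_3,z_3,\neg t),\qquad (z_4,\neg z_4,t,\neg t).
\]
The third clause is always satisfied, since it always has exactly two true occurrences. It only pads the counts to exactly \(3m\) clauses and \(\ell+4m+1\) variables.

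Correctness under \(t=1\) is a finite case analysis on \(s=l_1+l_2\).
The first clause forces \(z_1+z_2=2-s\), and the second forces \((1-z_1)+l_3+z_3=2\).
If \(l_1=l_2=l_3=0\), then \(s=0\) forces \(z_1=z_2=1\), and the second clause would need \(z_3=2\), which is impossible. Hence every satisfying assignment of \(F\) with \(t=1\) restricts to a satisfying assignment of \(G\).
Conversely, take an assignment satisfying \(G\) and extend it as follows:
\begin{itemize}
 \item if \(s=2\), put \(z_1=z_2=0\);
 \item if \(s=1\), put \(z_1=0\) and \(z_2=1\);
 \item in both of these cases \(1-z_1=1\), so set \(z_3=1-l_3\);
 \item if \(s=0\), then \(l_3=1\), so put \(z_1=z_2=1\) and \(z_3=1\);
 \item \(z_4\) is arbitrary.
\end{itemize}
Repeated literal occurrences in a source clause cause no difficulty, because the argument only uses the sums \(l_1+l_2\) and \(l_3\), counted with multiplicity.
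Equisatisfiability follows: \(F\) satisfiable implies it has a solution with \(t=1\), complementing everything if necessary.

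For locality, the variables of \(G\) keep indices \(0,\ldots,\ell-1\), \(z_{c,j}\) receives index \(\ell+4c+(j-1)\), and \(t\) receives index \(\ell+4m\).
Given a clause index \(k<3m\) in binary, the new circuit does the following:
\begin{itemize}
 \item computes \(c=\lfloor k/3\rfloor\) and \(k\bmod 3\) with a polynomial-size division-by-constant circuit;
 \item queries the source clause-description circuit on \(c\) when \(k\bmod 3\in\{0,1\}\);
 \item computes the fresh indices by binary addition and multiplication by \(4\) of \(r_m\)-bit and \(r_\ell\)-bit numbers;
 \item selects the four output literals by a multiplexer.
\end{itemize}
All of this has size polynomial in the source circuit and in the bit lengths of \(m\) and \(\ell\).

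The main obstacle is bookkeeping rather than mathematics. It consists of fitting the new index width \(\max(1,\lceil\log_2(\ell+4m+1)\rceil)\), handling source indices reduced modulo \(\ell\) as in \cref{def:succinct-exact-two}, and keeping the output format well formed.
I would handle the modulo-\(\ell\) reduction by first reducing each source literal index modulo \(\ell\) inside the new circuit, again using a polynomial-size division circuit.
Only after that reduction would the circuit emit indices, all of which are then below \(\ell+4m+1\).
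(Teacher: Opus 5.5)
Your proof is correct. Its overall structure is the same as the paper's: private per-clause gadgets sharing one reference variable \(t\), complementation symmetry to move from equisatisfiability to extensions with \(t=1\), and binary index arithmetic for locality. The genuine difference is the gadget.

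The paper uses the chain \(R(\neg x,a,b,t)\), \(R(y,b,c,t)\), \(R(\neg z,c,d,t)\). Under \(t=1\) it reads \(a+b=x\), \(b+c=1-y\), \(c+d=z\). Every gadget clause contains exactly one source literal, and all four auxiliaries are load-bearing. In the all-false case, the first two equations force \(b=0\) and \(c=1\), contradicting \(z=0\).

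Your gadget puts \(l_1,l_2\) together in one clause and uses \(\neg t\) as the constant false. So two clauses and three auxiliaries per source clause already suffice. The clause \((z_4,\neg z_4,t,\neg t)\) and the variable \(z_4\) are pure padding to reach the stated counts \(3m\) and \(\ell+4m+1\).

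What your route buys is a one-line soundness check (\(z_1=z_2=1\) forces \(z_3=2\)). It also shows that the stated sizes are not forced by the method. The paper's route avoids tautological clauses and gives every output clause slot the same shape: one queried source literal, two auxiliaries, and \(t\). That makes the local circuit slightly more uniform.

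Placing \(t\) at index \(\ell+4m\) rather than \(\ell\) is immaterial for this lemma, since only the inherited indices must be preserved.
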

\begin{proof}
Write \(R(x,y,z,w)\) for the requirement that exactly two of the four displayed literal occurrences are true.
Introduce one reference variable \(t\), shared by all clauses.
For each clause \(x\vee y\vee z\) of \(G\), introduce four fresh variables \(a,b,c,d\) and the three clauses
\[
 R(\neg x,a,b,t),\qquad R(y,b,c,t),\qquad R(\neg z,c,d,t).
\]
Fix \(t=1\) and truth values \(x,y,z\in\bits\) for the three source literals.
The clauses are equivalent to
\[
 a+b=x,\qquad b+c=1-y,\qquad c+d=z.
\]
If \(y=1\), take \(b=c=0\), \(a=x\), and \(d=z\).
If \(y=0\) and \(x=1\), take \(b=1\), \(a=c=0\), and \(d=z\).
If \(x=y=0\), the first two equations force \(b=0\) and \(c=1\), so the third equation is solvable exactly when \(z=1\).
Thus the gadget with \(t=1\) has a satisfying extension exactly when \(x\vee y\vee z\) is true.
The auxiliary variables are private to each source clause, which proves the extension property for \(F\).

Complementing every variable preserves each Exact-2-in-4 clause.
Thus every satisfying assignment of \(F\) can be complemented, if necessary, to have \(t=1\), proving that \(F\) and \(G\) are equisatisfiable.

For the local statement, an output clause index \(k\) determines the original clause \(q=\lfloor k/3\rfloor\) and one of its three gadget clauses.
The source clause-description circuit is queried once; its returned indices are first reduced modulo the source variable count, as in \cref{def:succinct-exact-two}.
Give \(t\) index \(\ell\), and give the four auxiliary variables for source clause \(q\) indices \(\ell+1+4q,\ldots,\ell+4+4q\).
These indices are computed by binary arithmetic.
All indices have polynomial length in the source description, so the resulting clause-description circuit has polynomial size.
\end{proof}

\begin{proposition}\label{prop:succinct-exact-two-nexp}
\textsc{Succinct Exact-2-in-4-SAT} is \(\NEXP\)-complete under polynomial-time many-one reductions.
\end{proposition}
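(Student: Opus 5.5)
We prove membership and hardness separately.

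\emph{Membership in \(\NEXP\).} On input \(\mathcal I=(D,m,\ell)\), a nondeterministic machine guesses an assignment \(\vec u\in\bits^\ell\), written as \(\ell\le2^{|\mathcal I|}\) bits. It then iterates over all clause indices \(0\le k<m\). For each \(k\) it evaluates \(D\) on the binary representation of \(k\), reduces the returned indices modulo \(\ell\), looks up the four literal values, and checks that exactly two are true. Each step takes time polynomial in \(|\mathcal I|\), and there are at most \(2^{|\mathcal I|}\) clauses, so the total running time is exponential. Malformed encodings are rejected by a syntactic check.

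\emph{Hardness.} We reduce from the standard \(\NEXP\)-complete problem Succinct 3SAT, the succinct version of 3CNF satisfiability in which a polynomial-size circuit returns the three literals of any addressed clause. Its completeness follows from the Papadimitriou--Yannakakis succinct framework, or from a Cook--Levin tableau for a \(2^{\poly(n)}\)-time nondeterministic machine whose clauses are locally computable from their index~\cite[Chapter~20]{Papadimitriou1994}.

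Given a succinct 3CNF instance, apply the local conversion of \cref{lem:three-cnf-to-exact-two}. This yields in polynomial time a clause-description circuit \(D'\) for an equisatisfiable Exact-2-in-4 formula with \(m'=3m\) clauses and \(\ell'=\ell+4m+1\) variables. We output \((D',m',\ell')\), with \(m'\) and \(\ell'\) computed in binary.

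Some routine adjustment of widths is needed, and this is the main obstacle: the bookkeeping required to match \cref{def:succinct-exact-two} exactly. Specifically, \(D'\) must take \(r_{m'}\)-bit inputs and produce \(r_{\ell'}\)-bit variable indices with sign bits. We pad the indices with leading zeros. The behavior on inputs \(k\ge m'\) is irrelevant, because only indices \(k<m'\) define clauses.

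We also need the source succinct 3CNF instance to have a clause count and variable count of the right form. We may assume both are powers of two by padding. Extra clauses are harmless duplicates of an existing clause, and extra variables are unused. Alternatively, the source indices are reduced modulo \(\ell\) as in the lemma.

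Correctness is exactly the equisatisfiability statement of \cref{lem:three-cnf-to-exact-two}. Polynomial running time of the reduction follows from the locality clause of that lemma.
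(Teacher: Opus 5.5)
Your proposal is correct and follows essentially the same route as the paper. Membership works by guessing the exponential-length assignment and checking every clause via \(D\). Hardness works by applying the local conversion of \cref{lem:three-cnf-to-exact-two} to a standard succinct 3CNF instance. Your width and padding bookkeeping is harmless extra detail; the paper instead adopts the equivalent convention that the clause and variable counts are given in binary, and it runs three copies of the source circuit in parallel to obtain whole clauses.
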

\begin{proof}
For membership, let \(\mathcal I=(D,m,\ell)\) be an instance and put \(N=|\mathcal I|\).
Since \(m,\ell\le2^N\), a nondeterministic exponential-time machine can guess an \(\ell\)-bit assignment, enumerate the \(m\) clauses, evaluate \(D\) on each clause index, and check the Exact-2-in-4 condition.
Thus the problem belongs to \(\NEXP\).

For hardness, start from the standard \(\NEXP\)-complete succinct 3CNF satisfiability problem~\cite[Section~20.1]{Papadimitriou1994}; see also~\cite{PapadimitriouYannakakis1986}.
In that representation, a polynomial-size circuit, given a clause index and one of its three literal positions, returns the corresponding variable index and sign of an exponentially large 3CNF; we use the equivalent convention in which the numbers of clauses and variables are included in binary in the description.
Three copies of this circuit in parallel return the three literals of an addressed source clause.
Apply \cref{lem:three-cnf-to-exact-two} to the described 3CNF\@.
Its local part constructs the Exact-2-in-4 clause-description circuit in polynomial time and preserves satisfiability.
\end{proof}

We now prove the normal form announced in \cref{subsec:overview-reduction}, restated here for convenience.

\OneWitnessNormalForm*

For the remainder of this section, fix a succinct Exact-2-in-4 instance \(\mathcal I=(D,m,\ell)\).

Write \(\vec\alpha=(\alpha_0,\ldots,\alpha_{\ell-1})\in\{-1,+1\}^{\ell}\), and let \(\vec u(\vec\alpha)\in\bits^\ell\) be the assignment
\[
 u_i(\vec\alpha)=\frac{1+\alpha_i}{2}
 \qquad(0\le i<\ell).
\]
The proof first packs all clause tests into one zero-target congruence, then replaces the \(\ell\) signs by one signed integer \(X\).
These constructions are combined in \cref{subsec:one-integer-normal-form} to prove the stated equivalence.

\subsection{Combining All Exact-2-in-4 Clauses into One Congruence}
\label{subsec:exact-two-congruence}

For \(0\le i<\ell\) and \(0\le k<m\), let \(d_{ik}\) be the number of positive occurrences of \(u_i\) in clause \(k\), minus the number of negative occurrences.
For \(0\le k<m\), define
\[
 s_k(\vec\alpha)
 =\sum_{i=0}^{\ell-1}d_{ik}\alpha_i.
\]
If \(n_k\) is the number of true literal occurrences in clause \(k\) under \(\vec u(\vec\alpha)\), then
\[
 s_k(\vec\alpha)=2n_k-4.
\]
Thus \(s_k(\vec\alpha)=0\) exactly when clause \(k\) has two true occurrences.

Define
\[
 \mathcal E(\vec\alpha)
 =\sum_{k=0}^{m-1}s_k(\vec\alpha)4^k,
\]
and put
\[
 M=2\cdot4^m=2^{2m+1}.
\]
The base-four spacing prevents cancellation between nonzero clause scores.

\begin{lemma}[Zero-target congruence encoding]
\label{lem:exact-two-congruence}
For every \(\vec\alpha\in\{-1,+1\}^{\ell}\),
\[
 \vec u(\vec\alpha)\models F_{\mathcal I}
 \quad\Longleftrightarrow\quad
 \mathcal E(\vec\alpha)=0
 \quad\Longleftrightarrow\quad
 \mathcal E(\vec\alpha)\equiv0\pmod M.
\]
\end{lemma}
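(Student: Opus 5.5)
The proof establishes the two equivalences separately, with the middle statement \(\mathcal E(\vec\alpha)=0\) serving as the hub.

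\textbf{First equivalence.} The assignment \(\vec u(\vec\alpha)\) satisfies \(F_{\mathcal I}\) exactly when every clause \(k\) has \(n_k=2\). By the identity \(s_k(\vec\alpha)=2n_k-4\), this holds exactly when \(s_k(\vec\alpha)=0\) for all \(k\). If all scores vanish, then \(\mathcal E(\vec\alpha)=0\) trivially. For the converse we need a uniqueness-of-representation argument. Since \(n_k\in\{0,\ldots,4\}\), every score lies in \(\{-4,-2,0,2,4\}\). Write \(s_k=2t_k\) with \(t_k\in\{-2,\ldots,2\}\), so that \(\mathcal E=2\sum_k t_k4^k\).

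\textbf{Main step: a base-four argument with digits in \(\{-2,\ldots,2\}\).} Suppose some \(t_k\) is nonzero, and let \(k_0\) be the least such index. Then
\[
 \sum_k t_k4^k\equiv t_{k_0}4^{k_0}\pmod{4^{k_0+1}},
\]
and \(t_{k_0}\not\equiv0\pmod 4\) because \(0<|t_{k_0}|\le2<4\). Hence \(4^{k_0+1}\nmid\sum_k t_k4^k\). In particular \(\mathcal E\ne0\), which proves the first equivalence.

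\textbf{Second equivalence.} The direction \(\mathcal E=0\Rightarrow M\mid\mathcal E\) is immediate. For the converse, assume \(M=2\cdot4^m\mid\mathcal E\), so \(4^m\mid\sum_k t_k4^k\). The computation above shows that \(4^{k_0+1}\) does not divide this sum. Since \(k_0<m\), we have \(4^{k_0+1}\mid4^m\), so \(4^m\) does not divide it either, a contradiction. Therefore every \(t_k\) is zero, and \(\mathcal E=0\).

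The only point requiring care is the contrapositive lowest-nonzero-digit argument; the remaining steps are routine. The factor \(2\) in \(M\) is not needed for this lemma. It simply matches the evenness of \(\mathcal E\).
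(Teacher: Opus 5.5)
Your proof is correct. For the first equivalence you follow essentially the paper's argument. The paper reduces \(\mathcal E(\vec\alpha)/2\) modulo \(4\) and peels off base-four digits from the bottom; your lowest-nonzero-digit version is the same idea phrased as a minimal counterexample.

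The second equivalence is where you genuinely differ. The paper uses a size bound, \(|\mathcal E(\vec\alpha)|\le4\sum_{k<m}4^k=4(4^m-1)/3<2\cdot4^m=M\), so the only multiple of \(M\) in range is \(0\). You instead reuse the divisibility argument: a nonzero digit vector gives \(4^{k_0+1}\nmid\sum_k t_k4^k\) with \(k_0<m\). Your route derives both equivalences from one fact. It also gives the slightly sharper statement \(\mathcal E(\vec\alpha)\not\equiv0\pmod{2\cdot4^{k_0+1}}\), where \(k_0\) is the first unsatisfied clause. The paper's route instead records the explicit estimate \eqref{eq:packed-score-bound}, which it reuses in \cref{lem:designated-prefix-extension}. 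There, \(|\mathcal E(\vec\alpha)|<2A\) with \(A=4^m\) is what separates \(\mathcal E(\vec\alpha)\) from the multiple \(2A(\word_L(\vec u)-Z)\), and it also bounds the whole difference by \(M_L\).

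Your closing remark is wrong: the factor \(2\) in \(M\) is needed. Your own step from \(M\mid\mathcal E\) to \(4^m\mid\sum_k t_k4^k\) uses it. With modulus \(4^m\) the lemma fails. Take \(m=1\) and a clause whose four literal occurrences are all true; then \(\mathcal E(\vec\alpha)=s_0(\vec\alpha)=4\equiv0\pmod4\), although the clause is unsatisfied. In general, a top digit \(t_{m-1}=\pm2\) with all lower digits zero gives \(\mathcal E=\pm4^m\), and only the extra factor \(2\) excludes this. This error does not affect your proof itself, which uses \(M=2\cdot4^m\) correctly.
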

\begin{proof}
If \(\vec u(\vec\alpha)\) satisfies \(F_{\mathcal I}\), then every score vanishes, so \(\mathcal E(\vec\alpha)=0\).
Conversely, suppose \(\mathcal E(\vec\alpha)=0\).
Each score belongs to \(\{-4,-2,0,2,4\}\), so
\[
 \frac{\mathcal E(\vec\alpha)}2
 =\sum_{k=0}^{m-1}\frac{s_k(\vec\alpha)}2\,4^k
\]
has coefficients in \(\{-2,-1,0,1,2\}\).
Reducing modulo \(4\) forces the coefficient at \(4^0\) to be zero; dividing by \(4\) and iterating shows that every score is zero.
The assignment \(\vec u(\vec\alpha)\) therefore satisfies \(F_{\mathcal I}\).

Finally,
\begin{equation}\label{eq:packed-score-bound}
 |\mathcal E(\vec\alpha)|
 \le4\sum_{k=0}^{m-1}4^k
 =\frac{4(4^m-1)}3
 <2\cdot4^m=M.
\end{equation}
Thus \(\mathcal E(\vec\alpha)\equiv0\pmod M\) if and only if \(\mathcal E(\vec\alpha)=0\).
\end{proof}

For \(0\le i<\ell\), define
\[
 c_i=\sum_{k=0}^{m-1}d_{ik}4^k.
\]
Then
\begin{equation}\label{eq:exact-two-linear-congruence}
 \mathcal E(\vec\alpha)=\sum_{j=0}^{\ell-1}c_j\alpha_j,
 \qquad
 F_{\mathcal I}\text{ satisfiable}
 \Longleftrightarrow
 \exists\vec\alpha\in\{-1,+1\}^{\ell}\colon
 \sum_{j=0}^{\ell-1}c_j\alpha_j\equiv0\pmod M.
\end{equation}

\subsection{Encoding the Sign Vector by One Integer}
\label{subsec:assignment-one-integer}

We adapt the square-difference sign encoding of Manders and Adleman~\cite[Lemma~1]{MandersAdleman1978}.

Put \(e=2(m+\ell)\).
Let \(p_0,\ldots,p_{\ell-1}\) be the first \(\ell\) odd primes, and define
\[
 K=\prod_{j=0}^{\ell-1}p_j^e,
 \qquad
 P_j=K/p_j^e.
\]
Thus \(P_j\) contains every prime-power factor of \(K\) except \(p_j^e\).
Since \(P_j\) is odd, it is invertible modulo the power of two \(M\).
Let \(\lambda_j\in\{0,\ldots,M-1\}\) be the unique integer satisfying
\[
 \lambda_j\equiv c_jP_j^{-1}\pmod M.
\]
Define
\[
 t_j=
 \begin{cases}
  \lambda_j,&\lambda_j>0,\ p_j\nmid\lambda_j,\\
  \lambda_j+M,&\text{otherwise},
 \end{cases}
\]
and
\[
 \theta_j=t_jP_j,
 \qquad
 H=\sum_{j=0}^{\ell-1}\theta_j.
\]
The definitions give
\begin{equation}\label{eq:theta-properties}
 \begin{aligned}
 &0<t_j<2M,
 &&\theta_j\equiv c_j\pmod M,\\
 &p_h^e\mid\theta_j\quad(0\le h<\ell,\ h\ne j),
 &&p_j\nmid\theta_j.
 \end{aligned}
\end{equation}
Indeed, the congruence follows from \(\theta_j=t_jP_j\), the divisibility for \(h\ne j\) follows from \(p_h^e\mid P_j\), and \(p_j\) divides neither \(t_j\) nor \(P_j\).

We next prove \(2H<K\), which ensures that two integers in \([-H,H]\) cannot be congruent modulo \(K\) unless they are equal.
Since \(\ell\ge1\), one has \(8\ell\le4^{\ell+1}\); since \(m+\ell\ge2\),
\[
 4\ell M=8\ell\,4^m
 \le4^{m+\ell+1}
 <9^{m+\ell}
 =3^e.
\]
Since every \(p_j\ge3\), \eqref{eq:theta-properties} gives
\[
 \theta_j<2MP_j\le\frac{2MK}{3^e}.
\]
Summing over \(0\le j<\ell\) yields
\begin{equation}\label{eq:h-less-than-k}
 2H<\frac{4\ell MK}{3^e}<K.
\end{equation}

Introducing one EPAD sign variable for every coordinate of \(\vec\alpha\) would make the succinct reduction exponential.
The Chinese-remainder construction instead packs all signs into the single integer \(X\).
The strict bound \(2H<K\) makes the residue-wise reconstruction unique.

\begin{lemma}[Recovering the sign vector from \(X\)]
\label{lem:recover-signs}
For every integer \(X\), the following are equivalent:
\begin{enumerate}[(i)]
\item \(|X|\le H\) and \(K\mid H^2-X^2\).
\item there exists \(\vec\alpha\in\{-1,+1\}^{\ell}\) such that
\[
 X=\sum_{j=0}^{\ell-1}\alpha_j\theta_j.
\]
\end{enumerate}
\end{lemma}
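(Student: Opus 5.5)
The plan is to prove the two implications separately, working prime by prime and then using the Chinese remainder theorem together with the bound \(2H<K\) from \eqref{eq:h-less-than-k}. Throughout we use the properties \eqref{eq:theta-properties}: \(p_h^e\mid\theta_j\) for \(h\ne j\), and \(p_j\nmid\theta_j\).

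\emph{(ii)\(\Rightarrow\)(i).} Since \(|\alpha_j|=1\) and every \(\theta_j>0\), the triangle inequality gives \(|X|\le\sum_j\theta_j=H\). For the divisibility, fix a coordinate \(j\). Modulo \(p_j^e\), every term \(\theta_h\) with \(h\ne j\) vanishes, so \(H\equiv\theta_j\) and \(X\equiv\alpha_j\theta_j\). Consequently one of \(H-X\) and \(H+X\) is congruent to \(0\) modulo \(p_j^e\): the first when \(\alpha_j=+1\), the second when \(\alpha_j=-1\). Hence \(p_j^e\mid(H-X)(H+X)=H^2-X^2\). Because the prime powers \(p_j^e\) are pairwise coprime, their product \(K\) also divides \(H^2-X^2\).

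\emph{(i)\(\Rightarrow\)(ii).} Fix \(j\). Since \(p_j^e\mid(H-X)(H+X)\), we want the whole prime power to divide one of the two factors. Their sum is \(2H\), and \(2H\equiv2\theta_j\not\equiv0\pmod{p_j}\), using that \(p_j\) is odd and \(p_j\nmid\theta_j\). So \(p_j\) cannot divide both factors, and therefore \(p_j^e\) divides exactly one of them. Choose \(\alpha_j=+1\) if \(p_j^e\mid H-X\), and \(\alpha_j=-1\) otherwise. In either case \(X\equiv\alpha_j\theta_j\pmod{p_j^e}\), because \(H\equiv\theta_j\) modulo \(p_j^e\).

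Now put \(X'=\sum_j\alpha_j\theta_j\). For each \(j\), we have \(X'\equiv\alpha_j\theta_j\equiv X\pmod{p_j^e}\), again because the other terms of \(X'\) vanish modulo \(p_j^e\). By the Chinese remainder theorem, \(X\equiv X'\pmod K\). Both \(X\) and \(X'\) lie in \([-H,H]\), so \(|X-X'|\le2H<K\) by \eqref{eq:h-less-than-k}. This forces \(X=X'\).

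The main obstacle is showing that the full power \(p_j^e\) divides a single factor, rather than being split between \(H-X\) and \(H+X\). This is exactly where the choice of \(t_j\), which guarantees \(p_j\nmid\theta_j\), and the oddness of \(p_j\) are used. Everything else is routine.
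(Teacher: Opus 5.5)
Your proposal is correct and follows essentially the same route as the paper: the triangle inequality and a prime-by-prime divisibility argument for (ii)\(\Rightarrow\)(i). For the converse, you use that \(p_j\nmid 2H\) to place the full power \(p_j^e\) in exactly one factor, then apply CRT and the bound \(2H<K\) to conclude \(X=X'\).
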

\begin{proof}
Assume (ii).
For each \(j\),
\[
 H-X=\sum_{h=0}^{\ell-1}(1-\alpha_h)\theta_h,
 \qquad
 H+X=\sum_{h=0}^{\ell-1}(1+\alpha_h)\theta_h.
\]
By \eqref{eq:theta-properties},
\[
 \alpha_j=+1\ \Longrightarrow\ p_j^e\mid H-X,
 \qquad
 \alpha_j=-1\ \Longrightarrow\ p_j^e\mid H+X.
\]
The prime powers are pairwise coprime, so \(K\mid(H-X)(H+X)=H^2-X^2\), and the triangle inequality gives \(|X|\le H\).

Conversely, assume (i).
For every \(j\),
\[
 H\equiv\theta_j\not\equiv0\pmod{p_j},
\]
because \(p_j\) divides every \(\theta_h\) with \(h\ne j\), but not \(\theta_j\).
Hence \(p_j\) cannot divide both \(H-X\) and \(H+X\): otherwise it would divide \(2H\), and therefore \(H\), since \(p_j\) is odd.
Because \(p_j^e\mid(H-X)(H+X)\), the whole power \(p_j^e\) divides exactly one factor.
Define
\[
 \alpha_j=
 \begin{cases}
 +1,&p_j^e\mid H-X,\\
 -1,&p_j^e\mid H+X.
 \end{cases}
\]
Let \(X'=\sum_{h=0}^{\ell-1}\alpha_h\theta_h\).
For every \(j\), \eqref{eq:theta-properties} gives \(X\equiv X'\pmod{p_j^e}\), hence \(X\equiv X'\pmod K\).
Both integers lie in \([-H,H]\), and \(2H<K\), so \(X=X'\).
\end{proof}

\subsection{The One-Integer Normal Form}
\label{subsec:one-integer-normal-form}

For \(\vec\alpha\in\{-1,+1\}^{\ell}\), put
\[
 X=\sum_{j=0}^{\ell-1}\alpha_j\theta_j.
\]
By \eqref{eq:theta-properties} and \eqref{eq:exact-two-linear-congruence},
\[
 X\equiv\sum_{j=0}^{\ell-1}\alpha_jc_j
 =\mathcal E(\vec\alpha)\pmod M.
\]
Thus the clause congruence becomes simply \(M\mid X\).

\begin{proposition}[One-integer encoding]
\label{prop:one-witness-exact-two}
The formula \(F_{\mathcal I}\) is satisfiable if and only if there exists \(X\in\Z\) such that
\[
 -H\le X\le H,
 \qquad
 M\mid X,
 \qquad
 K\mid H^2-X^2.
\]
\end{proposition}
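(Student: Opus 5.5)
The proof chains \cref{lem:recover-signs} with \cref{lem:exact-two-congruence}, using the congruence \(X\equiv\mathcal E(\vec\alpha)\pmod M\) established just above the statement.

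For the forward direction, suppose \(F_{\mathcal I}\) is satisfiable and fix a satisfying assignment. Write it as \(\vec u(\vec\alpha)\), taking \(\alpha_i=2u_i-1\in\{-1,+1\}\), and put \(X=\sum_j\alpha_j\theta_j\).
\begin{itemize}
\item By implication (ii)\(\Rightarrow\)(i) of \cref{lem:recover-signs}, \(|X|\le H\) and \(K\mid H^2-X^2\). The bound \(|X|\le H\) is exactly \(-H\le X\le H\).
\item By \eqref{eq:theta-properties}, \(X\equiv\sum_j\alpha_jc_j=\mathcal E(\vec\alpha)\pmod M\).
\item By \cref{lem:exact-two-congruence}, \(\mathcal E(\vec\alpha)=0\) for a satisfying assignment, so \(M\mid X\).
\end{itemize}

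For the converse, let \(X\) satisfy the three conditions. Implication (i)\(\Rightarrow\)(ii) of \cref{lem:recover-signs} gives a sign vector \(\vec\alpha\) with \(X=\sum_j\alpha_j\theta_j\). Reducing modulo \(M\) gives \(\mathcal E(\vec\alpha)\equiv X\equiv0\pmod M\). By the last equivalence in \cref{lem:exact-two-congruence}, \(\vec u(\vec\alpha)\) satisfies \(F_{\mathcal I}\).

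No step is a real obstacle, since the inequality \(2H<K\) and the coprimality facts were already used inside \cref{lem:recover-signs}. The one point to check is that the converse only needs the congruence modulo \(M\), not equality \(\mathcal E(\vec\alpha)=0\). This is covered because \cref{lem:exact-two-congruence} already states that the congruence is equivalent to satisfaction, via the bound \eqref{eq:packed-score-bound}.

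Finally, \(M\), \(H\) and \(K\) are positive integers determined by \(\mathcal I\): \(H>0\) because every \(t_j>0\) and every \(P_j>0\). This gives \cref{thm:succinct-exact-two-normal-form}.
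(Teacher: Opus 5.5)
Your proposal is correct and follows the paper's proof: in both directions it chains \cref{lem:recover-signs} with the congruence \(X\equiv\mathcal E(\vec\alpha)\pmod M\) coming from \(\theta_j\equiv c_j\pmod M\), and then applies \cref{lem:exact-two-congruence}, in the same way the paper does. Your closing remark that \(H>0\) holds because \(\ell\ge1\), \(t_j>0\) and \(P_j>0\) is a harmless addition that the paper leaves implicit.
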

\begin{proof}
Suppose first that \(F_{\mathcal I}\) is satisfiable.
By \cref{lem:exact-two-congruence}, choose \(\vec\alpha\in\{-1,+1\}^{\ell}\) with \(\mathcal E(\vec\alpha)\equiv0\pmod M\), and set \(X=\sum_{j=0}^{\ell-1}\alpha_j\theta_j\).
Then \(-H\le X\le H\) and \(K\mid H^2-X^2\) by \cref{lem:recover-signs}, while \(X\equiv\mathcal E(\vec\alpha)\equiv0\pmod M\).

Conversely, suppose \(X\) satisfies the three displayed conditions.
By \cref{lem:recover-signs}, there is \(\vec\alpha\in\{-1,+1\}^{\ell}\) with \(X=\sum_{j=0}^{\ell-1}\alpha_j\theta_j\).
Then
\[
 \mathcal E(\vec\alpha)
 \equiv\sum_{j=0}^{\ell-1}c_j\alpha_j
 \equiv\sum_{j=0}^{\ell-1}\theta_j\alpha_j
 =X
 \equiv0\pmod M.
\]
By \cref{lem:exact-two-congruence}, \(F_{\mathcal I}\) is satisfiable.
\end{proof}

\subsection{Polynomial-Space Indexed-Bit Computation for
\texorpdfstring{\(H\) and \(K\)}{H and K}}
\label{subsec:hk-bits}

The integers \(H\) and \(K\) may have exponentially many bits, so a polynomial-time reduction cannot include their binary expansions.
By \cref{lem:indexed-bits}, it suffices to give polynomial-space procedures that, on input \(\mathcal I\) and a binary bit position, return the corresponding bit of \(H\) or \(K\), together with an exponential bound on their bit lengths.

\begin{proposition}[Polynomial-space indexed-bit computation of \(H\) and \(K\)]
\label{prop:hk-fpspace}
The maps
\[
 \mathcal I\longmapsto H(\mathcal I),
 \qquad
 \mathcal I\longmapsto K(\mathcal I),
\]
extended by zero on malformed encodings, belong to \(\FPSPACE\).
For every valid instance \(\mathcal I\), let \(N=|\mathcal I|\) and put
\[
 \mathcal D(\mathcal I)=2^{4N+20}.
\]
Then
\[
 H,K,M<2^{\mathcal D(\mathcal I)}.
\]
\end{proposition}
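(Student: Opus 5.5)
By \cref{lem:indexed-bits}, it suffices to establish two things. First, the stated bound \(H,K<2^{\mathcal D(\mathcal I)}\), which supplies the polynomial \(p(N)=4N+20\). Second, polynomial-space procedures that return any requested bit of \(H\) and of \(K\). Throughout, I rely on the fact that polynomial-space computable bit-predicates compose: a procedure that needs bit \(i\) of an intermediate integer can recompute it on demand, since there are only polynomially many nesting levels and each stores polynomially many counters.

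\textbf{Size bounds.} Since \(m,\ell<2^N\), we have \(M=2^{2m+1}\). The \(j\)-th odd prime is at most about \(2\ell\ln\ell\le 2^{2N+2}\). Hence
\[
 \len(K)\le \ell e(2N+3)\le 2^N\cdot 2^{N+2}\cdot(2N+3),
\]
which lies comfortably below \(2^{4N+20}\). By \eqref{eq:h-less-than-k}, \(H<K\), and \(M\) is clearly smaller as well.

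\textbf{Bits of \(K\).} The first \(\ell\) odd primes are enumerated by trial division, one candidate at a time; each candidate has polynomially many bits, so only a counter and the current candidate need to be stored. The product \(K=\prod_j p_j^e\) is an iterated product of exponentially many polynomial-bit factors, which has exponential length. I would compute its bits in one of two ways. The first is the standard fact that iterated multiplication of \(2^{\poly}\) numbers of \(\poly\) bits lies in \(\PSPACE\) on bit-indexed inputs, via the \(\mathsf{DLOGTIME}\)-uniform \(\mathsf{TC}^0\) circuits of Hesse--Allender--Barrington scaled up exponentially. The second, cleaner for this paper, evaluates an explicit polynomial-size ALC. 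The plan is to use exponentiation by squaring for \(p_j^e\), and to build the product over \(j\) by a balanced tree. That tree has exponentially many leaves, however, so it does not fit a polynomial-size ALC directly. I would therefore fall back on the scaled-up HAB argument: \(\mathsf{TC}^0\) circuits of size \(2^{\poly}\) with local uniformity can be evaluated gate by gate with recursion depth \(O(1)\) in polynomial space, because each threshold gate is evaluated by a counter over its \(2^{\poly}\) inputs.

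\textbf{Bits of \(H\).} For each \(j\), I compute \(c_j \bmod M\) in polynomial space by looping over all clauses \(k<m\). For each clause, evaluate \(D(k)\), add \(d_{jk}\) (readable from the four returned literals), and maintain the running value \(\sum d_{jk}4^k\bmod M\). That sum is a \((2m+1)\)-bit word, exponential, so I instead expose its bits on demand: bit \(b\) of \(c_j \bmod M\) depends only on \(d_{j,\lfloor b/2\rfloor}\), the lower clauses, and a carry/borrow. Treating the sum as an iterated addition with signed digits in \(\{-4,\ldots,4\}\), the bits are again computable with a bounded carry scan in the style of \cref{prop:alc-to-fpspace}.

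Next comes \(P_j=K/p_j^e\), an iterated product with factor \(j\) omitted, handled exactly as for \(K\). Then \(P_j^{-1}\bmod M\) for a power-of-two modulus requires division and modular inversion on exponential-length words. Here I would invoke \cref{prop:alc-standard-operations}, or rather its underlying HAB division circuits, to argue that these bits are \(\PSPACE\)-computable given bit-oracles for the operands. With that in hand, \(\lambda_j\), the test \(p_j\mid\lambda_j\) (a remainder by a polynomial-bit number, done by a single scan), \(t_j\), and \(\theta_j=t_jP_j\) are all bit-accessible. Finally, \(H=\sum_j\theta_j\) is an iterated sum of exponentially many exponential-length numbers, which is again in scaled-up \(\mathsf{TC}^0\) and hence bit-accessible in polynomial space.

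The main obstacle is to make the composition of these exponentially large iterated operations (iterated product, iterated sum, inversion modulo \(2^{2m+1}\)) rigorous within polynomial space. I would isolate a single lemma: a locally uniform threshold circuit of size \(2^{\poly(n)}\) and constant depth can be evaluated in polynomial space. Each step above is then instantiated as such a circuit applied to bit-oracles, and since there are only constantly many layers of composition, the total space stays polynomial.
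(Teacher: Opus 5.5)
Your outline follows the paper's proof closely. You reduce to indexed bits via \cref{lem:indexed-bits}, enumerate primes by trial division, and evaluate the Hesse--Allender--Barrington uniform threshold circuits recursively on bit-oracle inputs in polynomial space. You test \(p_j\mid\lambda_j\) by a residue scan and bound \(\len(K)\) by a prime-number estimate. Some of your local choices are simpler than the paper's:
\begin{itemize}
\item forming \(P_j\) directly as the product with the \(j\)th factor omitted avoids the division \(K/p_j^e\);
\item your signed-digit carry scan for \(c_j\bmod M\) replaces the paper's iterated-addition circuit with an elementary argument;
\item your prime bound \(2^{2N+2}\) is sharper than the \(2^{3N+12}\) the paper takes from Axler.
\end{itemize}

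The one step that does not go through as written is the inverse \(P_j^{-1}\bmod M\). Division circuits compute \(\lfloor x/y\rfloor\); by themselves they do not give an inverse modulo \(2^{q}\), where \(q=2m+1\). Also, \cref{prop:alc-standard-operations} has no inversion gate, and it concerns circuits on polynomially many Boolean inputs, not exponential-length oracle words. So an extra idea is needed here.

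The paper uses Newton--Hensel lifting. From \(u\equiv a^{-1}\pmod{2^b}\), put \(u'=u(2-au)\bmod 2^{\min(2b,q)}\); this works because \(au'=1-(au-1)^2\). After \(O(\log q)=O(N)\) doubling stages, each a uniform \(\mathsf{TC}^0\) multiplication, one gets a uniform threshold circuit of depth \(O(\log q)\). That depth is not constant, which is why the paper's evaluation lemma allows depth \((\log Q)^{O(1)}\).

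Your restriction to constant depth and ``constantly many layers'' is in any case unnecessary. A recursion stack of \(\poly(N)\) frames of \(\poly(N)\) bits each still fits in polynomial space.

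If you prefer to keep constant depth, use the identity \(1-(1-a)^q=a\sum_{i<q}(1-a)^i\). Since \(a\) is odd, \((1-a)^q\equiv0\pmod{2^q}\). Hence the exact quotient \((1-(1-a)^q)/a\), reduced modulo \(2^q\), is an inverse of \(a\). It is obtained by HAB powering followed by HAB division.

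With either fix the argument is complete. You should also dispose of malformed encodings explicitly by a polynomial-time syntax check that outputs zero.
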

\begin{proof}
Clause indices, variable indices, sign-coordinate indices, prime indices, and bit positions all have \(O(N)\) bits.
Evaluating the clause-description circuit \(D\) on an addressed clause takes polynomial time in \(N\).

\proofparagraph{Indexed evaluation of uniform arithmetic circuits}
We first establish the evaluation bound used below.
Consider a binary string of length \(Q\le 2^{\poly(N)}\), presented by a \(\PSPACE\) predicate that returns an indexed bit, and a logspace-uniform circuit family of size \(Q^{O(1)}\) and depth \((\log Q)^{O(1)}\), using Boolean or unweighted threshold gates.
Recursively evaluate the requested output gate.
Gate indices, input indices, child-index counters, and threshold counters use \(O(\log Q)=\poly(N)\) bits.
For an unbounded-fan-in gate, enumerate the possible child indices.
The uniformity algorithm identifies the actual children, which are recursively evaluated and counted.
The recursion depth is polynomial in \(N\).
Recomputation affects time but not space.
Hence every indexed output bit is computable in polynomial space.

We also need inverses modulo a power of two.
Given an odd \(q\)-bit integer \(a\), start with \(u=1\pmod 2\).
If \(au\equiv1\pmod{2^b}\), put \(b'=\min(2b,q)\) and
\[
 u'=u(2-au)\bmod 2^{b'}.
\]
Then \(au'=1-(au-1)^2\equiv1\pmod{2^{b'}}\).
There are \(O(\log q)\) precision-doubling stages.
The uniform division circuits of Hesse, Allender, and Barrington \cite{HesseAllenderBarrington2002,HesseAllenderBarrington2014} perform the required multiplication and binary addition in uniform constant depth.
Subtraction is two's-complement addition, and retaining the least significant \(b'\) bits is a wiring operation.
Composing the stages gives a logspace-uniform polynomial-size threshold circuit of depth \(O(\log q)\).

\proofparagraph{Coefficient residues}
For \(0\le j<\ell\), evaluate \(D\) on clause \(k\) and count the signed occurrences of \(u_j\) to obtain \(d_{jk}\in\{-4,-3,-2,-1,0,1,2,3,4\}\).
The indexed bits of \(d_{jk}4^k\bmod M\) are computable in polynomial space because \(d_{jk}\) has constant binary length.
For fixed \(j\), give a uniform iterated-addition circuit indexed access to these \(m\) residues, indexed by \(0\le k<m\).
Retaining the least significant \(\log_2M\) bits computes \(c_j\bmod M\), and the preceding circuit-evaluation argument gives polynomial-space indexed access to that residue.

\proofparagraph{Primes and \(K\)}
Given \(j<\ell\), enumerate odd candidates beginning with \(3\), test each candidate for primality by trial division, and count primes until reaching \(p_j\).
This may take exponential time, but the candidate, divisor, and counter have polynomially many bits.
Give the uniform iterated-multiplication circuit of Hesse, Allender, and Barrington indexed access to a list containing \(e\) copies of each \(p_h\), for \(0\le h<\ell\).
The circuit-evaluation argument gives polynomial-space indexed-bit access to
\[
 K=\prod_{j=0}^{\ell-1}p_j^e.
\]

\proofparagraph{The integers \(\theta_j\) and \(H\)}
For fixed \(j\), iterated multiplication gives \(p_j^e\), and uniform division gives \(P_j=K/p_j^e\).
Take the least significant \(\log_2M\) bits of \(P_j\).
Apply the inverse construction above, multiply by \(c_j\bmod M\), and retain the least significant \(\log_2M\) bits.
The result is \(\lambda_j\in\{0,\ldots,M-1\}\) satisfying
\[
 \lambda_j\equiv c_jP_j^{-1}\pmod M.
\]
Scanning the bits of \(\lambda_j\) while maintaining its residue modulo \(p_j\) decides whether \(p_j\mid\lambda_j\), and hence determines \(t_j\).
Uniform multiplication gives \(\theta_j=t_jP_j\).  A uniform balanced tree of additions gives \(H=\sum_{h=0}^{\ell-1}\theta_h\).
Thus individual bits of \(H\) and \(K\) are computable in polynomial space.

\proofparagraph{Size bound}
Because \(m\) and \(\ell\) are written in the \(N\)-bit description \(\mathcal I\),
\[
 m,\ell\le2^N,
 \qquad
 e=2(m+\ell)\le2^{N+3}.
\]
Axler's bound on the \(j\)th prime~\cite{Axler2019} gives, after increasing the fixed additive constant if necessary,
\[
 p_j<2^{3N+12}
\]
for every \(j<\ell\).
With \(\mathcal D(\mathcal I)=2^{4N+20}\),
\[
 \log_2K
 <(2^{N+1})(2^{N+3})(3N+12)
 <\mathcal D(\mathcal I).
\]
Also \(M=2^{2m+1}<2^{\mathcal D(\mathcal I)}\), and \eqref{eq:h-less-than-k} gives \(H<K<2^{\mathcal D(\mathcal I)}\).
The indexed-bit characterization in \cref{lem:indexed-bits} completes the proof.
\end{proof}

\begin{proof}[Proof of \cref{thm:succinct-exact-two-normal-form}]
The integers \(M,H,K\) were defined above, and \cref{prop:one-witness-exact-two} proves the stated equivalence.
\end{proof}

The normal form contains the integers \(H\) and \(K\), which are specified by polynomial-space indexed-bit procedures rather than by explicit binary expansions.
In the following corollary, \cref{cor:fpspace-fepad} supplies polynomial-size functional EPAD definitions of those constants, and one quotient-sharing product expresses \((H-X)(H+X)\).

\begin{corollary}[Succinct Exact-2-in-4 representation in EPAD]
\label{cor:succinct-exact-two-epad}
Given a succinct Exact-2-in-4 instance \(\mathcal I=(D,m,\ell)\), one can construct in time polynomial in \(|\mathcal I|\) an EPAD sentence \(\Xi_{\mathcal I}\) such that
\[
 \Xi_{\mathcal I}\text{ is satisfiable}
 \quad\Longleftrightarrow\quad
 F_{\mathcal I}\text{ is satisfiable}.
\]
\end{corollary}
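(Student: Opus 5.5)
The plan is to assemble \(\Xi_{\mathcal I}\) directly from \cref{thm:succinct-exact-two-normal-form} and replace each exponentially long constant by a short functional definition. By \cref{prop:hk-fpspace}, the maps \(\mathcal I\mapsto H(\mathcal I)\) and \(\mathcal I\mapsto K(\mathcal I)\), extended by zero on malformed inputs, belong to \(\FPSPACE\). \Cref{cor:fpspace-fepad} then gives polynomial-time uniform, polynomial-size functional EPAD formulas \(\Phi^H_N(\vec x,y)\) and \(\Phi^K_N(\vec x,y)\), where \(N=|\mathcal I|\). The sentence is built by generating these formulas for input length \(N\) and substituting the bits of \(\mathcal I\) as constants \(0/1\) for \(\vec x\). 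Functionality guarantees that any witness assigns the output variables exactly \(H\) and \(K\). If \(\mathcal I\) is malformed, output a fixed unsatisfiable sentence; malformedness is checkable in polynomial time.

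The modulus \(M=2^{2m+1}\) cannot be written as a coefficient either, since \(m\) is given in binary. Introduce a variable \(\mathsf M\) and impose \(\Scale_{2m+1}(1,\mathsf M)\) from \cref{lem:signed-scaling}. This has size polynomial in \(\len(m)\le N\). Then introduce \(X\) and impose \(-\mathsf H\le X\le\mathsf H\) and \(\mathsf M\mid X\). Both constraints are affine in the variables \(\mathsf H,\mathsf M,X\), so they are legitimate EPAD atoms.

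The nonlinear term \(\mathsf H^2-X^2=(\mathsf H-X)(\mathsf H+X)\) is handled by \cref{lem:product-gadget}, with \(U=\mathsf H+X\), \(V=\mathsf H-X\), and \(W\) a fresh variable. The gadget needs bounds: under \(|X|\le\mathsf H\) we have \(0\le U\le 2\mathsf H\) and \(0\le W\le 4\mathsf H^2\). By \cref{prop:hk-fpspace}, \(\mathsf H<2^{\mathcal D}\) with \(\mathcal D=2^{4N+20}\), so \(M_U=2^{\mathcal D+2}\) and \(M_W=2^{2\mathcal D+3}\) work, and \(\Lambda=2^{2\mathcal D+5}\) satisfies \(\Lambda>M_U+M_W+1\). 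The exponent \(2\mathcal D+5\) has binary length \(O(N)\), so the multiples \(\Lambda V\) and \(\Lambda W\) are defined by \(\Scale\) in polynomial size. To make the gadget's hypotheses hold as constraints, add \(0\le U\le 2\mathsf H\), \(0\le W\), and a bound \(W<2^{2\mathcal D+3}\) expressed via a variable defined by \(\Scale_{2\mathcal D+3}(1,\cdot)\). These bounds are satisfied in the intended solution, since \(\mathsf H\) really is below \(2^{\mathcal D}\). The last constraint is \(\mathsf K\mid W\).

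Correctness then follows directly. Given any satisfying valuation, functionality fixes \(\mathsf H=H\), \(\mathsf K=K\) and \(\mathsf M=M\), and the gadget forces \(W=H^2-X^2\). Hence \(X\) meets the conditions of \cref{thm:succinct-exact-two-normal-form}. Conversely, a witness \(X\) for the normal form extends to a satisfying valuation by taking the honest values for all auxiliary variables. The construction runs in polynomial time because every piece is polynomial-time uniform in \(N\).

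The only delicate point is the gadget bounds. The gadget requires a priori bounds on \(U\) and \(W\) that are independent of the witness, but \(\mathsf H\) is only a variable. I would handle this with the explicit exponential bound \(2^{\mathcal D}\) from \cref{prop:hk-fpspace}, writing every such bound via \(\Scale\) rather than as a coefficient. I expect this step to need care, but not new ideas.
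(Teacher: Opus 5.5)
Your proposal is correct and follows the paper's proof essentially step for step. You define \(H\) and \(K\) functionally via \cref{prop:hk-fpspace,cor:fpspace-fepad} with the bits of \(\mathcal I\) substituted, obtain \(M\) by signed scaling, and use a single quotient-sharing product whose operand, product, and scaling bounds come from \(\mathcal D=2^{4N+20}\) and are introduced through \(\Scale\); the only differences are cosmetic (slightly larger bound exponents, the roles of \(H\pm X\) swapped, and explicit handling of malformed encodings).
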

\begin{proof}
Put \(N=|\mathcal I|\).

\proofparagraph{Defining the constants}
By \cref{prop:hk-fpspace,cor:fpspace-fepad}, the functions mapping \(\mathcal I\) to \(H(\mathcal I)\) and \(K(\mathcal I)\) have polynomial-time uniform, polynomial-size functional EPAD definitions.
Construct those formulas for input length \(N\) and substitute the concrete bits of \(\mathcal I\) for their Boolean inputs.
This gives polynomial-size formulas with unique outputs \(H\) and \(K\).
Using signed scaling from \(1\), introduce
\[
 M=2^{2m+1}.
\]

\proofparagraph{Defining the square difference}
For the product bounds, put \(\mathcal D=2^{4N+20}\).
By \cref{prop:hk-fpspace}, \(H,K,M<2^{\mathcal D}\).
Using signed scaling from \(1\), introduce the operand bound, product bound, and scaling factor
\[
 B_1=2^{\mathcal D+1},
 \qquad
 B_2=2^{2\mathcal D+2},
 \qquad
 \Lambda=2^{2\mathcal D+4}.
\]
Quantify \(X\) with \(-H\le X\le H\), and quantify \(Y\) with \(0\le Y<B_2\).
The affine forms \(H-X\) and \(H+X\) lie in \([0,2H]\), so each is less than \(B_1\), and their product is less than \(B_2\).
Apply \cref{lem:product-gadget} to the affine forms \(H-X\), \(H+X\), and \(Y\), with operand bound \(B_1\), product bound \(B_2\), and scaling factor \(\Lambda\), thereby forcing
\[
 Y=(H-X)(H+X)=H^2-X^2.
\]
The inequality
\[
 2^{2\mathcal D+4}>2^{\mathcal D+1}+2^{2\mathcal D+2}+1
\]
verifies the hypothesis of \cref{lem:product-gadget}.
Finally impose
\[
 M\mid X,
 \qquad
 K\mid Y.
\]
Let \(\Xi_{\mathcal I}\) be the existential closure of all constraints introduced above.
By \cref{prop:one-witness-exact-two}, it is satisfiable exactly when \(F_{\mathcal I}\) is satisfiable.
Every scaling exponent has polynomial-length binary representation, and the functional definitions of \(H\) and \(K\) have polynomial size, so the construction is polynomial.
\end{proof}

\begin{remark}
The sentence \(\Xi_{\mathcal I}\) belongs to the \emph{primitive positive} fragment of EPAD: after expanding the auxiliary formulas, it is an existentially quantified conjunction of affine-linear equalities, inequalities, and divisibility atoms.
Thus EPAD satisfiability is \(\NEXP\)-hard already on this fragment.
\end{remark}

\section{NEXP-Completeness of EPAD Satisfiability}\label{sec:nexp-completeness}

The lower bound is now a direct reduction from \textsc{Succinct Exact-2-in-4-SAT}.
The normal form in \cref{thm:succinct-exact-two-normal-form} represents the exponentially many Boolean variables by one integer witness and two divisibility conditions.
The modulus \(M=2^{2m+1}\) is defined by signed scaling (\cref{lem:signed-scaling}).
For \(H\) and \(K\), \cref{prop:hk-fpspace} gives polynomial-space indexed-bit procedures, and \cref{thm:alc-fpspace,cor:fpspace-fepad} give polynomial-size functional EPAD definitions of their values.
Functionality is essential: existentially quantified variables in these definitions cannot choose spurious values for the constants.

\EpadComplete*

\begin{proof}
Membership follows from the exponential small-solution theorem of Lechner, Ouaknine, and Worrell~\cite{LechnerOuaknineWorrell2015}.

For hardness, let \(\mathcal I\) be an instance of \textsc{Succinct Exact-2-in-4-SAT}.
By \cref{cor:succinct-exact-two-epad}, the EPAD sentence \(\Xi_{\mathcal I}\) is satisfiable exactly when \(F_{\mathcal I}\) is satisfiable, and it is constructed in time polynomial in \(|\mathcal I|\).
By \cref{prop:succinct-exact-two-nexp}, this proves \(\NEXP\)-hardness.
\end{proof}

Haase, Kreutzer, Ouaknine, and Worrell translate reachability for parametric one-counter automata into EPAD satisfiability and back~\cite{HaaseKreutzerOuaknineWorrell2009}.
Both translations preserve enough of the complexity to transfer the main theorem.

The translation from automata to EPAD gives the upper bound.
For the lower bound, the reverse translation must remain polynomial when coefficients are encoded in binary.

\begin{proof}[Proof of the parametric one-counter part of
\Cref{cor:downstream-hardness}]
\proofparagraph{Upper bound}
For an automaton and two control locations, one can compute EPAD formulas of polynomial size defining the reachability relation, one formula for each guessed order of the traversed zero tests~\cite[Section~4.2]{HaaseThesis2012}.
Guessing one such formula and deciding its satisfiability is a nondeterministic polynomial-time procedure with one call to EPAD satisfiability.
The upper bound of Lechner, Ouaknine, and Worrell~\cite{LechnerOuaknineWorrell2015} therefore places reachability in \(\NEXP\).

\proofparagraph{Lower bound}
The converse translation compiles a positive Boolean combination of the atoms \(A=B\), \(A<B\), \(A\mid B\), and \(\neg(A\mid B)\) into an automaton.
Each atom has a gadget: for \(A\mid B\), for instance, the automaton loads \(B\) into its counter and subtracts \(A\) until the counter reaches zero.
Sequential composition of gadgets expresses conjunction, and branching on control locations expresses disjunction.
Each gadget begins by guessing the signs of its two terms, so these case distinctions belong to the automaton and are not made by the reduction.

Two normalizations of the given formula make this translation deterministic and polynomial for binary-encoded coefficients.
First, put the formula in negation normal form, keeping \(\neg(A\mid B)\) as an atom and rewriting negated equalities and inequalities using
\[
 \neg(A=B)\ \Longleftrightarrow\ A<B\lor B<A,
 \qquad
 \neg(A<B)\ \Longleftrightarrow\ A=B\lor B<A.
\]
Since parameters are nonnegative, replace every variable \(z\) by a difference \(z^+-z^-\) of two fresh nonnegative variables.
Second, a gadget spends one edge per occurrence of a parameter, so a coefficient written in binary would require exponentially many edges.
For each product \(cz\) occurring in the formula, introduce instead fresh variables and \(O(\len(c))\) equations that double and add along the binary expansion of \(c\), and replace \(cz\) by the last of these variables.
The resulting equisatisfiable formula has only coefficients in \(\{-1,0,1\}\), and its constants may remain binary because a single edge adds any binary constant.

Both normalizations are computable in polynomial time.
Hence \Cref{thm:epad-complete} gives \(\NEXP\)-hardness, and the upper bound above gives \(\NEXP\)-completeness.
\end{proof}

\section{Functional EPAD Definitions of Exponential-Size ALCs}
\label{sec:expalc-fepad-section}

A polynomial-size EPAD translation cannot introduce one variable for every gate of an exponential-size ALC\@.
Instead, it represents correct bit-level evaluation by a succinct Exact-2-in-4 instance.
We extend the encoding from \cref{sec:succinct-exact-two-normal-form} so that an EPAD variable can specify, as a packed binary integer, the values assigned to a designated prefix of the instance's Boolean variables.
This prefix records the circuit inputs and output, together with the reference variable \(t\) from \cref{lem:three-cnf-to-exact-two}, fixed to one.

Throughout this section, \((C_n)\) may have \(n\) Boolean inputs and polynomially many free word inputs.
Each word has a uniformly specified length at most the circuit word-width, and an input gate's local description specifies the corresponding word and bit position.
Free word inputs let the computation compose with exponentially wide integers already defined by EPAD formulas.
The original Boolean-input model is the case with no free word inputs.

\subparagraph*{External interface.}
Let \(\mathcal V_n\) be the ordered list containing the Boolean input bits, the bits of each free word from least to most significant, and the output bits from least to most significant.
A propositional formula family is an \emph{evaluation encoding} of \((C_n)\) if its first variables are \(\mathcal V_n\) and, for every assignment to them, the formula has a satisfying extension exactly when the assigned output word is the value of \(C_n\) on the assigned inputs.

\subparagraph*{Locally accessible evaluation 3CNF.}
For a given input length \(n\), write \(W=W(n)\) and consider the bounded-fan-in Boolean circuit obtained by replacing each \(W\)-bit ALC gate by a standard bit-level implementation.
Use ripple-carry addition, schoolbook multiplication truncated to the lowest \(W\) bits, and fixed rewiring for logical shifts.
These implementations have polynomial size in \(W\) and admit local gate numberings.
From the \(O(\log W)\)-bit local coordinates of an internal Boolean gate, its type and predecessors are computable in polynomial time.
Fix a polynomial \(P_{\mathrm{impl}}\) that bounds all these implementation sizes.
Allocate a block of \(P_{\mathrm{impl}}(W)\) positions to every ALC gate, padding unused positions by constant-zero gates.
The Boolean gate count is then \(S(n)P_{\mathrm{impl}}(W)\), whose binary representation is computable in polynomial time.
The numbers of Boolean variables and Tseitin clause slots are fixed sums and products of the same quantities.
Their binary representations are therefore polynomial-time computable as well.
This Boolean circuit has \(2^{\poly(n)}\) gates, but we do not construct it explicitly.
A Boolean gate is indexed by the ALC-gate index and a local position inside its block.
Using the uniformity algorithm of \cref{def:alc}, its type and predecessors are therefore recoverable in polynomial time.

The Tseitin encoding is accessed locally as well.
Allocate the same fixed number of clause slots to every Boolean gate, padding unused slots by tautologies.
Given a clause index, determine the corresponding Boolean gate and clause slot, then recover the gate and predecessor information through the local interface above.
For example, an AND gate \(g=a\wedge b\) contributes
\[
 (\neg g\vee a),\quad(\neg g\vee b),\quad(g\vee\neg a\vee\neg b).
\]
Analogous constant-size clauses handle NOT, constants, and wire copies.
Pad short clauses by repeated literals.
Copy variables tie the external inputs and outputs to the corresponding circuit wires, while constraints set the unused high bits of a short word input to zero.
Number the variables in \(\mathcal V_n\) first.

The resulting clause-access algorithm runs in polynomial time on polynomial-length indices.
By the standard circuit simulation of polynomial-time computation, this clause map has a polynomial-size Boolean circuit constructible from \(1^n\) in polynomial time.
Thus the exponentially large evaluation 3CNF has a polynomial-size clause-description circuit returning three literals per clause; the 3CNF itself is never written down.

\begin{lemma}[Exact-2-in-4 evaluation instance]
\label{lem:expalc-exact-two}
For every polynomial-time uniform exponential-size ALC family and every input length \(n\), one can construct from \(1^n\), in polynomial time, a succinct Exact-2-in-4 instance \(\mathcal I_n\) and a prefix length \(L(n)=|\mathcal V_n|+1\).
The first \(L(n)-1\) variables are the ordered external bits \(\mathcal V_n\), and variable \(u_{L(n)-1}\) is the reference variable \(t\) of \cref{lem:three-cnf-to-exact-two}.
The formula \(F_{\mathcal I_n}\wedge t\) is an evaluation encoding of \(C_n\).
\end{lemma}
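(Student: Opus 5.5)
The plan is to compose the two local constructions already in place: the locally accessible evaluation 3CNF described just above the statement, and the local conversion of \cref{lem:three-cnf-to-exact-two}. First I will fix the evaluation 3CNF \(G_n\) for \(C_n\) with its variables numbered so that \(\mathcal V_n\) comes first, at indices \(0,\ldots,|\mathcal V_n|-1\). Every internal wire, every copy variable and every Tseitin auxiliary is numbered after this prefix. Its clause count \(m_G\) and variable count \(\ell_G\) are fixed sums and products of \(S(n)\), \(P_{\mathrm{impl}}(W)\) and \(|\mathcal V_n|\), so their binary representations are computable from \(1^n\) in polynomial time. The clause-access algorithm, together with the standard circuit simulation of polynomial-time computation, yields a polynomial-size clause-description circuit \(D_G\) for \(G_n\). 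The first correctness claim is that \(G_n\) is an evaluation encoding of \(C_n\). Fix an assignment to \(\mathcal V_n\). The Tseitin clauses force each Boolean wire to equal its gate function of its predecessors. The bit-level implementations of \(+_W\), \(\times_W\) and the shifts are correct. The copy clauses tie the inputs and outputs to the corresponding wires, and the padding constraints zero the unused high bits of short word inputs. Hence a satisfying extension exists exactly when the assigned output bits equal \(C_n\) on the assigned inputs, and that extension is unique.

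Next I will apply \cref{lem:three-cnf-to-exact-two} to \(G_n\), with one index adjustment. The lemma gives the reference variable \(t\) index \(\ell_G\), but the statement requires \(t=u_{L(n)-1}=u_{|\mathcal V_n|}\). I will therefore renumber the variables. Indices \(0,\ldots,|\mathcal V_n|-1\) stay as they are. The variable \(t\) receives index \(|\mathcal V_n|\). Each other inherited variable \(j\ge|\mathcal V_n|\) is shifted to \(j+1\). The auxiliary variables for source clause \(q\) receive indices \(\ell_G+1+4q,\ldots,\ell_G+4+4q\). This renumbering is a polynomial-size arithmetic correction applied to the variable indices returned by \(D_G\), and it is inserted into the local construction in the proof of \cref{lem:three-cnf-to-exact-two}. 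The result is \(\mathcal I_n=(D,3m_G,\ell_G+4m_G+1)\), produced in polynomial time.

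Finally I will verify that \(F_{\mathcal I_n}\wedge t\) is an evaluation encoding. \Cref{lem:three-cnf-to-exact-two} states that an assignment to the variables of \(G_n\) satisfies \(G_n\) if and only if it extends to a satisfying assignment of \(F\) with \(t=1\). Fix an assignment to \(\mathcal V_n\) and set \(t=1\). A satisfying extension for \(F_{\mathcal I_n}\) restricts to a satisfying assignment of \(G_n\), so the output word is correct. Conversely, a correct output word extends to a model of \(G_n\), and that model extends further to a model of \(F_{\mathcal I_n}\) with \(t=1\).

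I expect the main obstacle to be bookkeeping rather than mathematics. The clause-description circuit must be produced uniformly from \(1^n\). Its variable indices must match the conventions of \cref{def:succinct-exact-two}: \(r_\ell\)-bit outputs interpreted modulo \(\ell\), with every produced index already below \(\ell\). The padding must also be handled correctly: tautological padding clauses of \(G_n\) must still be valid 3-literal clauses, and they are then translated into Exact-2-in-4 gadgets that remain satisfiable. I will check that a tautology \(x\vee\neg x\vee x\) yields a satisfiable gadget for every value of \(x\), which follows from the gadget analysis in \cref{lem:three-cnf-to-exact-two}.
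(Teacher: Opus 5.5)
Your proposal is correct and follows the paper's proof. Like the paper, you build the clause-description circuit for the locally accessible evaluation 3CNF with \(\mathcal V_n\) numbered first, apply the local conversion of \cref{lem:three-cnf-to-exact-two}, move \(t\) to index \(|\mathcal V_n|\) by shifting the intervening indices by one, and use that lemma's extension property with \(t=1\); your extra checks on tautological padding clauses and index ranges just spell out bookkeeping the paper leaves implicit.
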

\begin{proof}
Construct the clause-description circuit for the evaluation 3CNF above, then apply the local part of \cref{lem:three-cnf-to-exact-two}.
Move the reference variable \(t\) immediately after the variables in \(\mathcal V_n\), shifting the intervening variable indices by one.
This reindexing is computable by binary arithmetic and preserves polynomial-time clause access.
By \cref{lem:three-cnf-to-exact-two}, fixing \(t=1\) preserves the existence of a satisfying extension for every assignment to the external bits.
\end{proof}

The lower-bound encoding in \cref{sec:succinct-exact-two-normal-form} only asks whether the succinct formula is satisfiable.
For the circuit application we need to fix a prefix of a satisfying assignment.
The following extension records exactly the additional terms required for that interface.

\begin{lemma}[Designated-prefix extension]\label{lem:designated-prefix-extension}
Let \(\mathcal I=(D,m,\ell)\) be a succinct Exact-2-in-4 instance and let \(0\le L\le\ell\).
Fix a standard polynomial-time decodable binary encoding \(\langle\mathcal I,L\rangle\) of such pairs.
For an assignment \(\vec u\in\bits^\ell\), put
\[
 \word_L(\vec u)=\sum_{b=0}^{L-1}u_b2^b.
\]
One can construct in time polynomial in \(|\langle\mathcal I,L\rangle|\) an EPAD formula \(\Xi^{\mathrm{pre}}_{\mathcal I,L}(Z)\) such that, for every integer \(Z\) with \(0\le Z<2^L\),
\[
 \Xi^{\mathrm{pre}}_{\mathcal I,L}(Z)
 \quad\Longleftrightarrow\quad
 F_{\mathcal I}\text{ has a satisfying assignment }\vec u
 \text{ with }\word_L(\vec u)=Z.
\]
\end{lemma}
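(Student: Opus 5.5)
The plan is to reuse the one-integer normal form of \cref{prop:one-witness-exact-two} and add one extra congruence that ties the first \(L\) signs to \(Z\). Keep \(X=\sum_j\alpha_j\theta_j\) constrained by \(-H\le X\le H\), \(M\mid X\), and \(K\mid H^2-X^2\). By \cref{lem:recover-signs}, these constraints make \(X\) encode a unique sign vector \(\vec\alpha\), and the clause test is \(M\mid X\). What remains is to read off \(\word_L(\vec u)=\sum_{b<L}\tfrac{1+\alpha_b}{2}2^b\) linearly from \(X\) modulo some auxiliary modulus.

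To do this, I would enlarge the Chinese-remainder construction by a second power-of-two modulus. Choose \(M'=2^{L+2}\) and redefine \(\lambda_j\) so that it satisfies two congruences at once: \(\theta_j\equiv c_j\pmod M\) as before, and \(\theta_j\equiv 2^j\) for \(j<L\) (resp.\ \(\theta_j\equiv0\) for \(j\ge L\)) in a separate, disjoint window.

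Two-adic moduli clash, so I would instead place the windows at different bit positions. Take a single power of two \(\widehat M=2^{2m+1+L+2}\) and require
\[
 \theta_j\equiv c_j+2^{2m+1}\,g_j\pmod{\widehat M},
\]
with \(g_j=2^j\) for \(j<L\) and \(g_j=0\) otherwise. Then \(X\equiv\mathcal E(\vec\alpha)+2^{2m+1}\sum_{b<L}\alpha_b2^b\pmod{\widehat M}\). Since \(\sum_{b<L}\alpha_b2^b=2\word_L(\vec u)-(2^L-1)\), the formula can impose
\[
 \widehat M \mid X-2^{2m+1}\bigl(2Z-2^L+1\bigr).
\]
This works provided the low part \(\mathcal E(\vec\alpha)\) cannot interfere. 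Here I must be careful, because \(\mathcal E\) may be negative and nonzero. I would therefore argue as follows. The congruence modulo \(2^{2m+1}\) alone gives \(\mathcal E\equiv0\), hence \(\mathcal E=0\) by \cref{lem:exact-two-congruence}. With \(\mathcal E=0\), the high window then forces \(\sum\alpha_b2^b\equiv2Z-2^L+1\pmod{2^{L+2}}\). Both sides lie in \((-2^{L+1},2^{L+1})\), so they are equal, and this yields \(\word_L=Z\). The converse direction is immediate from the same computation.

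The constants then need to be made available. The bound \(2H<K\) requires enlarging \(e\) to \(2(m+\ell+L)\) or similar, and I would redo the inequality \eqref{eq:h-less-than-k} with \(\widehat M\) in place of \(M\). The indexed-bit procedures of \cref{prop:hk-fpspace} go through verbatim with \(c_j\) replaced by \(c_j+2^{2m+1}g_j\), which is computable from \(j\) and \(L\). The resulting \(H,K\) are \(\FPSPACE\) functions of \(\langle\mathcal I,L\rangle\), so \cref{cor:fpspace-fepad} gives functional EPAD definitions. The powers \(\widehat M\), \(2^{2m+1}\), \(2^{2m+2}\) and \(2^{2m+1+L}\) come from \cref{lem:signed-scaling}, and the product \((H-X)(H+X)\) is handled exactly as in \cref{cor:succinct-exact-two-epad}. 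The main obstacle I expect is the interference argument above, namely ensuring that the two windows decouple. The order of deduction (low window first, then high) is what resolves it, and the bound \(|\mathcal E|<2^{2m+1}\) from \eqref{eq:packed-score-bound} is exactly what is needed.
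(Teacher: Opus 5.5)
Your proposal is correct and follows essentially the paper's route. The paper also places the prefix term \(A\sum_{b<L}2^b\alpha_b\) (with \(A=4^m\)) above the clause window and imposes the single congruence \(M_L\mid X-A\tau_L(Z)\), where \(M_L=2^{2m+L+1}\) and \(\tau_L(Z)=2Z-(2^L-1)\). It then reruns the Chinese-remainder weights with coefficients \(c_i+A2^i\) and obtains \(H_L,K\) as \(\FPSPACE\) functions of \(\langle\mathcal I,L\rangle\).

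The differences are minor:
\begin{itemize}
\item The paper separates the two windows with the single estimate \(|\mathcal E(\vec\alpha)+2A(\word_L(\vec u)-Z)|<M_L\), rather than your low-then-high deduction.
\item Its smaller modulus lets it keep \(e=2(m+\ell)\), using \(L\le\ell\). With your \(\widehat M=2^{2m+L+3}\), the estimate \(4\ell\widehat M\le3^e\) behind \(2H<K\) really does need a larger \(e\). Your literal choice \(e=2(m+\ell+L)\) still fails it at \(m=\ell=1\), \(L=0\) (there \(128>81\)). Adding a constant, e.g.\ \(e=2(m+\ell+L+2)\), fixes this.
\end{itemize}
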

\begin{proof}
Use the signs \(\vec\alpha\) from \cref{subsec:exact-two-congruence}.
Put \(A=4^m\), and recall
\[
 \mathcal E(\vec\alpha)=\sum_{k=0}^{m-1}s_k(\vec\alpha)4^k.
\]
Define
\[
 \mathcal E_L(\vec\alpha)
 =\mathcal E(\vec\alpha)+A\sum_{b=0}^{L-1}2^b\alpha_b.
\]
For the target prefix value \(Z\), put
\[
 \tau_L(Z)=2Z-(2^L-1),
 \qquad
 M_L=A2^{L+1}=2^{2m+L+1}.
\]

For \(\vec u=\vec u(\vec\alpha)\),
\[
 \mathcal E_L(\vec\alpha)-A\tau_L(Z)
 =\mathcal E(\vec\alpha)+2A\bigl(\word_L(\vec u)-Z\bigr).
\]
Since \(|\mathcal E(\vec\alpha)|<2A\) by \eqref{eq:packed-score-bound}, this difference is zero exactly when \(\mathcal E(\vec\alpha)=0\) and \(\word_L(\vec u)=Z\).
By \cref{lem:exact-two-congruence}, these conditions hold exactly when \(\vec u\) satisfies \(F_{\mathcal I}\) and has the prescribed prefix.

The same statement can be written as one congruence.
Since \(0\le\word_L(\vec u),Z<2^L\),
\[
 \begin{aligned}
 \left|\mathcal E_L(\vec\alpha)-A\tau_L(Z)\right|
 &\le |\mathcal E(\vec\alpha)|+2A\left|\word_L(\vec u)-Z\right|\\
 &<2A+2A(2^L-1)=M_L.
 \end{aligned}
\]
Hence
\begin{equation}\label{eq:prefix-congruence}
 F_{\mathcal I}\text{ has a satisfying assignment with prefix }Z
 \quad\Longleftrightarrow\quad
 \exists\vec\alpha\in\{-1,+1\}^{\ell}\colon
 \mathcal E_L(\vec\alpha)\equiv A\tau_L(Z)\pmod{M_L}.
\end{equation}

Write
\[
 \mathcal E_L(\vec\alpha)=\sum_{j=0}^{\ell-1}c_j^{(L)}\alpha_j,
\]
where, for \(0\le i<\ell\),
\[
 c_i^{(L)}
 =c_i+
 \begin{cases}
  A2^i,&i<L,\\
  0,&i\ge L.
 \end{cases}
\]
Apply the construction of \cref{subsec:assignment-one-integer} with modulus \(M_L\), coefficients \(c_i^{(L)}\), and exponent \(e=2(m+\ell)\).
The prime product remains \(K=\prod_{j<\ell}p_j^e\).
Write \(\theta_j^{(L)}\) for the resulting weights and \(H_L=\sum_{j=0}^{\ell-1}\theta_j^{(L)}\).
The uniqueness estimate remains valid.
Since \(L\le \ell\), \(\ell\le2^{\ell-1}\), and \(m+\ell\ge2\),
\[
 4\ell M_L
 =4\ell\,4^m2^{L+1}
 \le4^{m+\ell+1}
 <9^{m+\ell}
 =3^e.
\]
Thus \(2H_L<K\), exactly as in \eqref{eq:h-less-than-k}.
The recovery proof of \cref{lem:recover-signs} therefore gives, together with \eqref{eq:prefix-congruence},
\[
 \begin{aligned}
 &F_{\mathcal I}\text{ has a satisfying assignment with prefix }Z\\
 &\qquad\Longleftrightarrow\quad
 \exists X\in\Z:\
 -H_L\le X\le H_L,\quad
 M_L\mid X-A\tau_L(Z),\quad
 K\mid H_L^2-X^2.
 \end{aligned}
\]

It remains to make this description succinct.
Let \(N=|\langle\mathcal I,L\rangle|\).
For the indexed-bit procedure of \cref{prop:hk-fpspace}, the modulus exponent is \(2m+L+1\), and coefficient \(c_i\) receives the additional term \(A2^i\) when \(i<L\).
All relevant indices and exponents have \(O(N)\)-bit representations.
The bounds from \cref{prop:hk-fpspace} hold with this value of \(N\): one has \(\ell\le2^{N+1}\), \(e\le2^{N+3}\), and
\[
 2m+L+1<2^{N+3}.
\]
Consequently the maps \(\langle\mathcal I,L\rangle\mapsto H_L\) and \(\langle\mathcal I,L\rangle\mapsto K\) belong to \(\FPSPACE\).  Put
\[
 \mathcal D=2^{4N+20};
\]
then
\[
 H_L,K,M_L<2^{\mathcal D}.
\]
By \cref{cor:fpspace-fepad}, \(H_L\) and \(K\) therefore have polynomial-size functional EPAD definitions.

In the EPAD formula, generate
\[
 U_L=2^L,
 \qquad
 M_L=2^{2m+L+1},
\]
and impose \(0\le Z<U_L\).
Introduce a target variable \(R_Z\) with
\[
 \Scale_{2m}(2Z-U_L+1,R_Z),
\]
so that \(R_Z=A\tau_L(Z)\).
As in \cref{cor:succinct-exact-two-epad}, generate the operand bound, product bound, and scaling factor
\[
 B_1=2^{\mathcal D+1},
 \qquad
 B_2=2^{2\mathcal D+2},
 \qquad
 \Lambda=2^{2\mathcal D+4},
\]
quantify \(-H_L\le X\le H_L\) and \(0\le Y<B_2\).
The affine forms \(H_L-X\) and \(H_L+X\) lie in \([0,2H_L]\), hence are both less than \(B_1\).  Apply \cref{lem:product-gadget} with operand bound \(B_1\), product bound \(B_2\), and scaling factor \(\Lambda\) to force
\[
 Y=(H_L-X)(H_L+X).
\]
Finally impose
\[
 M_L\mid X-R_Z,
 \qquad
 K\mid Y.
\]
The same inequality as in \cref{cor:succinct-exact-two-epad} verifies \(\Lambda>B_1+B_2+1\), the hypothesis required by the quotient-sharing gadget.
Every new exponent has polynomial-length binary representation, so this gives the required polynomial-size formula \(\Xi^{\mathrm{pre}}_{\mathcal I,L}(Z)\).
\end{proof}

The encoding expresses correctness of an exponentially large circuit through the same succinct Exact-2-in-4 interface used by the lower-bound reduction.
Because the external bits occupy the first indices, the designated-prefix extension connects the circuit computation to free EPAD variables through one packed integer.

\begin{theorem}[EPAD representation of exponential-size ALCs]\label{thm:expalc-fepad}
Let \((C_n)\) be a polynomial-time uniform exponential-size ALC family with Boolean inputs \(\vec x=(x_0,\ldots,x_{n-1})\) and polynomially many free word inputs \(z_1,\ldots,z_t\), each of a uniformly specified length at most \(W(n)\).  Write \(\vec z=(z_1,\ldots,z_t)\).
One can construct in polynomial time an EPAD formula
\[
 \Phi_n(\vec x,\vec z,y)
\]
that is functional in \(y\) and satisfies
\[
 \Phi_n(\vec x,\vec z,y)
 \quad\Longleftrightarrow\quad
 y=C_n(\vec x,\vec z)
\]
whenever each word input lies in its declared range and \(0\le y<2^{W(n)}\).
\end{theorem}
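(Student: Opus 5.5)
The formula is built by packing the external bits into one integer and feeding it to the designated-prefix extension. First apply \cref{lem:expalc-exact-two} to $1^n$ to obtain the succinct instance $\mathcal I_n$ and the prefix length $L=L(n)=|\mathcal V_n|+1$. The last prefix variable is the reference variable $t$; all others are the ordered external bits $\mathcal V_n$. Then invoke \cref{lem:designated-prefix-extension} with $\mathcal I_n$ and $L$ to obtain $\Xi^{\mathrm{pre}}_{\mathcal I_n,L}(Z)$. This formula holds for $0\le Z<2^L$ exactly when $F_{\mathcal I_n}$ has a satisfying assignment whose first $L$ bits are $Z$. Both constructions run in polynomial time, since all lengths, including $L$ and the word widths, have polynomial-length binary descriptions.

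Next, $Z$ is assembled affinely from the free variables $\vec x,\vec z,y$. Let $o_1$ be the offset of the first free word in $\mathcal V_n$, and let $o_2,\ldots,o_t$ and $o_y$ be the later block offsets. These are fixed sums of $n$ and the declared word lengths, so they have polynomial binary length. Introduce variables $Z_1,\ldots,Z_t,Z_y,T$ with $\Scale_{o_j}(z_j,Z_j)$, $\Scale_{o_y}(y,Z_y)$ and $\Scale_{L-1}(1,T)$, using \cref{lem:signed-scaling}. Then impose
\[
 Z=\sum_{i<n}2^i x_i+\sum_{j}Z_j+Z_y+T.
\]
The small coefficients $2^i$ with $i<n$ can be written directly. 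Because each word lies in its declared range, its block in $Z$ is disjoint from the others, so $Z=\word_L(\vec u)$ for the assignment with these external bits and $t=1$. The declared-range hypotheses of the statement also give $0\le Z<2^L$. Set $\Phi_n=\exists Z,Z_1,\ldots,T\,(\ldots\wedge\Xi^{\mathrm{pre}}_{\mathcal I_n,L}(Z))$.

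For correctness, $\Phi_n(\vec x,\vec z,y)$ holds iff $F_{\mathcal I_n}$ has a satisfying assignment whose prefix agrees with the given bits and has $t=1$. This is the same as $F_{\mathcal I_n}\wedge t$ having a satisfying extension of that assignment to $\mathcal V_n$. By the evaluation-encoding property this holds iff the output bits spell $C_n(\vec x,\vec z)$, that is, iff $y=C_n(\vec x,\vec z)$, which also gives functionality in $y$.

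The main obstacle is the range assumption on $y$. The decomposition of $Z$ into blocks is valid only if $0\le y<2^{W(n)}$, with the output block of length $W(n)$. The statement grants this, but to obtain unconditional functionality I would add the explicit constraints $0\le y<2^{W(n)}$ and $0\le z_j<2^{\text{len}_j}$, with the powers defined by $\Scale$. A second point to check is that the word input bits are numbered least significant first inside $\mathcal V_n$, so that each word's binary value matches its block in $\word_L$; this is arranged by the ordering fixed in the external interface.
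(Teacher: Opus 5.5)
Your proposal is correct and follows the paper's proof essentially step for step. You pack $\vec x$, the word blocks at offsets $o_h$, the output at $o_y$, and the reference bit $2^{L-1}$ into one integer $Z$ via signed scaling, impose the range constraints, and conjoin $\Xi^{\mathrm{pre}}_{\mathcal I_n,L}(Z)$ from \cref{lem:designated-prefix-extension}. You then conclude by the evaluation-encoding property of \cref{lem:expalc-exact-two}, with determinism of $C_n$ giving functionality. The only difference is cosmetic: you write the coefficients $2^i$, $i<n$, directly rather than through $\Scale$, which is harmless since $n$ is polynomial.
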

\begin{proof}
Use \cref{lem:expalc-exact-two} to construct the succinct evaluation instance \(\mathcal I_n\).
Let the declared lengths of the word inputs be \(L_1,\ldots,L_t\), and define their starting offsets and the output offset by
\[
 o_h=n+\sum_{j=1}^{h-1}L_j\quad(1\le h\le t),
 \qquad
 o_y=n+\sum_{j=1}^tL_j.
\]
The prefix length returned by \cref{lem:expalc-exact-two} is
\[
 L=L(n)=o_y+W(n)+1.
\]
Its first \(L-1\) variables are precisely the external bits \(\mathcal V_n\), and variable \(u_{L-1}\) is the reference bit.
In EPAD, introduce \(Z\) and impose
\[
 Z=\sum_{i=0}^{n-1}2^ix_i
   +\sum_{h=1}^t2^{o_h}z_h
   +2^{o_y}y+2^{L-1}
\]
using signed scaling and affine addition.
The summands occupy disjoint bit intervals, so \(Z\) encodes the external bits followed by the reference bit \(1\).
Constrain \(0\le x_i\le1\) for every Boolean input, and constrain every word input and \(y\) to its specified range.
Apply \cref{lem:designated-prefix-extension} to \((\mathcal I_n,L)\) and conjoin the resulting formula \(\Xi^{\mathrm{pre}}_{\mathcal I_n,L}(Z)\).
The first \(L\) variables of a satisfying assignment are then exactly the bitwise concatenation of \(\vec x,\vec z,y\), followed by the reference bit \(u_{L-1}=1\).
By \cref{lem:expalc-exact-two}, satisfiability is equivalent to \(y=C_n(\vec x,\vec z)\).
Determinism of \(C_n\) gives functionality in \(y\).
\end{proof}

Free word inputs allow an EPAD formula to define an exponentially wide value and pass it to an exponential-size ALC without exposing its individual bits in the surrounding formula.

\begin{corollary}\label{cor:fexp-fepad}
One has \(\FEXP\subseteq\FEPAD\).
\end{corollary}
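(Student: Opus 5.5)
The plan is to combine \cref{prop:expalc-fexp} with \cref{thm:expalc-fepad}, specialised to the case with no free word inputs, and then remove the two caveats in that theorem. These are that functionality is asserted only for \(0\le y<2^{W(n)}\), and that uniformity of the formula family must be checked.

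Fix \(f=(f_n)_n\in\FEXP\). By \cref{prop:expalc-fexp} there is a polynomial-time uniform exponential-size ALC family \((C_n)\) with only Boolean inputs such that \(C_n(\vec x)=f_n(\vec x)\). The word-width is \(W(n)\), whose binary representation the uniformity algorithm returns in polynomial time. Since \(W(n)\) is at least the running time of the transducer, the output is not truncated. Apply \cref{thm:expalc-fepad} with \(t=0\) to obtain \(\Phi_n(\vec x,y)\).

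Next define
\[
 \Phi'_n(\vec x,y)\equiv\Phi_n(\vec x,y)\wedge 0\le y<P\wedge \textstyle\bigwedge_i 0\le x_i\le1,
\]
where \(P\) is introduced through \(\Scale_{W(n)}(1,P)\) from \cref{lem:signed-scaling}. This is polynomial-size because \(\len(W(n))=\poly(n)\). For Boolean \(\vec a\) and arbitrary \(b\in\Z\), there are two cases. If \(b\notin[0,2^{W(n)})\), the range conjunct fails. Otherwise \cref{thm:expalc-fepad} gives \(\Phi_n(\vec a,b)\Leftrightarrow b=C_n(\vec a)=f_n(\vec a)\). Hence \(\Phi'_n\) computes \(f_n\) in the sense of \cref{def:fepad}, and the auxiliary witnesses need not be unique.

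The remaining point is uniformity, that is, that \(n\mapsto\Phi'_n\) is computable in time polynomial in \(n\). \Cref{thm:expalc-fepad} already states that \(\Phi_n\) is constructed in polynomial time from \(1^n\), because \cref{lem:expalc-exact-two} and \cref{lem:designated-prefix-extension} produce the clause-description circuit, the prefix length, and the functional definitions of \(H_L,K\) in polynomial time. So this step only requires observing that the extra conjuncts are computed from the binary value of \(W(n)\). The step I would check most carefully is the first one: the output of \(C_n\) must equal \(f_n\) exactly rather than modulo \(2^{W}\). To guarantee this, take \(W(n)\ge T(n)\) as in the proof of \cref{prop:expalc-fexp}, with the padded output cells.
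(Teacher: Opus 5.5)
Your proposal is correct and follows the paper's proof: combine \cref{prop:expalc-fexp} with \cref{thm:expalc-fepad} using no free word inputs. The extra range conjunct on \(y\) is harmless but redundant, because the proof of \cref{thm:expalc-fepad} already constrains \(y\) and the Boolean inputs to their declared ranges.
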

\begin{proof}
Combine \cref{prop:expalc-fexp,thm:expalc-fepad} with no free word inputs.
\end{proof}

\section{Finite-Quotient Normalization and Coefficient Growth}
\label{sec:normalization}

Finite-quotient normalization replaces a divisibility atom by equations for its possible integer quotients.
A \emph{normalization node} consists of the original nondivisibility constraints, the quotient equations chosen so far, and the divisibility atoms not yet replaced.
The original system is the initial node.
A \emph{branch} is a path of normalization nodes starting at the initial node.
A node is \emph{inconsistent} if its constraints have no valuation in the ambient domain.
A polynomial-size family below shows that a specified replacement sequence may produce coefficients with exponentially many bits.

\begin{definition}[Finite-quotient replacement]
\label{def:finite-quotient-replacement}
Let a normalization node contain an atom \(L\mid M\), and let \(\Gamma\) denote the conjunction of its other constraints.
Suppose there is a finite nonempty set \(\mathcal Q\subseteq\Z\) such that every valuation satisfying
\[
  \Gamma\wedge L\mid M\wedge L\ne0
\]
has \(M/L\in\mathcal Q\).
A finite-quotient replacement deletes \(L\mid M\) and, for each \(q\in\mathcal Q\), creates the node obtained by adjoining \(M=qL\) to \(\Gamma\).
These nodes contain every valuation of the original node with \(L\ne0\).
A valuation with \(L=M=0\) belongs to every resulting node because it satisfies \(0=q\cdot0\).

For each quotient equation, move all terms to the left.
If all coefficients, including the constant term, are zero, discard the equation.
If all variable coefficients are zero but the constant term is nonzero, the resulting node is inconsistent and is discarded.
Otherwise divide all coefficients by their positive greatest common divisor.
The resulting equation is \emph{primitive}.
\end{definition}

The common-zero statement is conditional and does not assert that such a valuation exists.
It is necessary because the paper uses the convention \(0\mid0\), and every quotient equation preserves any such valuation.
Finite-quotient replacement applies to affine forms and does not require homogeneity.
For homogeneous systems, quotient equations relate variable ratios rather than introduce fixed offsets.
Dividing an equation by the gcd of its coefficients preserves its integer solutions.
It also makes coefficient size independent of arbitrary rescaling.

A system \emph{reduces to divisibility-free leaves} if some finite tree of finite-quotient replacements has every leaf either inconsistent or without any divisibility atom.

To translate unrestricted EPAD formulas into systems admitting finite-quotient elimination, we use the following form of the small-solution bound of Lechner, Ouaknine, and Worrell~\cite[Theorem~14]{LechnerOuaknineWorrell2015}.

\begin{theorem}[Effective small-solution bound]
\label{thm:effective-small-solution}
There is a computable polynomial \(p\) such that every satisfiable conjunction of affine equalities, strict affine inequalities, and affine divisibility atoms of size \(n\) has an integer solution in which every variable has bit length at most \(2^{p(n)}\).
\end{theorem}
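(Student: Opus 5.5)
The plan is to derive the statement from the small-solution theorem of Lechner, Ouaknine, and Worrell~\cite[Theorem~14]{LechnerOuaknineWorrell2015}. That theorem bounds solutions of EPAD systems of the form \(\bigwedge_i f_i(\vec x)\mid g_i(\vec x)\wedge\varphi(\vec x)\), with \(\varphi\) a conjunction of linear constraints and divisors required nonzero. The work is in putting our conjunction into exactly their syntactic shape and then checking that every blow-up along the way is polynomial, so that the resulting doubly exponential value bound becomes a \(2^{p(n)}\) bit-length bound with \(p\) computable.

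First, the normalization, done as a nondeterministic case split so that no solutions are lost. Strict inequalities \(A<B\) are already linear constraints. Equalities become pairs of non-strict inequalities, or are kept as they are, whichever their \(\varphi\) format requires. Each atom \(L\mid M\) splits into two cases. Either \(L=0\wedge M=0\), which is purely linear (consistent with the paper's convention \(0\mid0\)), or \(L\mid M\wedge L\ne0\), with \(L\ne0\) written as \(L>0\vee L<0\). If their theorem requires positive divisors, replace \(L\mid M\) by \(-L\mid M\) in the negative branch. A given solution selects one branch for each atom, and that branch is a conjunction of size \(O(n)\) in their format; since coefficients are copied rather than expanded, the binary size grows only by a constant factor. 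If their theorem is stated over \(\N\) rather than \(\Z\), I also substitute \(x=x^+-x^-\), again a linear-size change, and add the bit lengths of the two halves.

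Second, apply their bound to the branch containing the given solution. As I recall the statement, a satisfiable system of size \(s\) has a solution with all variables bounded in absolute value by \(2^{\poly(s)}\) in bit length, that is, by a value \(2^{2^{q(s)}}\) for an explicit polynomial \(q\). Taking \(p(n)=q(cn)+1\) for the constant \(c\) from the normalization gives the statement. Computability of \(p\) holds because their proof gives explicit constants.

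The main obstacle I expect is bookkeeping rather than mathematics. Their bound is stated in terms of their own size parameters: the number of divisibilities, the number of variables, and the maximal coefficient magnitude. I must confirm it is polynomial in the binary size, with the exponent polynomial and not merely exponential in those parameters. Two further points need care: their treatment of nonzero divisors, handled above by the sign split, and whether their theorem is stated over \(\N\), handled by the \(x^+-x^-\) substitution. If their statement is only \(2^{2^{\poly}}\) in terms of the magnitude of the largest coefficient \(\|\cdot\|\), I replace \(\|\cdot\|\) by \(2^{n}\), which keeps the exponent polynomial in \(n\).
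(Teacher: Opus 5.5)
Your proposal takes the same route as the paper, which gives no separate argument and simply imports this form of the bound from Theorem~14 of Lechner, Ouaknine, and Worrell. Your case split into \(L=M=0\) versus a nonzero divisor of fixed sign, together with the optional substitution \(x=x^+-x^-\), is the routine bookkeeping the paper leaves implicit through its stated convention of separating \(L=M=0\) by a Presburger disjunction. One point to tighten is your fallback of replacing \(\|\cdot\|\) by \(2^n\): it keeps \(p\) polynomial only if \(\|\cdot\|\) enters the exponent \(p\) through \(\log\|\cdot\|\) rather than polynomially, but the cited theorem yields \(\NEXP\) membership for binary-encoded input and is therefore already a bound in binary formula size, so this contingency does not arise.
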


The bound lets a satisfiability-preserving reduction restrict each source variable to a finite interval.
Encoding those bounded values as quotients then gives finite ranges for the quotient choices.

\subsection{The Merge-Absorptive Fragment}\label{subsec:merge-absorptive}

D{\'e}fossez, Haase, Mansutti, and P{\'e}rez study divisibility systems with the following property.
The variables admit an order such that, in every nontrivial divisibility \(L\mid M\) entailed by the system, the greatest variable occurring in \(L\mid M\) occurs only in \(M\)~\cite{DefossezHaaseMansuttiPerez2024}.
For instance, \(x<y\wedge x+1\mid y-2\) has the property under \(x\prec y\).
Adding \(x+1\mid x+y\) destroys it, because the two atoms together entail \(x+1\mid x+2\), whose greatest variable \(x\) occurs on the left.
For the systems considered there, feasibility under this restriction is in \(\NP\).
Our fragment instead follows atoms whose finite-quotient replacements fix the ratio between two current variable components.
Each such step joins those components.
A system is merge-absorptive when these joins lead from singleton components to one component.

Recall from \cref{sec:divisibility-compressor} that a positive homogeneous divisibility system consists of divisibility atoms between homogeneous integer linear forms, together with homogeneous linear side conditions and positivity constraints.
For such a system, write \(\Phi\) for the conjunction of its homogeneous linear side conditions.
For purposes of finite-quotient elimination, we also regard a homogeneous equality \(F=0\) as the atom \(0\mid F\).
Repeated identical atoms are immaterial and may be identified.

Let \(\Gamma\) be a satisfiable conjunction of \(\Phi\) with quotient equations chosen at earlier elimination steps.
For a nonempty set \(C\) of variables, say that \(\Gamma\) \emph{fixes the ratios in \(C\)} if there are a variable \(x_C\in C\) and positive rationals \(\rho_x\), \(x\in C\), such that
\[
  \Gamma\models x=\rho_x x_C
  \qquad\text{for every }x\in C.
\]
In that case we call \(C\) a \emph{component under \(\Gamma\)}.
Thus a component is exactly a set of variables whose values are fixed positive rational multiples of one common scale.
A partition is \emph{resolved under \(\Gamma\)} if every one of its blocks is a component under \(\Gamma\).

Let \(A,B\) be distinct blocks of a partition resolved under \(\Gamma\).
An atom \(L\mid M\) \emph{merges \(A\) and \(B\) under \(\Gamma\)} if its variables lie in \(A\cup B\), it uses variables from both blocks, and there is a finite nonempty set \(\mathcal Q\subseteq\Z\) such that:
\begin{enumerate}
\item every positive integer valuation satisfying
\[
  \Gamma\wedge L\mid M\wedge L\ne0
\]
has \(M/L\in\mathcal Q\); and
\item for every \(q\in\mathcal Q\), if \(\Gamma\wedge M=qL\) is satisfiable, then \(\Gamma\wedge M=qL\) fixes the ratios in \(A\cup B\).
\end{enumerate}
The second condition also handles the common-zero case: a valuation with \(L=M=0\) satisfies every quotient equation \(M=qL\).

\begin{definition}[Merge-absorptive systems]
\label{def:merge-absorptive}
Consider a positive homogeneous divisibility system with \(r\) variables, and let \(\Phi\) be the conjunction of its homogeneous linear side conditions.
Put \(\Pi_0\) equal to the singleton partition.
The system is \emph{merge-absorptive} if there are distinct atoms
\[
  a_j=L_j\mid M_j
  \qquad(1\le j<r)
\]
and, for each \(j\), two prescribed blocks \(A_j,B_j\) of \(\Pi_{j-1}\), where
\[
  \Pi_j=
  (\Pi_{j-1}\setminus\{A_j,B_j\})
  \cup\{A_j\cup B_j\},
\]
such that the following recursive condition holds.

At stage \(j\), suppose quotients \(q_1,\ldots,q_{j-1}\) have been chosen from the finite quotient sets supplied at the preceding stages, and suppose
\[
  \Gamma_j
  =
  \Phi\wedge
  \bigwedge_{i<j}(M_i=q_iL_i)
\]
is satisfiable.
Then \(\Pi_{j-1}\) is resolved under \(\Gamma_j\), and \(a_j\) merges \(A_j\) and \(B_j\) under \(\Gamma_j\).
Choose any finite set
\[
  \mathcal Q_j(q_1,\ldots,q_{j-1})
\]
witnessing this merge.
At the next stage, \(q_j\) may be any element of this set for which
\[
  \Gamma_j\wedge M_j=q_jL_j
\]
is satisfiable.

Thus the atoms \(a_j\) and the pair \(A_j,B_j\) prescribed at each stage are fixed independently of the quotient choices.
Only the finite quotient sets and the chosen quotient values may depend on the earlier choices.
After \(r-1\) stages, \(\Pi_{r-1}\) has one block.
For \(r=1\), the empty sequence witnesses the property.
\end{definition}

This is the semantic content of merge-absorptivity: every consistent sequence of earlier quotient choices leaves the next prescribed atom able to eliminate one independent scale.
No syntactic shape for a merge atom is part of the definition.

\begin{example}[A finite-quotient replacement]\label{ex:finite-replacement}
Under the side conditions \(0<x\) and \(x\le y\le3x\), the quotient \(y/x\) belongs to \(\{1,2,3\}\).
The corresponding finite-quotient replacement has the three children shown below.

\begin{figure}[H]
\centering
\begin{tikzpicture}[scale=0.82, transform shape,
  node distance=8mm and 7mm,
  every node/.style={font=\small},
  box/.style={draw, rounded corners, align=center, inner sep=3pt}]
  \node[box] (root) {\(x\mid y\)\\ \(0<x\le y\le3x\)};
  \node[box, below left=of root] (c1) {\(y=x\)};
  \node[box, below=of root] (c2) {\(y=2x\)};
  \node[box, below right=of root] (c3) {\(y=3x\)};
  \draw[->] (root) -- (c1);
  \draw[->] (root) -- (c2);
  \draw[->] (root) -- (c3);
\end{tikzpicture}
\caption{Finite-quotient replacement branches once for each feasible value of the quotient \(y/x\).}
\label{fig:finite-quotient-replacement}
\end{figure}

The three child branches add, respectively, \(x-y=0\), \(2x-y=0\), and \(3x-y=0\).
Equivalently, they impose \(y=x\), \(y=2x\), and \(y=3x\).
Each child replaces the divisibility atom by one possible quotient equation.\exampleqed
\end{example}

\begin{example}[A two-step sequence of merge atoms]\label{ex:merge-sequence}
Take variables \(a,b,c\) over \(\Npos\), side conditions \(a>b>c\), and the two divisibility atoms \(2a+b\mid 5a\) and \(a\mid b+3c\).

The first atom merges \(\{a\}\) and \(\{b\}\): since \(a>b\), the quotient \(5a/(2a+b)\) lies strictly between \(5/3\) and \(5/2\), so it equals \(2\) and forces \(a=2b\).
The second atom then merges \(\{c\}\) into \(\{a,b\}\): using \(a=2b\) and \(b>c\), the quotient \((b+3c)/a\) lies strictly between \(1/2\) and \(2\), so it equals \(1\) and forces \(b=3c\).
Thus all variables lie in one component, and the solutions that remain form the one-parameter set \((a,b,c)=(6s,3s,s)\), \(s\in\Npos\).

\begin{figure}[H]
\centering
\begin{tikzpicture}[
  node distance=7mm,
  every node/.style={font=\small},
  part/.style={draw, rounded corners, inner sep=3pt, minimum width=10mm}]
  \node[part] (a0) {\(\{a\}\)};
  \node[part, right=7mm of a0] (b0) {\(\{b\}\)};
  \node[part, right=7mm of b0] (c0) {\(\{c\}\)};
  \node[part, below=10mm of b0, xshift=-8mm] (ab) {\(\{a,b\}\)};
  \node[part, right=7mm of ab] (c1) {\(\{c\}\)};
  \node[part, below=10mm of ab, xshift=8mm] (abc) {\(\{a,b,c\}\)};
  \draw[->] (b0.south) --
    node[right, font=\scriptsize] {\(2a+b\mid5a\)} (ab.north);
  \draw[->] (ab.south) --
    node[left, font=\scriptsize] {\(a\mid b+3c\)} (abc.north west);
  \draw[->] (c1.south) -- (abc.north east);
\end{tikzpicture}
\caption{The two bounded-quotient atoms successively merge three singleton variable components into one component.}
\label{fig:two-step-merge}
\end{figure}
\exampleqed
\end{example}

\begin{remark}[Comparison with ordered divisibility systems]
\label{rem:ordered-vs-merge}
The ordered restriction described above and merge-absorptivity are incomparable.
The system
\[
 x\mid y,\qquad x>0,\quad y>0,
\]
satisfies the ordered restriction for \(x\prec y\), but is not merge-absorptive because the quotient \(y/x\) is unbounded.
Conversely, \(\Delta_1\) is merge-absorptive: under \(0<y_0<y_1\), the quotient in its atom \(y_1+y_0\mid4y_1\) lies strictly between \(2\) and \(4\), so it is \(3\) and forces \(y_1=3y_0\).
It does not satisfy the ordered restriction.
In its atom
\[
 y_1-2y_0\mid y_1,
\]
the order \(y_0\prec y_1\) places the greatest variable on both sides, whereas the order \(y_1\prec y_0\) places it only on the left.
\end{remark}

A merge atom supplies one locally finite quotient choice.
Merge-absorptivity requires a compatible sequence of such choices that contracts the entire variable partition.
The scaling gadgets satisfy this global condition.

\begin{lemma}[Scaling systems are merge-absorptive]
\label{lem:scaling-systems-merge-absorptive}
The systems in \cref{lem:mersenne-multiplier,lem:positive-scaling} are merge-absorptive.
\end{lemma}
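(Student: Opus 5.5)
For \(\Delta_t\) of \cref{lem:mersenne-multiplier}, I will use the \(r-1=t\) merge atoms \eqref{eq:stage-merge}, taken in the order \(i=t,t-1,\ldots,1\). At stage \(i\) the prescribed blocks are \(A=\{y_i,\ldots,y_t\}\) and \(B=\{y_{i-1}\}\). The invariant is that before stage \(i\), every consistent earlier quotient choice \(\Gamma\) fixes the ratios in \(\{y_i,\ldots,y_t\}\), and the other variables are singletons. This holds trivially when \(i=t\), since \(\{y_t\}\) is a singleton.

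At stage \(i\), the invariant gives \(\Gamma\models y_t=c\,y_i\) for some positive rational \(c\) depending on the earlier choices. Here \(c\ge1\), because \(y_t\ge y_i\) is one of the side conditions. The atom \(y_i+y_{i-1}\mid y_t+3y_i\) has all its variables in \(A\cup B\) and uses both blocks. Its divisor is never zero, since all variables are positive. Under \(0<y_{i-1}<y_i\) the quotient is \(q=(c+3)y_i/(y_i+y_{i-1})\), which lies strictly between \((c+3)/2\) and \(c+3\). So the finite set \(\mathcal Q=\Z\cap((c+3)/2,c+3)\) witnesses the first merge condition; any finite superset would also do. For the second condition, each consistent equation \(y_t+3y_i=q(y_i+y_{i-1})\) gives \(y_{i-1}=\frac{c+3-q}{q}\,y_i\) with a positive rational coefficient. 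This fixes \(y_{i-1}\) relative to \(y_i\), and hence the ratios in \(A\cup B\), which is exactly the computation already recorded after \eqref{eq:stage-merge}.

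After the \(t\) stages the partition has one block, and the atoms used are distinct. Since this sequence depends only on the order \(i=t,\ldots,1\) and not on the quotient values, \(\Delta_t\) is merge-absorptive. The case \(t=0\) involves a single variable after identifying \(x=r\), so it is covered by the convention \(r=1\). Alternatively, the equality \(r=x\) can be read as the atom \(0\mid r-x\), which merges with the singleton quotient set \(\{0\}\) only by vacuity. To avoid that edge case I will read \(y_0=x\) and \(y_t=r\) as identifications of variables.

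For \cref{lem:positive-scaling}, the system is a chain of ladders \(\Delta_{j_a}(z_a,r_a)\) linked by equalities \(z_{a+1}=r_a+z_a\). Here the main obstacle is bookkeeping. The variable \(z_{a+1}\) appears only in a linear side condition, so it cannot be merged by a divisibility atom unless equalities are treated as atoms \(0\mid F\) as in \cref{subsec:merge-absorptive}. That convention does not give a finite quotient set in the nonzero-divisor sense, but condition (1) then holds vacuously with \(\mathcal Q=\{0\}\) (or any singleton), and condition (2) is immediate: once \(z_a\) and \(r_a\) lie in one component, \(\Gamma\wedge z_{a+1}=r_a+z_a\) fixes \(z_{a+1}\) relative to them.

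So the plan for \cref{lem:positive-scaling} is to process the ladders in chain order. First, use the \(\Delta_{j_0}\) merge atoms to merge its chain \(\{z_0=y_0,\ldots,y_{j_0}=r_0\}\) into one component. Then use the equality atom for \(z_{a+1}\) to add it to the current component. Next, run the next ladder's merge atoms with the same stage analysis: its top variable \(r_a\) is not yet in the component, so I will instead process that ladder's stages in the same top-down order, building its own component. Finally, merge the two components with the equality atom \(z_{a+1}=r_a+z_a\), whose variables span both blocks. Each merge still fixes a positive rational ratio, so the count of \(r-1\) merges works out. If the authors intend equalities to be eliminated by substitution rather than treated as atoms, I will instead substitute \(z_{a+1}:=r_a+z_a\) throughout and then apply the ladder argument directly.
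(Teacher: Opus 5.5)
Your treatment of \(\Delta_t\) follows the paper's argument. You use the atoms \eqref{eq:stage-merge} for \(i=t,\ldots,1\), bound the quotient by \(c+3\) once \(y_t/y_i\) is fixed, and solve for \(y_{i-1}/y_i\). For \(t=0\) there is no edge case to avoid. Reading \(r=x\) as \(0\mid r-x\) is exactly the ``linear merge'' certificate of \cref{prop:merge-absorptive-recognition}. The same convention covers the links \(z_{a+1}=r_a+z_a\), so your substitution fallback is unnecessary, and it would in any case change the system.

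The gap is in the chain of \cref{lem:positive-scaling}. As you say, bookkeeping is the whole content there, and your sequence is not a valid witness as written. After ladder \(0\) you spend \(0\mid z_1-r_0-z_0\) to adjoin \(z_1\). You then build ladder \(1\)'s component top-down and ``finally'' merge the two components with an equality atom. The definition, however, needs distinct atoms whose variables lie in exactly the two blocks being merged. The atom \(z_1=r_0+z_0\) is already used, and \(z_2=r_1+z_1\) touches three blocks: \(\{z_2\}\), ladder \(1\)'s component, and the current one.

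What actually joins ladder \(1\) to the current component is its own last stage, \(y_1+z_1\mid r_1+3y_1\), because its bottom variable \(y_0=z_1\) is already in the big block. Your stage analysis was stated for a singleton \(B=\{y_{i-1}\}\), so you need the easy observation that it never uses this. Once \(y_t=cy_i\) is fixed, \(0<q<c+3\) and \(y_{i-1}=\frac{c+3-q}{q}y_i\) hold regardless of which block contains \(y_{i-1}\). Fixing \(y_{i-1}/y_i\) then fixes the ratio of the two scales.

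With that observation, a consistent order is:
\begin{itemize}
\item ladder \(0\);
\item the equality adjoining \(z_1\);
\item ladder \(1\), whose last stage joins the big component;
\item the equality adjoining \(z_2\);
\end{itemize}
and so on. The paper's two-sentence treatment says only that each equality fixes the next variable once its other side is one component.

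Alternatively, run every ladder first to get variable-disjoint components \(C_a\ni z_a,r_a\). Then let each equality merge \(C_a\) with \(C_{a+1}\), or with \(\{z_s\}\) for the last one. This keeps every ladder stage in the singleton case. Either way, with ladder endpoints identified with \(z_a\) and \(r_a\), there are \(\sum_a j_a+s\) merges for \(s+1+\sum_a j_a\) variables. You should check this count rather than assert that it works out.
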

\begin{proof}
For \(t=0\), the equality \(r=x\) fixes the relative scale of \(r\) and \(x\).
Let \(t\ge1\), and process the atoms~\eqref{eq:stage-merge} for \(i=t,t-1,\ldots,1\).

Fix any consistent choices of quotients made at the earlier stages.
By induction, the resulting quotient equations fix \(y_t\) and \(y_i\) as
\[
  y_t=\alpha s,\qquad y_i=\beta s
\]
for positive rationals \(\alpha,\beta\) and one positive scale \(s\).
The variable \(y_{i-1}\) is still a singleton component.
For
\[
  y_i+y_{i-1}\mid y_t+3y_i,
\]
the divisor is greater than \(\beta s\), while the dividend is \((\alpha+3\beta)s\).
Hence every possible positive integer quotient \(q\) satisfies
\[
  0<q<\frac{\alpha+3\beta}{\beta},
\]
so there are only finitely many possibilities.
For any \(q\) whose quotient equation is consistent,
\[
  (\alpha+3\beta)s=q(\beta s+y_{i-1})
\]
fixes \(y_{i-1}\) as a positive rational multiple of \(s\).
Thus this atom merges the component containing \(y_t,y_i\) with \(\{y_{i-1}\}\), independently of the earlier quotient choices.

A positive-scaling chain consists of these systems joined by homogeneous equalities.
Once the variables on one side of such an equality form one component, the equality fixes the next variable relative to the same scale.
\end{proof}

\begin{proposition}[Decidability and normalization]
\label{prop:merge-absorptive-recognition}
For positive homogeneous divisibility systems, merge-absorptivity is decidable.
Moreover, every merge-absorptive system reduces to divisibility-free leaves by finite-quotient replacement.
\end{proposition}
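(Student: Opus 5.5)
The proof splits into two parts. The normalization claim follows almost directly from \cref{def:merge-absorptive}. The decidability claim needs an effective, finite search over merge sequences whose individual steps reduce to linear programming.

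\textbf{Normalization.} Fix a witnessing sequence \(a_1,\ldots,a_{r-1}\) with its prescribed blocks. Build the replacement tree stage by stage. At a node reached by consistent choices \(q_1,\ldots,q_{j-1}\), the definition supplies a finite nonempty set \(\mathcal Q_j(q_1,\ldots,q_{j-1})\) containing every quotient with \(L_j\ne0\). This is exactly the hypothesis of \cref{def:finite-quotient-replacement}, so we replace \(a_j\) and discard the inconsistent children. After \(r-1\) stages, every surviving leaf fixes all variables as \(x=\rho_x s\) for one scale \(s>0\) and rationals \(\rho_x>0\).

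At such a leaf, each remaining atom \(L\mid M\) becomes \(\lambda s\mid\mu s\) for rationals \(\lambda,\mu\). Clearing denominators, write \(s=ds'\), where \(d\) is the least common denominator of the \(\rho_x\), so that the admissible values of \(s\) are exactly the positive multiples of \(d\). Then \(\lambda s\mid\mu s\) is the constant test \(\lambda\mid\mu\) in the appropriate rational sense. Concretely:
\begin{itemize}
\item if \(\lambda=0\), the atom holds iff \(\mu=0\);
\item otherwise it holds iff \(\mu/\lambda\in\Z\).
\end{itemize}
In either case we replace the atom either by a finite-quotient step with the single quotient \(\mu/\lambda\) (giving the tautology \(M=(\mu/\lambda)L\)) or by discarding the leaf as inconsistent. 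Homogeneous equalities, regarded as atoms \(0\mid F\), are handled the same way. This gives divisibility-free leaves.

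\textbf{Decidability.} The plan is to search over all candidate sequences of distinct atoms and block pairs; there are finitely many. For each candidate, we recursively explore the tree of consistent quotient choices and check the merge condition at every node. Two facts make each check effective, and together they show the explored tree is finite and computable.

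First, once the current partition is resolved, all constraints reduce to the two scales \(t_A,t_B\) and hence to the ratio \(z=t_B/t_A\). Resolvedness of \(\Pi_{j-1}\) under \(\Gamma_j\) is itself checkable by linear algebra: solve the linear equalities of \(\Gamma_j\) and verify that each block's variables are determined up to a common positive scale. Over positive valuations, the feasible \(z\) form a rational interval \(I\). We compute \(I\) by linear programming, treating strict and nonstrict endpoints and deciding whether \(I\) is empty.

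Second, the atom's quotient is a linear-fractional function \(q(z)=(\mu_0+\mu_1z)/(\lambda_0+\lambda_1z)\). Finiteness of the integer values attained on \(I\) at positive integer points is decided by checking boundedness of \(q\) on \(I\). We then enumerate the bounded integer range as \(\mathcal Q\), keeping only those \(q\) for which \(\Gamma_j\wedge M=qL\) is satisfiable over positive integers; this is a homogeneous system, so it amounts to rational feasibility. For each kept \(q\), the ratio is fixed iff the equation \(\mu_0+\mu_1z=q(\lambda_0+\lambda_1z)\) is not identically satisfied on \(I\). The zero-divisor case \(L=M=0\) is handled separately: it occurs at most at one value of \(z\) unless it holds identically, in which case the merge fails.

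\textbf{Main obstacle.} The delicate point is the subtlety between integer and rational feasibility. An unbounded quotient over rational \(z\) is still unbounded over the integer points, since scaling is free in homogeneous positive systems. Likewise, rational feasibility of a homogeneous system implies integer feasibility by scaling. I expect these two observations to close the gap, since a homogeneous system over positive integers is feasible iff it is feasible over positive rationals.
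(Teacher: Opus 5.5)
Your proposal follows the paper's proof essentially step for step: you enumerate the finitely many candidate atom and block sequences, recurse over consistent quotient choices, reduce each merge test to the ratio \(z=t_B/t_A\) on an LP-computable interval \(I\) with a linear-fractional quotient, and at the leaves turn each remaining atom \(\lambda s\mid\mu s\) into a constant test. Two details need tightening. First, whether a conjunction fixes a ratio must be decided from its implied equalities, as the paper does by minimizing and maximizing \(t_A\) subject to \(t_A+t_B=1\); neither the explicit equalities nor your criterion that \(M=qL\) is not ``identically satisfied on \(I\)'' is exact, and the latter can misjudge the degenerate case where \(I\) is a single point. Second, the case where \(L\) vanishes identically in \(t_A,t_B\) while \(M\) does not (for instance a homogeneous equality read as \(0\mid F\), which the scaling chains use as merges) leaves your quotient undefined and needs the separate direct check that the paper mentions.
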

\begin{proof}
Fix a positive homogeneous divisibility system with \(r\) variables, and let \(\Phi\) be the conjunction of its homogeneous linear side conditions.
There are finitely many possible sequences of \(r-1\) distinct atoms and finitely many possible sequences of component unions.
Fix one such candidate sequence, and write its selected atoms as \(a_i=L_i\mid M_i\) for \(1\le i<r\).
We decide whether it satisfies \cref{def:merge-absorptive} by considering recursively every consistent sequence of quotient choices generated by it.

Suppose that at stage \(j\) the earlier choices \(q_1,\ldots,q_{j-1}\) give the satisfiable homogeneous linear conjunction
\[
  \Gamma_j=
  \Phi\wedge\bigwedge_{i<j}(M_i=q_iL_i).
\]
Whether \(\Gamma_j\) fixes the ratios in any proposed component is decidable by rational linear programming.
Assume therefore that the current partition is resolved, and consider the prescribed blocks \(A,B\).
Write every variable of \(A\) as a fixed positive rational multiple of a scale \(t_A\), and every variable of \(B\) as a fixed positive rational multiple of a scale \(t_B\).
Write the selected forms as
\[
  L_j=\alpha t_A+\beta t_B,
  \qquad
  M_j=\gamma t_A+\delta t_B
\]
for rational \(\alpha,\beta,\gamma,\delta\).

Project the positive rational solution set of \(\Gamma_j\) onto \((t_A,t_B)\), and normalize \(t_A=1\).
The possible ratios \(z=t_B/t_A\) form a rational interval \(I\subseteq\mathbb R_{>0}\), possibly open or unbounded, whose endpoints are computable by linear programming.
Homogeneity implies that rational feasibility is equivalent to positive-integer feasibility after scaling by a common denominator.

When the divisor is nonzero, the quotient \(M_j/L_j\) is
\[
  \frac{\gamma+\delta z}{\alpha+\beta z}.
\]
Its possible integer values are finite exactly when this linear-fractional function does not approach infinity on the feasible interval.
The denominator has at most one zero \(z_0\).
If \(z_0\) is not in the closure of \(I\), the quotient is bounded.
If \(z_0\) is in the closure of \(I\) and the numerator is nonzero at \(z_0\), then solving
\[
  \gamma+\delta z=q(\alpha+\beta z)
\]
shows that, for all sufficiently large integers \(q\) of one sign, the resulting rational value of \(z\) lies in \(I\); hence infinitely many integer quotients occur.
If numerator and denominator vanish at the same \(z_0\), the two affine functions are proportional away from the degenerate identically-zero cases, so the nonzero-divisor quotient is constant.
The identically-zero cases are checked directly.
Thus we can decide whether a finite quotient set exists and, when it does, compute the possible nonzero-divisor quotients.

For each such quotient \(q\), test the satisfiability of
\[
  \Gamma_j\wedge M_j=qL_j.
\]
If it is satisfiable, normalize \(t_A+t_B=1\) and minimize and maximize \(t_A\).
The conjunction fixes the ratios in \(A\cup B\) exactly when these two extrema agree.
This decides whether the selected atom merges \(A,B\) under \(\Gamma_j\).
Since each successful stage has finitely many quotient choices and there are only \(r-1\) stages, the recursive test terminates.
Enumerating the finitely many candidate atom and component sequences decides merge-absorptivity.

For the systems constructed in this paper, three sufficient conditions simplify the general test and make the concrete gadgets easier to check.
For a component \(C\), call a valuation \emph{admissible} if it is positive and satisfies the side conditions supported on \(C\).
A form supported on \(C\) is \emph{nonvanishing} on \(C\) if it is nonzero on every admissible valuation; positivity and negativity are defined analogously.
For disjoint components \(A,B\), say that a form \(R_B\) supported on \(B\) is \emph{side-condition bounded by \(A\)} if every variable in its support is bounded above on the real polyhedron given by the side conditions on \(A\cup B\), positivity, and the normalization
\[
  \sum_{x\in A}x=1.
\]
For an atom \(L\mid M\) supported on \(A\cup B\), after possibly exchanging \(A\) and \(B\), each of the following guarantees a semantic merge:
\begin{enumerate}
\item \emph{Linear merge:} \(L=0\) and \(M=F_A+F_B\), where \(F_A\) and \(F_B\) are nonvanishing on their components.
\item \emph{Bounded-divisor merge:} \(L=P_A+R_B\) and \(M=Q_A\), where \(P_A,R_B\) have the same strict sign on their components and \(Q_A\) is nonvanishing on \(A\).
\item \emph{Bounded-dividend merge:} \(L=P_A\) and \(M=T_A+R_B\), where \(P_A\) is nonvanishing on \(A\), \(R_B\) is nonvanishing on \(B\), and \(R_B\) is side-condition bounded by \(A\).
\end{enumerate}
These conditions are decidable by linear programming.
The first fixes the relative scale directly.
For the second, after resolving the components write
\[
  P_A=\lambda_P t_A,\qquad R_B=\lambda_R t_B,\qquad Q_A=\lambda_Q t_A.
\]
The sign hypotheses give \(\lambda_P,\lambda_R\ne0\) with the same sign and \(\lambda_Q\ne0\), so
\[
  \left|\frac{Q_A}{P_A+R_B}\right|
  \le \left|\frac{\lambda_Q}{\lambda_P}\right|.
\]
Hence there are finitely many quotients, and every consistent quotient equation fixes \(t_B/t_A\).
For the third, write
\[
  P_A=\lambda_P t_A,\qquad T_A=\lambda_T t_A,\qquad R_B=\lambda_R t_B.
\]
Side-condition boundedness bounds \(t_B/t_A\), and therefore bounds \((T_A+R_B)/P_A\); every consistent quotient equation again fixes the ratio of the two scales.
These are convenient certificates for the semantic definition, not part of that definition.

Finally, follow the prescribed merge sequence of a merge-absorptive system.
For every consistent sequence of quotient choices through all \(r-1\) stages, all variables are fixed rational multiples of one positive parameter \(t\).
Every remaining divisibility is therefore of the form
\[
  \lambda t\mid\mu t
\]
for rational constants \(\lambda,\mu\).
After clearing denominators and cancelling \(t>0\), this is a constant divisibility test.
Thus every terminal normalization node is either inconsistent or can have all remaining divisibility atoms removed.
\end{proof}

The structural result applies to positive homogeneous systems.
For completeness of EPAD, the following propositions first remove Boolean structure and signs, then homogenize the result while arranging an explicit sequence of component merges.

An \emph{affine divisibility system} is a conjunction of affine-linear equalities, strict affine-linear inequalities, and divisibility atoms between affine-linear terms over integer variables.

\begin{proposition}[EPAD to affine divisibility]\label{prop:epad-affdivz}
Every EPAD formula has a polynomial-time equisatisfiable translation to an affine divisibility system over integer variables.
\end{proposition}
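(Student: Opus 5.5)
The plan is the standard normalization pipeline (as in \cite[Section~IV.A]{LechnerOuaknineWorrell2015}), carried out with polynomial overhead. The target is a conjunction of affine equalities, strict affine inequalities and divisibility atoms.

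\emph{Negation normal form.} First push negations to the atoms. Each negated atom is then rewritten positively, using fresh existential variables:
\begin{itemize}
\item \(\neg(A=B)\) becomes \(A<B\lor B<A\);
\item \(\neg(A\le B)\) becomes \(B<A\);
\item a non-strict inequality \(A\le B\) becomes \(A<B+1\);
\item \(\neg(L\mid M)\) becomes \(\exists q,r\colon M=qL+r\wedge 0<r<|L|\), split into the two cases \(L>0\) and \(L<0\).
\end{itemize}
The last rewrite is not yet affine, because \(qL\) is a product of variables. To linearize it, introduce \(w\) and impose \(L\mid w\wedge w=M-r\). The quotient \(q\) never has to be named, since \(L\mid M-r\) with \(0<r<|L|\) already expresses \(\neg(L\mid M)\). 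The case \(L=0\) must be handled separately: \(\neg(0\mid M)\) is exactly \(M\ne0\), which is a disjunction of two strict inequalities. Each of these rewrites adds constant size per atom, so the formula remains polynomial.

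\emph{Eliminating disjunctions.} The result is a positive Boolean combination of atoms of the three allowed kinds. I would remove the disjunctions without exponential blow-up, not by expanding into DNF but with a selector encoding. For each atom occurrence \(\alpha\), introduce a Boolean variable \(b_\alpha\in\{0,1\}\), constrained by \(0\le b_\alpha\wedge b_\alpha<2\). The propositional skeleton is expressed linearly: for a conjunction node, \(b_{\text{node}}\le b_{\text{child}}\) for each child; for a disjunction node, \(b_{\text{node}}\le\sum b_{\text{children}}\). Finally impose \(b_{\text{root}}=1\).

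The main obstacle is making each atom hold conditionally on its selector, that is, enforcing \(b_\alpha=1\Rightarrow\alpha\) using only conjunctions of the allowed atoms:
\begin{itemize}
\item \emph{Equalities and inequalities.} A big-M relaxation is tempting but needs an explicit bound. I would avoid it by relaxing through fresh slack. Write an equality as \(A-B=s\), with fresh \(s\), and make \(s\) vanish when \(b_\alpha=1\) by the divisibility trick below. For a strict inequality \(A<B\), write \(B-A=s\) and require \(s\ge1\) when selected, again via divisibility.
\item \emph{Divisibility atoms.} For \(L\mid M\), use fresh \(L',M'\) with \(L'\mid M'\) imposed unconditionally. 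When \(b_\alpha=1\), force \(L'=L\) and \(M'=M\). When \(b_\alpha=0\), \(L',M'\) must be free, for instance \(L'=1\).
\end{itemize}
These conditional equalities \(b=1\Rightarrow u=v\) are the crux. My first attempt is the atom \(u-v\mid(1-b)\,z\), with \(z\) a fresh variable. When \(b=1\) the right side is \(0\), which does not force \(u=v\), so this form fails. The correct direction is \(b\,(u-v)\mid 0\)-style constraints, but these are nonlinear.

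Since products of a Boolean with an affine form are the real difficulty, the fallback is the following. By the small-solution bound (\cref{thm:effective-small-solution}), it suffices to consider solutions with bit length at most \(2^{p(n)}\). I would then use a big-M constant \(2^E\), with \(E=2^{p(n)}+O(1)\), generated by \(\Scale_E(1,\cdot)\) from \cref{lem:signed-scaling}. This constant is a fresh variable \(G\), so the formula stays polynomial. The product \(bG\) is linearized by writing \(bG=g\) with \(0\le g\le G\), \(G\mid g\) and \(g\le bG\). The last constraint is still nonlinear, and the cleaner fix is to replace it by \(g\le G\wedge (g=0\lor g=G)\). This reintroduces a disjunction, but only a local two-way one.

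That residual local disjunction is the remaining problem. It can be resolved by guessing via a Boolean \(b\) with \(g=bG\) expressed as the quotient-sharing product of \cref{lem:product-gadget}, with operand bounds \(2\) and \(G+1\). This gives the affine constraint \(|A-B|\le(1-b)G\) to implement selection. Inequalities and divisibility atoms are then handled through slack variables in the same way.

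All introduced constants have polynomial binary length, the construction is linear in the formula size, and equisatisfiability follows atom by atom, using the small-solution bound only to justify the big-M truncation.
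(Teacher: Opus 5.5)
Your proposal is correct, and it shares the paper's skeleton. Both arguments use negation normal form with positive rewrites of negated atoms, the small-solution bound to justify bounded witnesses, and $\Scale$ to introduce exponentially large powers of two as variables rather than coefficients. Where the two differ is the Boolean-elimination step.

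\textbf{The paper's route.} It gives every subformula a bit and encodes the connectives exactly, $b_{\psi\wedge\theta}=b_\psi b_\theta$ and $b_{\psi\vee\theta}=b_\psi+b_\theta-b_\psi b_\theta$, each through a quotient-sharing product with $\Lambda=8$. It then gates each atom multiplicatively, via $T_\alpha=b_\alpha h$, or $L_\alpha=b_\alpha L$ and $M_\alpha=b_\alpha M$. An unselected divisibility thus collapses to $0\mid0$. This requires a separate power-of-two bound on every gated term, which serves as $M_W$ in its gadget.

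\textbf{Your route.} You encode the skeleton with monotone linear constraints, $b_{\mathrm{node}}\le b_{\mathrm{child}}$ and $b_{\mathrm{node}}\le\sum b_{\mathrm{children}}$. These suffice because only the implication ``selected $\Rightarrow$ true'' is needed; the paper makes the same observation, but only for atoms. You gate every atom by a big-M relaxation against one uniform bound $G$. A divisibility is handled by an unconditional $L'\mid M'$ tied to $L,M$ by two conditional equalities. Your route uses a single bound and no products in the skeleton. The paper's route avoids slack copies and keeps each bit equal to the exact truth value of its subformula.

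Tighten three points.
\begin{itemize}
\item $G$ must bound every affine term, not just every variable, in the chosen small solution. Coefficients have up to $n$ bits and terms have up to $n$ summands, so take $E=2^{p(n)}+O(n)$ rather than $2^{p(n)}+O(1)$.
\item \cref{thm:effective-small-solution} is stated for conjunctions, not positive Boolean combinations. Apply it to a satisfied branch of the positive matrix and set the variables outside that branch to $0$, exactly as the paper does. This bounded assignment is what makes the unselected relaxations, such as $|A-B|\le G$, satisfiable.
\item You need neither the quotient-sharing gadget for $g=bG$ nor the ``residual local disjunction.'' $\Scale_E(b,g)$ from \cref{lem:signed-scaling} forces $g=2^Eb$ directly. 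Moreover, $G\mid g$ together with $0\le g\le G$ already forces $g\in\{0,G\}$, so $g$ itself can serve as the selector throughout.
\end{itemize}
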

\begin{proof}
Let \(n\) be the size of the input formula.
Put the quantifier-free matrix in negation normal form.
Negated affine atoms become positive Boolean combinations of affine equalities and strict affine inequalities.
Negated divisibility atoms can also be rewritten locally as in \cite[Section~IV.A]{LechnerOuaknineWorrell2015}.
Thus, after adding only polynomially many arithmetic variables, the formula is a positive Boolean combination of affine equalities, strict affine inequalities, and unnegated affine divisibility atoms.

We first bound all arithmetic variables.
If the positive Boolean matrix is satisfiable, choose a conjunction of atoms along a satisfied branch.
By \cref{thm:effective-small-solution}, that conjunction has a bounded solution.
Because the matrix is positive, the same solution satisfies the whole matrix.
Let \(p\) be the polynomial from \cref{thm:effective-small-solution}, increased if necessary to absorb the polynomial-size preprocessing above.  Introduce a variable \(B\) and force \(B=2^{2^{p(n)}+1}\) by \(\Scale_{2^{p(n)}+1}(1,B)\).
Add \(-B<x<B\) for every original and auxiliary arithmetic variable \(x\).
For every affine term and product variable introduced in the construction, add a variable constrained to equal a power of two that bounds its absolute value.
The exponents have polynomial bit length, so all scaling systems are polynomial-size conjunctions of affine divisibility constraints and strict affine inequalities.

It remains to remove the Boolean structure.
Give each subformula \(\varphi\) a Boolean variable \(b_\varphi\), constrained by the strict inequalities \(b_\varphi+1>0\) and \(2-b_\varphi>0\).
For a conjunction or disjunction with immediate subformulas \(\psi\) and \(\theta\), introduce a product variable \(w\), add \(-2<w<2\), and apply \Cref{lem:product-gadget} with \(U=b_\psi\), \(V=b_\theta\), \(W=w\), \(M_U=M_W=2\), and the constant \(\Lambda=8>2+2+1\).
This enforces \(w=b_\psi b_\theta\).
Add \(b_{\psi\wedge\theta}=w\) or \(b_{\psi\vee\theta}=b_\psi+b_\theta-w\), according to the connective.
For an atomic formula \(\alpha\), it is enough to enforce \(b_\alpha=1\Rightarrow\alpha\).
Because the syntax tree is positive and the variable of the whole matrix is forced to be \(1\), these one-way implications force a satisfying assignment for the original formula.
Reverse implications are unnecessary for equisatisfiability.
If \(\alpha\) is an equality \(h=0\), introduce \(T_\alpha=b_\alpha h\) and require \(T_\alpha=0\).
If \(\alpha\) is a strict inequality \(h>0\), introduce \(T_\alpha=b_\alpha h\) and require \(T_\alpha-b_\alpha+1>0\).
If \(\alpha\) is a divisibility \(L\mid M\), introduce \(L_\alpha=b_\alpha L\) and \(M_\alpha=b_\alpha M\), and require \(L_\alpha\mid M_\alpha\).

For each product variable among \(T_\alpha,L_\alpha,M_\alpha\) that is introduced above, the bounds fixed above provide an explicit power-of-two bound on its absolute value.
Add that bound as side conditions and use it as \(M_W\) in the corresponding application of \cref{lem:product-gadget}; take \(U=b_\alpha\), let \(V\) be the affine form \(h\), \(L\), or \(M\) being gated, and let \(W\) be its already named product variable.
Take \(M_U=2\) and choose the gadget's power of two \(\Lambda>M_U+M_W+1\).
The exponent of \(\Lambda\) has polynomial bit length, and every multiple by \(\Lambda\) is represented by fresh variables constrained through \Cref{lem:signed-scaling}.
Require the Boolean variable associated with the whole matrix to be \(1\).

If the original EPAD formula is satisfiable, choose a bounded satisfying assignment, set each Boolean variable to the truth value of its subformula, and interpret the product variables as the corresponding products.
All constraints above are then satisfied.
Conversely, a satisfying assignment of the affine-divisibility conjunction makes the Boolean output \(1\).
Since every atom marked true is actually true, the original positive Boolean matrix is true under the arithmetic assignment.
Hence the construction is equisatisfiable.
\end{proof}

The affine conjunction need not satisfy the contraction property.
The next translation represents each original value by a quotient exposed along a decreasing positive chain.
The quotient atoms also supply the required merges.

\begin{proposition}[Merge-absorptive completeness]
\label{prop:merge-absorptive-completeness}
Every EPAD formula has a polynomial-time equisatisfiable translation to a \MergeAbs{} positive homogeneous divisibility system.
\end{proposition}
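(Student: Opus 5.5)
We follow the route sketched in \cref{subsec:overview-fragment}. First apply \cref{prop:epad-affdivz} to obtain, in polynomial time, an equisatisfiable affine divisibility system. We then make all variables positive: substitute \(x=x^+-x^-\) with positive \(x^\pm\). This gives a system \(\mathcal J\) of size \(n'\) over positive variables \(x_1,\ldots,x_k\), consisting of affine equalities, strict inequalities and divisibilities. By \cref{thm:effective-small-solution}, applied to \(\mathcal J\) with the positivity constraints included, there is \(\nu=2^{p(n')}\) such that a satisfiable \(\mathcal J\) has a solution with \(0<x_j<2^\nu\). Put \(E=\nu+\lceil\log_2(k+1)\rceil+1\), so that \(k2^\nu<2^E\).

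The target system has a scale \(u\) and a chain \(s_0>s_1>\cdots>s_k>0\). It contains a copy of \cref{lem:positive-scaling} forcing \(s_0=2^Eu\), and for each \(1\le j\le k\) the atom \(u\mid s_{j-1}-s_j\). Each source form \(L=c+\sum a_jx_j\) is replaced by the homogeneous form \(\widehat L=cu+\sum a_j(s_{j-1}-s_j)\), and every source atom is transported accordingly. All resulting constraints are homogeneous, and all variables are positive.

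\textbf{Equisatisfiability.} Given a target solution, set \(x_j=(s_{j-1}-s_j)/u\). These are positive integers, and \(\widehat L=uL(\vec x)\). Since \(u>0\), equalities, strict inequalities and divisibilities transfer, including the zero-divisor case: \(uL\mid uM\) holds iff \(L\mid M\). Conversely, from a bounded source solution take \(u=1\), \(s_0=2^E\) and \(s_j=s_{j-1}-x_j\). Then \(s_k=2^E-\sum x_j>2^E-k2^\nu>0\). The scaling gadget is completed by \cref{lem:positive-scaling}.

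\textbf{Merge-absorptivity.} We must exhibit \(r-1\) merges. By \cref{lem:scaling-systems-merge-absorptive}, the scaling gadget merges all of its variables, including \(u\) and \(s_0\), into one component. This follows a fixed schedule, and every consistent choice pins \(s_0=2^Eu\).

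Next, for \(j=1,\ldots,k\), the atom \(u\mid s_{j-1}-s_j\) merges \(\{s_j\}\) into the big component. With \(s_{j-1}=\rho u\) already fixed, the conditions \(0<s_j<s_{j-1}\) give a quotient in \((0,\rho)\). This is a finite set, and each quotient equation fixes \(s_j/u\). This is an instance of the bounded-dividend certificate in \cref{prop:merge-absorptive-recognition}: \(s_j\) is side-condition bounded by the component containing \(s_{j-1}\), because \(s_j<s_{j-1}\).

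Two hygiene points remain. First, the merge atoms must be distinct. Second, source atoms must not be needed as merge atoms. Both hold because the counting of merges equals the number of variables minus one exactly. The gadget variables and the \(s_j\) are all the variables, since \(x_j\) no longer exist.

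\textbf{Main obstacle.} The main obstacle is making the definition's recursive condition hold uniformly over every consistent quotient sequence, since \(\Pi_{j-1}\) must be resolved under each \(\Gamma_j\). This is where we must be careful. The scaling gadget's schedule must be placed first. The chain atoms are processed in the order \(j=1,\ldots,k\), so that \(s_{j-1}\) is always already resolved when \(u\mid s_{j-1}-s_j\) is used. Source equalities, viewed as atoms \(0\mid F\), might also contribute merges. Because we never use them, they remain as later constant tests.

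Finally we check the size bound. \(E\) has polynomial bit length, so the scaling system has polynomial size, and so does the whole construction.
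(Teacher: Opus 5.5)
Your proposal is correct and follows the paper's proof essentially step for step: the same quotient-chain encoding with \(s_0=2^Eu\) forced by positive scaling, the same homogenization \(\widehat L=cu+\sum_j a_j(s_{j-1}-s_j)\), the same merge schedule (the scaling gadget first, then \(u\mid s_{j-1}-s_j\) for \(j=1,\ldots,k\)), and the same two-directional equisatisfiability argument. The only deviation is that you keep the translated equalities and strict inequalities as homogeneous side conditions, whereas the paper rewrites them as \(0\mid h\) and \(0\mid h-s\) with a fresh positive \(s\); this is harmless, because extra side conditions only strengthen each \(\Gamma_j\), and both merge conditions survive strengthening \(\Gamma_j\).
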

\begin{proof}
By \Cref{prop:epad-affdivz}, reduce an affine divisibility system over integer variables.
Replace each \(x\) by \(x^+-x^-\), with \(x^+,x^-\in\Npos\).
Replace each affine equality \(h=0\) by \(0\mid h\), and each strict affine inequality \(h>0\) by \(0\mid h-s\) with a fresh \(s\in\Npos\).
This gives an equisatisfiable affine divisibility system \(\mathcal J\) over positive integer variables.
Let \(x_1,\ldots,x_k\) be its variables and write \(\vec x=(x_1,\ldots,x_k)\).
Write the atoms of \(\mathcal J\) as \(L(\vec x)\mid M(\vec x)\), where \(L,M\) are affine-linear forms with integer coefficients.
If rational affine forms arise, multiplying both terms of an atom by a common positive denominator gives an equivalent atom with integer coefficients.

Let \(p\) be the polynomial from \cref{thm:effective-small-solution}, and let \(\nu=2^{p(|\mathcal J|)}\) be the resulting bit-length bound.
If \(\mathcal J\) is satisfiable, then it has a solution with \(x_j<2^\nu\) for every \(j\).
Put
\[
 E=\nu+\lceil\log_2(k+1)\rceil+1.
\]
Then \(k2^\nu<2^E\), and the binary length of \(E\) is polynomial in \(|\mathcal J|\).

We construct a positive homogeneous system.
Introduce \(u\) and variables \(s_j\) for \(0\le j\le k\), with side conditions \(s_{j-1}>s_j\) for \(1\le j\le k\).
Using \cref{lem:positive-scaling}, add a scaling system forcing \(s_0=2^E u\).
For each \(j=1,\ldots,k\), add the atom \(u\mid s_{j-1}-s_j\).
Since \(s_{j-1}>s_j\) and \(u>0\), every solution determines a positive integer quotient
\[
 q_j=\frac{s_{j-1}-s_j}{u},
\]
which encodes \(x_j\).
The choice of \(E\) ensures that every bounded solution of \(\mathcal J\) can be encoded in this form.

For each affine form \(L(\vec x)=c+\sum_{j=1}^k a_jx_j\) appearing in \(\mathcal J\), define the homogeneous translated form \(\widehat L=c u+\sum_{j=1}^k a_j(s_{j-1}-s_j)\).
Thus, on every assignment satisfying the atoms \(u\mid s_{j-1}-s_j\), we have \(\widehat L=uL(q_1,\ldots,q_k)\).
For every atom \(L(\vec x)\mid M(\vec x)\) of \(\mathcal J\), add the translated atom \(\widehat L\mid\widehat M\).
The resulting system has size polynomial in \(|\mathcal J|\): the only large number used as an exponent is \(E\), whose binary length is polynomial, and \Cref{lem:positive-scaling} produces a system of size polynomial in \(\log(E+2)\).

We first check that the constructed system is \MergeAbs.
By \Cref{lem:scaling-systems-merge-absorptive}, the constraints enforcing \(s_0=2^E u\) admit a prescribed merge sequence that puts \(s_0\), \(u\), and all auxiliary variables used by those constraints into one component.
Now fix \(j\), and suppose arbitrary consistent quotients have been chosen for the atoms \(u\mid s_{h-1}-s_h\) with \(1\le h<j\).
The resulting quotient equations fix \(u=\alpha t\) and \(s_{j-1}=\beta t\) for positive rationals \(\alpha,\beta\) and one positive scale \(t\), while \(s_j\) is still a singleton.
Because \(0<s_j<s_{j-1}\),
\[
  0<
  \frac{s_{j-1}-s_j}{u}
  <
  \frac{\beta}{\alpha}.
\]
Thus \(u\mid s_{j-1}-s_j\) has only finitely many possible positive integer quotients.
For the quotient value \(q_j\) on such a branch,
\[
  s_j=(\beta-q_j\alpha)t,
\]
so \(s_j\) is fixed as a positive rational multiple of \(t\).
Hence this atom merges \(\{s_j\}\) into the current component for every consistent sequence of earlier quotient choices.
Iterating for \(j=1,\ldots,k\) merges all variables introduced above.
The translated source atoms are then supported inside the single component and need not be used in the merge sequence.

It remains to prove equisatisfiability.
Suppose first that \(\mathcal J\) has a satisfying assignment \(\xi_1,\ldots,\xi_k\) with \(\xi_j<2^\nu\).
Set \(u=1\), \(s_0=2^E\), and define \(s_j=s_{j-1}-\xi_j\) for \(1\le j\le k\).
The choice of \(E\) gives \(\xi_1+\cdots+\xi_k<2^E\), so \(s_0>s_1>\cdots>s_k>0\).
The constraints enforcing \(s_0=2^E u\) have a solution by \Cref{lem:positive-scaling}, and every atom \(u\mid s_{j-1}-s_j\) has quotient \(\xi_j\) for \(j\le k\).
For each translated atom, \(\widehat L=L(\xi_1,\ldots,\xi_k)\) and \(\widehat M=M(\xi_1,\ldots,\xi_k)\), because \(u=1\).
Therefore all translated divisibility atoms hold, and the constructed system is satisfiable.

Conversely, suppose the constructed system is satisfiable.
For \(j=1,\ldots,k\), define \(q_j=(s_{j-1}-s_j)/u\).
The atoms and the side conditions \(s_{j-1}>s_j\) imply that every \(q_j\) is a positive integer.
For every atom \(L\mid M\) of \(\mathcal J\), the translated atom gives \(u\,L(q_1,\ldots,q_k)\mid u\,M(q_1,\ldots,q_k)\).
Since \(u>0\), this is equivalent to \(L(q_1,\ldots,q_k)\mid M(q_1,\ldots,q_k)\).
Thus \(q_1,\ldots,q_k\) satisfies \(\mathcal J\).
This proves correctness of the polynomial-time many-one reduction.
\end{proof}

The construction uses the exposed quotients both to encode source values and to form the merge sequence.
Every solution recovers the source values from those quotients.

\MergeAbsHard*
\begin{proof}
Immediate from \Cref{thm:epad-complete,prop:merge-absorptive-completeness}.
\end{proof}

Finite-quotient replacement eliminates all divisibility atoms in this fragment, but the equations produced along a replacement sequence need not have polynomial-size coefficients.
The multiplier systems give an exponential example.

\subsection{A Family Forcing Exponential-Bit Coefficients}
\label{subsec:large-coefficients}

Recall that \(\Delta_j(x,r)\) is polynomial-size and \MergeAbs{}, and forces \(r=(2^{2^j}-1)x\).

\begin{lemma}[Quotient at each stage]
\label{lem:exposed-quotient}
For \(j\ge1\), consider the finite-quotient replacement sequence for \(\Delta_j\) that processes the merge atoms in the order from \Cref{lem:scaling-systems-merge-absorptive}.
At each stage \(i=j,j-1,\ldots,1\), the only satisfiable child adds the equation
\[
 y_j+3y_i=(2^{2^{j-i}}+1)(y_i+y_{i-1}).
\]
\end{lemma}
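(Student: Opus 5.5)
We argue by induction along the replacement sequence, carrying the invariant that on every satisfiable branch reaching stage \(i\), the chosen quotient equations together with the side conditions are equivalent, over positive integers, to \(y_t=(B_h-1)y_i\) and to fixed ratios among \(y_i,\ldots,y_j\), where \(h=j-i\) and \(B_h=2^{2^h}\). Here \(t=j\), matching the notation of \cref{lem:mersenne-multiplier}. At \(i=j\) the invariant is trivial: \(h=0\), so \(y_j=(B_0-1)y_j=y_j\), and no quotient equations have been chosen yet.

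For the inductive step, fix a satisfiable branch at stage \(i\). By the invariant we can write \(y_j=(B-1)y_i\) with \(B=B_h\). We then show that the quotient \(q\) of atom~\eqref{eq:stage-merge} must equal \(B+1\) on every valuation satisfying the whole system. The key input is the soundness part of \cref{lem:mersenne-multiplier}. Every valuation of the full system \(\Delta_j\) satisfies \(y_i=(B+1)y_{i-1}\), because the remaining atoms \eqref{eq:stage-a}--\eqref{eq:stage-c} and \eqref{eq:exclusion-atom} at stage \(i\) are still present in the node, and soundness applies under the hypothesis \(y_j=(B-1)y_i\). Substituting both relations gives
\[
q=\frac{y_j+3y_i}{y_i+y_{i-1}}=\frac{(B+2)(B+1)y_{i-1}}{(B+2)y_{i-1}}=B+1,
\]
which is exactly the displayed equation with \(B+1=2^{2^{j-i}}+1\).

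It remains to show that this child is actually satisfiable and that every other child is not. Satisfiability follows from the completeness part of \cref{lem:mersenne-multiplier}: the canonical solution \(y_i=(2^{2^{j-i}}+1)y_{i-1}\) satisfies every earlier chosen equation, since by induction those earlier quotients were the canonical ones, and it realizes \(q=B+1\).

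The main obstacle is the interpretation of ``satisfiable child''. A node keeps the atoms not yet replaced, so a child is satisfiable only if it has a valuation satisfying all of its constraints, including the remaining divisibilities. Under this reading, every solution of a node is a solution of \(\Delta_j\), and the full soundness argument applies. Any child with \(q\neq B+1\) then contradicts the forced value and is unsatisfiable, even though the linear side conditions alone would admit several quotients \(0<q<B+2\).

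We will state this reading explicitly, in line with the definition of an inconsistent node in \cref{sec:normalization}. We will also note that the common-zero case \(L=0\) cannot occur here, because all variables are positive. The new equation then fixes \(y_{i-1}=y_i/(B+1)\), which gives \(y_j=(B^2-1)y_{i-1}=(B_{h+1}-1)y_{i-1}\) and re-establishes the invariant at stage \(i-1\).
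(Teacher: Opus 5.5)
Your proposal is correct and follows the paper's proof. Both use the stage invariant \(y_j=(B-1)y_i\), invoke the soundness argument of \cref{lem:mersenne-multiplier} (whose stage atoms remain in the node) to force \(y_i=(B+1)y_{i-1}\), and compute the quotient \(B+1\), so every other child is inconsistent; your reading of node consistency as including the unreplaced atoms, and your appeal to completeness to show the surviving child is satisfiable, match the definitions in \cref{sec:normalization}.
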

\begin{proof}
Put \(h=j-i\) and \(B=2^{2^h}\).
The stage invariant and forced ratio are
\[
 y_j=(B-1)y_i,\qquad y_i=(B+1)y_{i-1}.
\]
Consequently,
\[
 \frac{y_j+3y_i}{y_i+y_{i-1}}
 =\frac{(B+2)(B+1)y_{i-1}}{(B+2)y_{i-1}}
 =B+1.
\]
Every child with another quotient is inconsistent, because the soundness argument of \cref{lem:mersenne-multiplier} already forces \(y_i=(B+1)y_{i-1}\), which pins the quotient to \(B+1\).
The surviving equation is
\[
 y_j+(2-2^{2^h})y_i-(2^{2^h}+1)y_{i-1}=0.
\]
The coefficient of \(y_j\) is \(1\), so gcd normalization leaves this equation unchanged.
At \(i=1\), the retained coefficient has bit length \(\Theta(2^j)\).
\end{proof}

\ReplacementBlowup*
\begin{proof}
Polynomial size follows from \Cref{lem:mersenne-multiplier}, and membership in the merge-absorptive fragment follows from \Cref{lem:scaling-systems-merge-absorptive}.
By \Cref{lem:exposed-quotient}, the specified replacement sequence produces \(2^{2^{j-1}}+1\), whose bit length is \(\Theta(2^j)\).

Put \(M_j=2^{2^j}-1\).
The projected solution set is exactly
\[
 \{(x,M_jx):x\in\Npos\}.
\]
Any affine equation defining this ray and involving only \(x,r\) is nonzero.
Substituting \((x,r)=(t,M_jt)\) for all \(t\in\Npos\) shows that it is \(b(r-M_jx)=0\) for some nonzero \(b\in\Z\).
It therefore contains a coefficient of absolute value \(|b|M_j\ge M_j\), with bit length at least \(2^j\).
\end{proof}

The growth exhibited by this family is not an artifact of the family.
It is forced by the complexity of the fragment.

\begin{corollary}[No polynomial-size normalization]
\label{cor:no-poly-normalization}
No finite-quotient elimination produces a polynomial-size surviving branch on every satisfiable \MergeAbs{} positive homogeneous divisibility system.
\end{corollary}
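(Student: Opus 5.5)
The plan is a complexity-theoretic contradiction. No explicit family is needed; this is what the remark ``forced by the complexity of the fragment'' suggests. Suppose there were a polynomial \(p\) and a finite-quotient elimination procedure with the following property: on every satisfiable \MergeAbs{} positive homogeneous divisibility system \(\mathcal S\), some surviving branch has all its nodes, and in particular all chosen quotient equations, of total size at most \(p(|\mathcal S|)\). From this I will build an \(\NP\) algorithm for satisfiability on the merge-absorptive fragment. \Cref{cor:mergeabs-hardness} would then give \(\NEXP\subseteq\NP\). This contradicts the nondeterministic time hierarchy theorem, which separates \(\NP\) from \(\NEXP\) unconditionally, so no hardness assumption is needed.

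The \(\NP\) algorithm guesses a polynomial-size surviving leaf: a list of quotient equations \(M_i=q_iL_i\) for some of the atoms of \(\mathcal S\), each written in primitive form with polynomially many bits. The first verification step uses soundness of replacement in the easy direction. Any valuation satisfying \(M_i=q_iL_i\) satisfies \(L_i\mid M_i\), including the case \(L_i=M_i=0\). Hence every positive integer solution of the leaf's equations, together with the side conditions \(\Phi\) and the atoms not yet replaced, is a solution of \(\mathcal S\). The verifier therefore never needs to check the semantic finiteness conditions of \cref{def:finite-quotient-replacement}. It only needs to decide whether the leaf conjunction has a positive integer solution.

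The second step decides that leaf conjunction in polynomial time. The homogeneous linear part \(\Gamma=\Phi\wedge\bigwedge_i(M_i=q_iL_i)\) has polynomial bit size. Rational linear programming decides its feasibility over positive rationals, and homogeneity turns a rational solution into a positive integer solution by scaling. The remaining atoms are handled as in the last paragraph of the proof of \cref{prop:merge-absorptive-recognition}. The verifier checks by linear programming that \(\Gamma\) fixes all ratios, so every solution is \(t\vec v\) for one rational vector \(\vec v\) of polynomial bit size that it computes. Each remaining atom becomes \(\lambda t\mid\mu t\), a constant test \(\lambda\mid\mu\) after clearing denominators, which is independent of the scale \(t\). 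If the guessed leaf does not fix all ratios, the verifier rejects. Completeness still holds because a fully merged leaf exists: one can require the guessed branch to follow the prescribed merge sequence of \cref{def:merge-absorptive}. The hypothesis needs to supply a polynomial-size branch of that kind, which I read as included in ``complete finite-quotient elimination'' for this fragment. Alternatively, and more robustly, I would replace the fixed-ratio check by a direct \(\NP\) check. This uses the fact that an affine divisibility system with polynomially bounded coefficients, once all ratios are pinned, reduces to constant tests.

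The main obstacle is making this completeness direction airtight: the assumed polynomial-size branch must end in a node that can be checked in polynomial time, rather than one where exponential-size divisibility reasoning remains. I would first pin down the reading of ``surviving branch'' as a root-to-leaf branch whose leaf is consistent and divisibility-free, possibly after constant tests as in \cref{prop:merge-absorptive-recognition}. Under that reading the leaf is a polynomial-size homogeneous linear system plus constant tests, and everything above goes through. Finally, composing with the polynomial-time reduction of \cref{prop:merge-absorptive-completeness} and \cref{thm:epad-complete} places every \(\NEXP\) language in \(\NP\), which is the desired contradiction.
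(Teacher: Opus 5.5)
Your proposal is correct and follows essentially the paper's argument. Both assume polynomial-size surviving branches, let an \(\NP\) machine guess the quotient equations, and use that \(M=qL\) entails \(L\mid M\) for soundness. Both reject guesses that are inconsistent or do not fix all variable ratios, reduce the remaining atoms to constant tests, and conclude \(\NEXP\subseteq\NP\) via \cref{prop:merge-absorptive-completeness} and \cref{thm:epad-complete}, contradicting the nondeterministic time hierarchy. The paper settles your completeness worry the way your main reading does: the machine guesses exactly the \(r-1\) quotient equations of the prescribed merge sequence, so your fallback ``direct \(\NP\) check'' is unnecessary. That fallback would also not work without pinned ratios, since general polynomial-size divisibility systems are \(\NEXP\)-hard.
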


\begin{proof}
Suppose one did.
Apply the polynomial-time reduction of \cref{prop:merge-absorptive-completeness} to an arbitrary EPAD instance.
Its output is merge-absorptive.
By \cref{def:merge-absorptive}, a system with \(r\) variables has \(r-1\) prescribed merge atoms, so a complete choice of quotients gives \(r-1\) quotient equations.
A nondeterministic machine guesses these equations.
An equation \(M=qL\) entails \(L\mid M\), so every guess is sound, while a satisfying assignment supplies a consistent choice of the actual quotients and therefore is not lost.
Reject a guess whose linear constraints are inconsistent or do not fix all variable ratios.
Otherwise every variable is a rational multiple of one positive parameter.
Every remaining divisibility atom then reduces to a constant test, as in \cref{prop:merge-absorptive-recognition}, and integer feasibility of the resulting polynomial-size linear system is decidable in \(\NP\).

If every satisfiable merge-absorptive system admitted a polynomial-size such normalization, this would give an \(\NP\) procedure for unrestricted EPAD satisfiability.
By \cref{thm:epad-complete}, that would imply \(\NEXP\subseteq\NP\), contradicting the nondeterministic time hierarchy theorem.
\end{proof}

\Cref{thm:replacement-blowup} is the explicit form of this obstruction.
It names one polynomial-size family and one replacement sequence, and its argument is arithmetic rather than complexity-theoretic.
It fixes the bit length at \(\Theta(2^j)\) and proves the same lower bound for every affine equation over the projected variables.

\clearpage

\subsection*{AI-usage Disclosure}
We used generative AI tools to polish the writing, help find counterexamples, and explore proof ideas.
We independently verified all mathematical arguments, examples, and references and take full responsibility for the contents of the paper.

\bibliography{refs}

\clearpage

\appendix

\section{Proof of the Word-Equation Corollary}
\label{sec:word-equations}

Day and Konefal show that every EPAD-definable relation can be built from length abstractions of word equations~\cite{DayKonefal2025}.
Thus their result and \cref{thm:epad-complete} give the word-equation lower bound.
For completeness, we verify here that the transfer can be written with nonerasing variables over the fixed alphabet \(\{a,b\}\) and with positive outer Boolean structure.

Target word variables denote nonempty words.
Length constraints may be arbitrary quantifier-free Presburger formulas, while the outer formula uses only conjunction and disjunction.
We represent a nonnegative integer \(v\) by a word of length \(v+1\), so arithmetic zero also has a nonempty representative.
Presburger conditions then become length constraints directly.

The remaining ingredient is the commutation theorem for words.
A nonempty word is \emph{primitive} if it is not a proper power.
If \(XY=YX\), then \(X=P^m\) and \(Y=P^n\) for a word \(P\) and integers \(m,n\ge0\).
Hence every nonempty word commuting with a primitive word \(X\) lies in \(X^+\)~\cite[Section~1.3]{Lothaire1997}.
The proof below uses a primitive word over \(\{a,b\}\) to realize divisibility of positive lengths and adds one case for a zero dividend.

\begin{proposition}[Nonerasing two-letter specialization]
\label{prop:epad-to-nonerasing-we}
EPAD formula satisfiability many-one reduces in polynomial time to nonerasing satisfiability for positive Boolean combinations of word equations and quantifier-free Presburger length constraints over the alphabet \(\{a,b\}\).
\end{proposition}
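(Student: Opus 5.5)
We first apply \cref{prop:epad-affdivz} to turn the given EPAD formula into an equisatisfiable affine divisibility system over integer variables. Then we apply the usual sign splitting, \(x=x^+-x^-\) with \(x^+,x^-\ge0\), so that every variable ranges over \(\N\). Each affine form can be rewritten as a difference of two forms with nonnegative coefficients. Hence every divisibility atom becomes \(L\mid M\) for integer affine terms. The translation then introduces one fresh integer variable \(\ell\) with a Presburger side condition \(\ell=|L|\), expressed by a disjunction over the sign. It likewise introduces \(\mu=|M|\). Divisibility is unchanged under sign, so it remains to encode \(\ell\mid\mu\) with \(\ell,\mu\ge0\). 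For every arithmetic variable \(v\) we introduce a word variable \(V\) and impose the length constraint \(|V|=v+1\). All linear equalities and inequalities of the system then become quantifier-free Presburger constraints on lengths. The conjunctive structure becomes conjunction, and any case splits become disjunctions, so the outer formula stays positive.

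Next we encode \(\ell\mid\mu\). Fix the primitive word \(ab\). Use a nonempty auxiliary variable \(Z\) to stand for the nonnegative integer \(\ell\), via a word \(X\) with \(X=(ab)\cdot\)-something. The cleanest choice is to introduce nonempty variables \(X,Y,T\) and three disjuncts.
\begin{enumerate}
\item \(\mu=0\), which holds unconditionally since \(\ell\mid0\) always.
\item \(\ell=0\) and \(\mu=0\); this case is subsumed by the first.
\item \(\ell,\mu>0\). In this case impose the word equations \(X\,ab=ab\,X\) and \(XY=YX\), together with the length constraints \(|X|=2\ell\) and \(|Y|=2\mu\).
\end{enumerate}
In the third disjunct, the commutation theorem says that \(X\) commutes with the primitive word \(ab\), so \(X\in(ab)^+\). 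A nonempty \(Y\) commuting with \(X\) is a power of the primitive root of \(X\), which is \(ab\). Hence \(Y\in(ab)^+\) as well, and this alone gives only \(|Y|\) even, which is too weak.

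To get the full divisibility we instead use a primitive word whose length is tied to \(\ell\). Take \(X=a^{\ell}b\), which is primitive because it contains a single \(b\), preceded by \(\ell\) copies of \(a\). We force this shape with \(X=Pb\), \(Pa=aP\), and \(|P|=\ell\); note that \(|P|=\ell\ge1\) is nonempty since \(\ell>0\). Then impose \(YX=XY\) and \(|Y|=\tfrac{\mu}{\ell}\)-free: precisely, \(|Y|(\,)\)… Here the right constraint is \(|Y|\cdot\) unknown, so instead we let \(Y\) commute with \(X\) and demand \(|Y|=\mu+\mu/\ell\)-free form. We avoid that quotient by encoding the dividend as the number of \(a\)'s. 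Set \(Y=QR\) with \(Qa=aQ\) and \(R\) free, and require \(XY=YX\). Then \(Y=X^k\), and we require \(|Y|-|Y|_b\ldots\). Counting letters is not a length constraint, though. So use the lengths \(|X|=\ell+1\) and \(|Y|=\mu+c\), where \(c=|Y|/(\ell+1)\) is the number of blocks. Introducing a nonempty variable \(C\) with \(Y=\)-blocks is awkward.

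The final choice is therefore \(X=Pb\) with \(|P|=\ell\), together with \(Y=X^k\) forced by commutation, and a separate word \(Y'=P^k\)-like. This is where the main difficulty lies: turning a length multiple of \(\ell+1\) into a multiple of \(\ell\). We resolve it by working with \(\ell+1\) throughout. Replace the atom \(\ell\mid\mu\) by the equivalent \(\ell\mid\mu\) with divisor presented as \(d=\ell\) directly: take \(X=P'b\) with \(|P'|=\ell-1\). When \(\ell=1\), \(P'\) would be empty, so that case gets its own disjunct, which is trivially true. Then \(|X|=\ell\), \(X\) is primitive, commutation gives \(Y=X^k\), and the length constraint \(|Y|=\mu\) forces \(\ell\mid\mu\). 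Conversely, \(Y=X^{\mu/\ell}\) is a nonempty witness when \(\mu>0\).

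Finally we check correctness and size. Each atom contributes a constant number of word equations over \(\{a,b\}\) and polynomially many length constraints. Every word variable is nonempty in each witness, with unused disjunct variables set to \(a\). Equisatisfiability follows from the case analysis above. The step we expect to need the most care is guaranteeing that the primitive words \(a^{\ell-1}b\) are defined without erasing variables. This is why the boundary cases \(\ell=1\) and \(\mu=0\) are split off as separate positive disjuncts.
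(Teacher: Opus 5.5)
Your proof is correct. Its core gadget is the paper's: commutation with the letter \(a\) forces the primitive word \(a^{\ell-1}b\), of length exactly \(\ell\), and commutation with that word then forces the dividend's length to be a multiple of \(\ell\).

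There is one small difference in the gadget. The paper takes \(\Div_+(X,Y)\) to be \(XY=YX\wedge aX=Zb\wedge Za=aZ\). Since \(Z\in a^+\) may be \(a\) itself, this already yields \(X=b\), so only the zero-dividend case needs its own disjunct. Your nonempty \(P'\) forces you to split off \(\ell=1\) as well. You do this correctly, and \(\ell=0<\mu\) is rejected because \(|P'|=\ell-1\) is then unsatisfiable.

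The genuine difference is the preprocessing. You first flatten the input with \cref{prop:epad-affdivz}. That proposition performs essentially the paper's negated-divisibility rewrite. It then goes further and eliminates the positive Boolean structure, using the Lechner--Ouaknine--Worrell small-solution bound (\cref{thm:effective-small-solution}) and the quotient-sharing product gadget. The paper stops after pushing negations to atoms and rewriting each \(\neg(A\mid B)\) over \(\N\) by Euclidean division as \((A=0\wedge B>0)\lor\exists Q,R\,(B=Q+R\wedge A\mid Q\wedge 0<R<A)\). It keeps the positive Boolean structure, which the target problem permits anyway. Your route is legitimate, since \cref{prop:epad-affdivz} is established independently of this appendix. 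However, it imports a deep theorem where an elementary local rewrite suffices, and the paper's argument is self-contained.

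For the write-up, delete the abandoned attempts and garbled fragments in the middle of your proposal. State the final gadget once, reading \(\ell,\mu\) as \(\val\) of their representing words:
\[
 (\mu=0)\ \vee\ (\ell=1)\ \vee\ \bigl(X=P'b\wedge P'a=aP'\wedge |P'|=\ell-1\wedge XY=YX\wedge |Y|=\mu\bigr).
\]
Make the commutation \(P'a=aP'\) explicit, since your final paragraph only implies it.
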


\begin{proof}
We first put the EPAD input into the form used by the reduction.
Replace every integer variable \(x\) by \(x^+-x^-\), where \(x^+\) and \(x^-\) range over \(\N\).
This is equisatisfiable and keeps all terms affine.
Push negations to atoms.
For each divisibility literal involving affine terms \(L,M\), introduce nonnegative variables \(A=|L|\) and \(B=|M|\).
The absolute-value constraints are Presburger, and \(L\mid M\) over \(\Z\) is equivalent to \(A\mid B\).
For a negated atom, use the following nonnegative specialization of the standard rewrite~\cite[Section~IV.A]{LechnerOuaknineWorrell2015}:
\[
 \neg(A\mid B)
 \ \Longleftrightarrow\
 (A=0\wedge B>0)\ \lor\
 \exists Q,R\in\N\,
 \bigl(B=Q+R\wedge A\mid Q\wedge 0<R<A\bigr).
\]
The equivalence is Euclidean division when \(A>0\), together with the convention \(0\mid0\).
It introduces only unnegated divisibility.
Thus the result is an equisatisfiable positive Boolean combination of quantifier-free Presburger conditions and unnegated divisibility atoms over \(\N\).

We represent a nonnegative arithmetic value \(v\) by a nonempty word of length \(v+1\).
For a word variable \(X\), write \(\val(X)=|X|-1\).
Each arithmetic variable \(x\) receives a word variable \(X\), and Presburger conditions are translated by replacing \(x\) with \(\val(X)\).
This is still a quantifier-free Presburger length constraint.
The subtraction in \(\val(X)\) is only notation.
It can be removed by moving constants to the other side.

It remains to translate unnegated divisibility atoms.
First consider positive lengths.
For word variables \(X,Y\), define
\[
  \Div_+(X,Y) \coloneqq
  XY=YX\ \wedge\ aX=Zb\ \wedge\ Za=aZ,
\]
where \(Z\) is a fresh auxiliary word variable.
For nonempty word values, \(\Div_+(X,Y)\) enforces exactly the relation \(|X|\mid |Y|\) on positive lengths.
The equation \(Za=aZ\) forces \(Z=a^j\) with \(j\ge1\), and then \(aX=Zb\) forces \(X=a^{j-1}b\).
This word is primitive, so \(XY=YX\) forces \(Y\) to be a positive power of \(X\).
Conversely, for positive integers \(x,y\) with \(x\mid y\), use \(Z=a^x\), \(X=a^{x-1}b\), and \(Y=X^{y/x}\).

To handle zero, define the following formula for shifted lengths:
\[
  \Div_0(X,Y) \coloneqq
  |Y|=1\ \vee\
  \bigl(|U|+1=|X|\wedge |V|+1=|Y|\wedge \Div_+(U,V)\bigr),
\]
with fresh auxiliary word variables \(U,V\) and the auxiliary variable used inside \(\Div_+\).
For nonempty \(X,Y\), this formula satisfies \(\Div_0(X,Y)\) if and only if \(\val(X)\mid\val(Y)\).
If \(\val(Y)=0\), the first disjunct applies, expressing that every nonnegative value divides zero, including the case \(0\mid0\).
If \(\val(Y)>0\) and \(\val(X)=0\), the first disjunct is false and the second would require the nonempty word \(U\) to have length zero.
Thus divisibility by zero is rejected when the dividend is positive.
If both shifted values are positive, the second disjunct gives words \(U,V\) of lengths \(\val(X),\val(Y)\), and \(\Div_+(U,V)\) is equivalent to positive divisibility.

For each unnegated divisibility atom \(\alpha=L\mid M\), introduce fresh word variables \(A_\alpha,B_\alpha\), add the length constraints \(\val(A_\alpha)=L\) and \(\val(B_\alpha)=M\), and conjoin \(\Div_0(A_\alpha,B_\alpha)\).
All auxiliary variables are interpreted existentially by the target satisfiability problem.

The construction is equisatisfiable.
From an arithmetic satisfying assignment, choose nonempty words of the prescribed shifted lengths.
The divisibility formulas are satisfied by the previous paragraphs.
Conversely, a nonerasing solution assigns each arithmetic variable \(x\) the value \(|X|-1\).
The translated Presburger constraints have the same truth values as their arithmetic originals, and each \(\Div_0\) formula enforces the corresponding divisibility atom.
The initial handling of signs and negated divisibility was logically equivalent, so the recovered arithmetic assignment satisfies the original EPAD formula.

The construction is polynomial.
Each transformation is local, each affine term is encoded by a linear-size length constraint, and each divisibility atom contributes only constantly many word-equation atoms and length constraints, apart from the size of the terms it mentions.
The only terminal letters used are \(a\) and \(b\).
\end{proof}

This direct specialization establishes the syntactic restrictions in \cref{cor:downstream-hardness}.
The underlying transfer through length abstractions is due to Day and Konefal~\cite{DayKonefal2025}.

\begin{proof}[Proof of the word-equation part of
\Cref{cor:downstream-hardness}]
Use \cref{thm:epad-complete}, then apply \cref{prop:epad-to-nonerasing-we}.
\end{proof}

\end{document}